\definecolor{refkey}{gray}{.8}   %1=white
\definecolor{labelkey}{gray}{.4} %0=black
\def\emph#1{{\bfseries\itshape #1}}
\renewcommand{\div}{\mathop{\mathrm{div}}}
\newcommand{\curl}{\mathop{\mathrm{curl}}}
\newcommand{\trace}{\mathop{\mathrm{Tr}}\nolimits}
\def\rme{\ee} % Euler's number \exp(1)
\newcommand{\diag}{\mathop{\mathrm{diag}}\nolimits}
\newcommand{\ma}{\mathrm{ma}}
\newcommand{\qs}{\mathrm{qs}}
\newcommand{\RHObeta}{\wh\rho_\beta}
\renewcommand{\hat}{\wh}
\renewcommand{\tilde}{\wt}
\def\HH{{\bm{\mathfrak h}}}
\def\SP#1#2{\langle #1 | #2\rangle}
\newcommand{\dyad}[2]{ | #1 \rangle \;\! \langle #2 |}
\def\SPP#1#2%{\big\langle} #1 \;\!\big| #2 \big\rangle}
\langle\hspace{-0.3em}\bm\langle \;\!#1 \;\! \bm|\!\bm|\;\! 
\rangle\hspace{-0.3em}\bm\rangle}
\def\KB{k_\rmB}
\newcommand{\eq}{\mathrm{eq}}
\def\epsilon{\eps}
\newcommand{\ADJ}{\calT}% _{\hat{\sigma}}}
\newcommand{\block}[4]{\bma{@{\,}c@{\;\;}c@{\,}} #1&#2\\[-0.15em] #3&#4\ema}
\newcommand{\bigSet}[2]{\big\{\, #1 \,\big|\, #2 \,\big\} }
\def\oti{{\otimes}}
\newcommand{\RRR}{\mathfrak R}
\newcommand{\CCC}{\mathcal C}
\newcommand{\DDD}{\mathcal D}
\newcommand{\MM}[2]{\mathcal{M}_{ #2 }^{ #1 }}
\newcommand{\MMM}{\mathcal M}
\newcommand{\KKK}{\mathcal K}
\newcommand{\KK}[2]{\mathcal{K}_{ #2 }^{ #1 }}
\newcommand{\SSS}{\mathcal  S}
\newcommand{\LLL}{\mathcal L}
\newcommand{\Herm}[1]{\mathrm{Herm} ( #1 )}
\newcommand{\YYY}{\mathcal Y}
\numberwithin{equation}{section}
\begin{document} 

\title{An entropic gradient structure for\\ Lindblad equations and \\
       couplings of quantum systems\\  to macroscopic models%
\thanks{Original title:  An entropic gradient structure for Lindblad
  equations and GENERIC for quantum systems
coupled to macroscopic models}
\thanks{The research of M.M. was supported by ERC via AdG 267802
  AnaMultiScale, and A.M. was partially supported 
  by DFG via SFB 787 Nanophotonics (subproject B4).}}
\author{Markus Mittnenzweig$^1$ and Alexander Mielke$^{1,2}$\\
{\normalsize $^1$ Weierstraß-Institut für Angewandte Analysis und Stochastik,
Berlin }\\[-0.3em]
{\normalsize $^{2}$ Institut f\"ur Mathematik, Humboldt-Universit\"at zu
Berlin\qquad\qquad\ \ \mbox{}} }

\date{October 13, 2016; revised February 12, 2017}
\maketitle

\section{Introduction}
\label{s:Intro}

In many situations the evolution of quantum systems is dictated not
only by the system Hamiltonian but also by dissipative effects, i.e.\
the time-dependent density matrix $\rho(t)$ satisfies a
dissipative evolution equation of the form
\begin{equation}
  \label{eq:I.LE}
  \dot \rho = \frac{\ii}{\hbar}\,\big[\rho, H \big]\ + \LLL \rho,
\end{equation}
where we will set $\hbar=1$ in the sequel. The dissipative part $\LLL$
of this Lindblad equation has to be the generator of semigroup
consisting of completely positive operators, which is enforced by
the structure of quantum mechanics. In this work we additionally ask
for the \emph{condition of detailed balance} with respect to the
equilibrium state $\RHObeta$ in the sense of Alicki \cite{Alic76DBCN}
and \cite{KFGV77QDBK}, by which we mean that $ \calL \RHObeta=0$ and
that $\calL^*$ is symmetric with respect to the weighted operator
scalar product $(A,B) \mapsto \trace(A^* B\RHObeta)$, see
\eqref{eq:DBC}. We call such $\calL$ shortly DBC Lindbladians with
respect to $\RHObeta$.  Equations like \eqref{eq:I.LE} where already
derived in the seminal work \cite{Davi74MME} as the weak
coupling limit of a given quantum system with a large heat bath. It
was observed in \cite{Spoh78EPQD} that this class of models satisfies
the detailed-balance condition (DBC) and that the relative entropy
(also called free energy)
\[
\calF(\rho)= \trace\big( \rho\,(\log\rho -\log\RHObeta) \big)
= \trace\big( \rho\,(\log\rho + \beta H ) \big) + \log Z_\beta
\]
is a Liapunov function, i.e.\ it decays along solutions. Here
$\beta>0$ is a suitable inverse temperature and 
\[
\RHObeta = \frac1{Z_\beta}\:\ee^{-\beta H} \quad \text{with }Z_\beta=
\trace\big( \ee^{-\beta H}\big)
\]
is the thermal equilibrium. 

The aim of this work is to show that \eqref{eq:I.LE} can be written as
a damped Hamiltonian system, namely 
\begin{equation}
  \label{eq:I.DHS}
  \dot\rho = \big( \bbJ(\rho)- \bbK(\rho) \big) \rmD\calF(\rho),
\end{equation}
where the operator $\bbJ(\rho):\xi \mapsto
\frac{\ii}{\beta}[\rho,\xi]$ generates a Poisson bracket, while the
operator $\bbK(\rho)$ should be purely dissipative, i.e.\
$\bbK(\rho)=\bbK(\rho)^*\geq 0$.  We will call such dissipative
operators simply \emph{Onsager operators}, because of Onsager's
fundamental work in \cite{Onsa31RRIP}. We continue to use $\rmD$ for
the differential of functionals, i.e.\ $\langle
\rmD\calF(\rho),v\rangle :=\lim_{h\to 0} \frac1h\big(\calF(\rho{+}h v)
- \calF(\rho)\big)$.

Thus, our aim is the construction of an Onsager operator $\bbK$ which
generalizes the Wasserstein operator $\bbK_\text{Wass}(u):\mu \mapsto
-\div\big(\rho \nabla \mu\big)$ for the Fokker-Planck equation and the  
Markov operator $\bbK_\text{Mv}(p)$ for jump processes constructed in
\cite{Miel11GSRD,Maas11GFEF,ErbMaa12RCFM,Miel13GCRE}, cf.\ Section
\ref{su:ClassMP}.  The crucial
point is that $\bbK(\rho)$ has to depend on $\rho$ in a very specific
way to obtain the relation
\[
- \bbK(\rho) \big(\log \rho + \beta H\big) = \LLL \rho,
\]
where the right-hand side is linear in $\rho$. In the Fokker-Plank
equation this is achieved by the chain rule $u \nabla\big( \log u +
V)= \nabla u + u \nabla V$, and for jump processes it follows from 
$\Lambda(a,b)(\log a - \log b)=a-b$,  see Section \ref{su:ClassMP}. 

For quantum systems first steps in this direction were done in 
\cite{Otti10NTQM,Otti11GTDQ,Miel13DQMU,CarMaa14A2WM,Miel15TCQM}. They
involve the use of the Kubo-Mori operator 
\[
\CCC_\rho:L(\HH)\to L(\HH);\quad  A\mapsto \CCC_\rho A:= \int_0^1 \rho^s
\,A\, \rho^{1-s} \dd s, 
\]
which satisfies for all $Q\in L(\HH)$ the fundamental relation
\begin{equation}
  \label{eq:I.Mira}
  \CCC_\rho\,\big[Q,\log \rho\big]  \;= \;\big[ Q,\rho\big],
\end{equation}
which we will call the miracle relation. It goes back to
\cite[Eqn.\,(2.17), p.\,139]{Kubo59ASMT} and was put into a more
general context in \cite{Wilc67EOPD}, see (2.5) there. Obviously, we
need a generalization allowing for commutators of the form
$[Q,\log\rho + \beta H]$, which is nontrivial if $\beta H\neq \alpha
\bm1_\HH$. In \cite{CarMaa14A2WM} the infinite-temperature case $\beta
H=0$ is treated, while in
\cite{Otti10NTQM,Otti11GTDQ,Miel13DQMU,Miel15TCQM} the nonlinear terms
$\rho\mapsto \CCC_\rho[Q,H]$ are admitted.

Here, we show that in the general case with DBC it is
possible to find a suitable $\bbK$ in a rather natural way. The
starting point is a tensor-product representation of Lindblad
operators. We set $\HH_1=\HH$ and choose an arbitrary second Hilbert
space $\HH_2$ and assume that $\HH_1$ and $\HH_2$ are
finite-dimensional. For an arbitrary Hermitian $\bbQ \in
L(\HH_1\oti\HH_2)$ and a $\wh\sigma\in L(\HH_2)$ with
$\wh\sigma=\wh\sigma^*>0$ one sees that 
\begin{equation}
  \label{eq:I.LSigma}
  \LLL \rho= - \trace_{\HH_2}\Big( \big[ \bbQ, [\bbQ, \rho\oti
\wh\sigma]\big] \Big)
\end{equation}
is indeed a Lindblad operator.  Moreover, it can be shown easily that
this $\LLL$ satisfies the DBC with respect to $\RHObeta$ if the
commutation relation
\[
\big[\, \bbQ \,,\, \RHObeta \otimes \wh\sigma\,\big] \ = \ 0
\]   
holds, see Section \ref{su:Lind.CompForm}. Under this condition it is
now straightforward to show the following generalization of the
miracle identity:
\[
\CCC_{\rho\oti\wh\sigma} \big[\, \bbQ\, , \, (\log\rho {+} \beta H) 
\otimes \bm1_{\HH_2}\,\big] = \big[ \, \bbQ\, , \,\rho\otimes \wh\sigma\,\big].
\]
Indeed, it suffices to use the fact that $\bbQ$ also commutes with $
\log(\RHObeta\oti \wh\sigma)=-\beta H\oti \bm1_{\HH_2} +
\bm1_{\HH_1}\oti \log\wh\sigma$ and then apply the classical miracle
identity \eqref{eq:I.Mira}, see Theorem \ref{th:MiracleIdentity} for
the details. Now we can define the Onsager operator 
\begin{equation}
  \label{eq:I.bbK}
  \bbK(\rho) \xi = \trace_{\HH_2}\Big( \big[ \bbQ,
\CCC_{\rho\oti\wh\sigma} [\bbQ, \xi\oti \bm1_{\HH_2}] \Big),
\end{equation}
which is a symmetric and positive semidefinite operator and  satisfies
the desired relation
\begin{equation}
\label{eq:I.MiraExtended}
-\bbK(\rho)\big( \log \rho + \beta H\big) =
- \trace_{\HH_2}\Big( \big[ \bbQ, [\bbQ, \rho\oti
\wh\sigma]\big] \Big)= \LLL \rho,
\end{equation}
which provides the entropic gradient structure for the
dissipative terms. Thus,there is a full analogy to classical Markov
processes, for which it was shown in \cite{MiPeRe14RGFL} that the DBC
implies the existence of an entropic gradient structure.

In Theorem \ref{th:CompRepDBC} we show that every DBC Lindbladian with
respect to $\RHObeta$ can be written in the form \eqref{eq:I.LSigma}
with $\HH_2=\HH=\HH_1$ and either $\wh\sigma = \RHObeta$ or
$\wh\sigma=\RHObeta^{-1}$. Here we rely on the classical
characterization of DBC Lindblad operators in
\cite{Alic76DBCN,KFGV77QDBK}.

In Sections \ref{s:GENERIC} and  \ref{se:ExaAppl} we consider a few
applications and discuss 
the general problem of modeling the interaction of a macroscopic
system described by a state variable $z\in Z$. We show that it is
possible to set up a coupled system in the framework of GENERIC, which
is an acronym for ``General Equations for Non-Equilibrium Reversible
and Irreversible Coupling''. This framework is based on an energy
functional $\calE$, an entropy functional $\calS$, a Poisson operator
$\bbJ$, and an Onsager operator $\bbK$ such that the evolution is
\[
\binom{\dot\rho}{\dot z} = \bbJ_\text{coupl}(\rho,z) 
\binom{\rmD_\rho \calE(\rho,z)}{\rmD_z \calE(\rho,z)}
+ \bbK_\text{coupl}(\rho,z)\binom{\rmD_\rho \calS(\rho,z)}{\rmD_z \calS(\rho,z)}.
\]
This is complemented by the fundamental \emph{non-interaction
  conditions} $\bbJ\rmD\calS \equiv 0  \equiv \bbK\rmD\calE$ that
defines a thermodynamically consistent system with energy conservation
and entropy production. The typical choice for the functionals is
\[
\calE(\rho,z)=\trace(\rho H) + E(z) \quad \text{and} \quad 
\calS(\rho,z)=-\KB \trace\big(\rho \log \rho\big)+ S(z).
\]

To describe the coupling of the quantum system
with the variable $z$ it is essential to model the different
dissipation mechanisms separately, which we do by the minimal
building blocks $\SSS_{W}$ and $\MM{\beta}{Q}$ for Lindblad
operators, where $Q$ must satisfy $[Q,H]=\omega Q$ for some $\omega
\in \bbR$.  The associated Onsager operators $\KK{\beta}{Q}$ can then be
obtained from the construction \eqref{eq:I.bbK} by choosing
\[
\bbQ= Q^* \otimes \block0100+ Q\otimes \block0010 \quad \text{and}
\quad  \wh\sigma = \block{\ee^{\beta \omega/2}}{0}{0}{\ee^{-\beta \omega/2}}. 
\]
With the above choice for $\calE$ we have $\rmD_\rho \calE(\rho,z)=H$,
which forces us to use fixed eigenpairs $(\omega,Q)$. However, we may
assume that the effective coupling temperature may depend on $z$ and
may differ for different coupling mechanisms. Hence, a typical Onsager
operator for the coupling between a quantum system for $\rho$ and a
classical system for $z\in Z$ may have the form
\[
\bbK_\text{coupl}(\rho,z) = \sum_{m=1}^M
\bma{@{}cc@{}}\KK{\wt\beta_m(z)}{Q_m}(\rho) & 
{\langle \Box,b_m(z)\rangle_Z \KK{\wt\beta_m(z)}{Q_m}(\rho)H}\\[0.4em]
{\SPP{\KK{\wt\beta_m(z)}{Q_m}(\rho)\Box}{H}b_m(z)}&
{\SPP{\KK{\wt\beta_m(z)}{Q_m}(\rho) H}{H}b_m(z)\oti b_m(z)} \ema,
\]
where $\langle \cdot,\cdot\rangle_Z$ denotes the dual pairing between
$Z^*$ and $Z$, while $\SPP{A}{B}$ denotes the (Hilbert-Schmidt) scalar
product  $\trace(A^*B)$ for operators, see Section \ref{su:Setup}. 
For an application to the thermodynamically consistent modeling of the
Maxwell-Bloch system as considered in \cite{JoMeRa00TNGO,Duma05GEMB}
we refer to Section \ref{su:MaxwellB}, where the macroscopical
variable $z=(\bfE,\bfH)$ contains the electric and the magnetic
field. For more applications, 
also involving the coupling to drift-diffusion equations we refer to
\cite[Sec.\,5]{Miel15TCQM}.  

For the sake of notational simplicity and to avoid technical
complications, we have restricted ourselves to the finite-dimensional
case $ \mathrm{dim} \, \HH < \infty$. Many results have an immediate
generalization to the infinite-dimensional case,
e.g.\ the relations  \eqref{eq:I.Mira} to
\eqref{eq:I.MiraExtended}. However, the full 
characterization of all DBC Lindbladians, which
is given in Theorem \ref{th:CompRepDBC} or Proposition
\ref{pr:BuildBlock}, would require much more delicate considerations
and thus remains an open question. 

Note added in proof: After this work was finished the authors became
aware of the parallel and totally independent  work
\cite{CarMaa16?GFEI}, which has some overlap concerning the
construction of an entropic gradient structure for Lindblad equations
with detailed balance.

\section{Dissipative quantum mechanics}
\label{s:CoupledModel}

\subsection{General notations and setup}
\label{su:Setup}

Here we recall the standard theory and introduce our notation.  The
quantum mechanical system is described by states in a complex Hilbert
space $\HH$ with scalar product $\SP a b$. For a Hamiltonian operator
 $H \in \Herm{\HH}$ (the set of 
Hermitian operators on $\HH$) the associated Hamiltonian dynamics is
given via the Schr\"odinger equation $   \dot \psi = -\ii H \psi$, 
which has the solution $\psi(t)=\ee^{-\ii t H} \psi(0)$. 

To couple a quantum system to a macroscopic one we need to describe it
in statistical terms  using the density matrices
\[
\rho \in \RRR_N := \set{ \rho \in L ( \HH )  }{ \rho=\rho^* \geq 0,\
  \trace \, \rho =1 }. 
\]
Each $\rho \in \RRR_N$ has the representation 
\begin{equation}
  \label{eq:RepresRho}
  \rho = \sum_{j=1}^N r_j \,\dyad{\psi_j}{\psi_j}, 
\end{equation}
where $r_j\geq 0$, $\sum_1^N r_j=1$, and $\set{ \psi_j }{j=1,...,N}$
is an orthonormal set. Note that in our notation $(\dyad\psi\phi)\,
a := \SP{\phi}{a} \psi$ and $(\dyad\psi\phi) A=\dyad\psi{A^*\phi}$.  

On operators we define the Hilbert-Schmidt scalar product $\SPP{A}{B}
= \trace (A^* B) $ satisfying the following identities, which will be
used below without further notice:
\begin{align*}
&\SPP AB =  \SPP{B^*}{A^*} = \ol{\SPP BA}, \qquad\ \quad \SPP{\lambda A}{\mu
  B}= \ol\lambda \mu \SPP AB,\\
&\SPP A{BC}= \SPP{AC^*}B = \SPP{B^*A}C, 
 \quad %\\ %
\SPP A {[B,C]} = \SPP {B^*}{[C,A^*]} =\SPP{C^*}{[A^*,B]} ,
\end{align*}
where $\lambda,\mu\in \C$ and $A,B,C\in L (\HH )$.

\subsection{The Lindblad equations} 
\label{su:QuantumMech}

Using the Schr\"odinger equation the evolution of $\rho$ is given via
the  Liouville-von Neumann equation
\begin{equation}
  \label{eq:RhoEvol}
\dot \rho = -\ii [ H, \rho ], \quad \text{where } [\rho, H ]:=  \rho
H{-} H\rho . 
\end{equation}

For open systems, dissipative versions of the Hamiltonian
Liouville--von Neumann equation are used. The most general linear
master equation preserving complete positivity is the well-known Lindblad
equation
\begin{equation}
  \label{eq:LindEvol}
\dot \rho = -\ii [ H, \rho ] +\LLL\rho \quad \text{with }\LLL A=
\sum_{n,m=1}^{N^2-1} a_{n,m} \big( [Q_n, A Q_m^*]+[Q_n A , Q_m^*]\big) ,
\end{equation}
where $Q_n$ are arbitrary operators in
$L(\HH):=\mathrm{Lin}(\HH,\HH)$, and $(a_{n,m})$ is a Hermitian
positive semi-definite matrix. Note that $\LLL$ in the Lindblad
equation is evaluated only on $\rho\in \RRR$, while we prefer to
define $\LLL$ as an operator mapping from all of $L(\HH)$ into
$L(\HH)$. It is easily seen that every $\LLL$ is a $*$-operator, i.e.\
it satisfies $\LLL(A^*)=(\cal L A)^*$.  The set of all Lindblad
operators forms a cone of real dimension $ ( N^2 {-} 1 ) ^2$ in the
set of linear operators from $L(\HH)$ into itself. Characterizing the
steady states and the dynamics of a general Lindblad operator remains
a field of ongoing research \cite{BaNaTh08AQSG1,BauNar08AQSG2,BauNar12SSSQ}.
\bigskip

In this work we are mainly interested in Lindblad operators satisfying
the detailed balance condition (DBC), shortly called DBC
Lindbladians. We follow the definition in
\cite{Alic76DBCN,KFGV77QDBK} and refer to \cite{AJPP06TNEQ} for a
discussion of other versions of the DBC. Let $\LLL^\ast$
denote the adjoint of $\LLL$ defined via $\SPP{\LLL ^\ast A}{B} = \SPP
{A}{\LLL B}$. The \emph{condition of detailed balance} with respect to
the equilibrium state $\RHObeta$ is defined via the relation
\begin{equation}
\label{eq:DBC}
\text{(DBC)} \qquad \left\{\begin{aligned} & \LLL \RHObeta=0, \\
     &\SPP{\LLL^\ast (A) }{B \RHObeta} =
   \SPP{A}{\LLL^\ast (B) \RHObeta} 
\quad \text{for all }A,B\in L(\HH), 
\end{aligned}\right.
\end{equation}
i.e.\ $\calL^*$ is symmetric with respect to the weighted scalar
product $(A,B)\mapsto \SPP A{B \RHObeta}$. The characterization of all
operators lying in the class of DBC Lindbladians for a fixed
$\RHObeta$ is given in \cite[Eqn.\,(20)]{Alic76DBCN}. See also
\cite{JaPiWe14EFQD} for a modern characterization of DBC Lindblad
operators, while there the main goal is a large deviations theory for
the asymptotics for $t \to \infty$.  We will derive a new and compact
representation of these operators in Section
\ref{su:Lind.CompForm}. Clearly, the DBC is equivalent to
\begin{equation}
  \label{eq:DBC-variant}
  \LLL ( A\RHObeta ) = \LLL^\ast
(A) \RHObeta \quad \text{for all } A\in L(\HH).
\end{equation}

\begin{comment}
Moreover by introducing the weighted scalar product
\[
\SPP AB_{\RHObeta} := \SPP{A}{B\RHObeta} 
\]
we can state the DBC in an even simpler way. Let
$\LLL^{\ast,\RHObeta} $ be the $\RHObeta$-adjoint of $\LLL$ defined
via $\SPP {\LLL^\ast A}{B}_{\RHObeta} = \SPP
A{\LLL^{\ast,\RHObeta}B}_{\RHObeta}$. Then \eqref{eq:DBC} simply
means that $\LLL^\ast= \LLL^{\ast,\RHObeta}$.

ANMERKUNG: Ich finde die Notation $\LLL^{\ast,\RHObeta}$ sehr
unschön, aber mir ist gerade keine gute eingefallen. Man könnte statt
$\LLL$ vielleicht auch $L$ verwenden, aber dann hätte man weiter
unten bei $\SSS_W$ und $\MMM_Q$ immer noch das Problem. Außerdem habe
ich den Operator $\calM_{X,Y}A=XAY$ herausgenommen, weil, wenn ich
mich nicht täusche, er weiter unten nicht mehr verwendet
wird. Übrigens war für mich bei den Rechnungen mit dem Adjungierten
folgende Eigenschaft hilfreich, die mir anfangs nicht bewusst war:
Lindblad-Operatoren sogenannte $\ast$-Abbildungen, d.h. sie erfüllen
$\LLL  (  A^\ast  )  =  (  \LLL  (  A  ) 
 ) ^\ast$
\end{comment}

Before discussing the general form of \emph{all} DBC Lindbladians, we
will construct minimal building blocks. They  
are useful in their own right for the modeling of dissipative
couplings as discussed in Sections \ref{s:GENERIC} and
\ref{se:ExaAppl}. The main observation is that the property of
detailed balance with respect to $\RHObeta=\frac1{Z_\beta}\ee^{-\beta
  H}$ involves operators $Q$ having the property $\RHObeta Q
(\RHObeta)^{-1} = \mu Q$, which can be characterized by the following
elementary result.

\begin{lemma}\label{le:Comm.Rel} For $\omega \in \R$, $H\in \Herm{\HH}$ and 
  $Q\in L ( \HH )$ we have the equivalences: 
 \begin{align}
  \label{eq:Comm.Rel} 
\begin{aligned}
\text{(i) } \ [Q,H]=\omega Q \quad \Longleftrightarrow& \quad 
\text{(ii) } \ \exists \,  \beta\neq 0:\ 
   \ee^{- \beta H} Q \ee^{ \beta H} = \ee^{\beta \omega} Q
\\
 \Longleftrightarrow& \quad \text{(iii) } \  \forall\, \gamma\in \R: \   
\ee^{- \gamma H} Q \ee^{ \gamma H} = \ee^{\gamma \omega} Q.
\end{aligned}
 \end{align}
\end{lemma}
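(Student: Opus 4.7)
The proof reduces to a block-diagonal bookkeeping via the spectral decomposition of $H$. Write $H = \sum_k \lambda_k P_k$ with orthogonal projections $P_k$ onto the eigenspaces of the Hermitian operator $H$, and set $Q_{kl} := P_k Q P_l$, so that $Q = \sum_{kl} Q_{kl}$. Using $P_k H = H P_k = \lambda_k P_k$ one computes directly
\[
[Q,H] = \sum_{kl}(\lambda_l - \lambda_k)\, Q_{kl}, \qquad
\ee^{-\gamma H}\, Q\, \ee^{\gamma H} = \sum_{kl} \ee^{\gamma(\lambda_l - \lambda_k)}\, Q_{kl}.
\]
Since the blocks $Q_{kl}$ live in the disjoint subspaces $P_k L(\HH) P_l$ of $L(\HH)$, any identity equating a sum of scalar multiples of the $Q_{kl}$ to zero decouples into the individual scalar relations on each block.

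The implication (iii) $\Rightarrow$ (ii) is immediate by specialization to any nonzero $\beta$. For (i) $\Rightarrow$ (iii), the first display forces $(\lambda_l - \lambda_k - \omega)\, Q_{kl} = 0$ for every $(k,l)$, so $Q_{kl} \neq 0$ only when $\lambda_l - \lambda_k = \omega$; substitution into the second display gives $\ee^{-\gamma H} Q\, \ee^{\gamma H} = \ee^{\gamma\omega} Q$ for all $\gamma \in \R$. For (ii) $\Rightarrow$ (i), the hypothesis yields block-wise $\big(\ee^{\beta(\lambda_l - \lambda_k)} - \ee^{\beta\omega}\big)\, Q_{kl} = 0$; since $\beta \neq 0$ and the eigenvalues $\lambda_k$ as well as $\omega$ are real, $\ee^{\beta x} = \ee^{\beta y}$ on $\R$ forces $x = y$, so again $Q_{kl} = 0$ whenever $\lambda_l - \lambda_k \neq \omega$, and the first display then collapses to $[Q,H] = \omega Q$.

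There is no genuine obstacle; the only delicate point is the appeal to reality in (ii) $\Rightarrow$ (i), where Hermiticity of $H$ (hence $\lambda_k \in \R$), the assumption $\omega \in \R$, and $\beta \in \R \setminus \{0\}$ together rule out the $2\pi\ii/\beta$ ambiguity that would otherwise enter the logarithm of $\ee^{\beta\,\cdot\,}$. As an alternative route for (i) $\Rightarrow$ (iii) that avoids the spectral decomposition, the function $f(\gamma) := \ee^{-\gamma H} Q\, \ee^{\gamma H}$ satisfies the linear operator ODE $f'(\gamma) = [f(\gamma), H]$ with $f(0) = Q$; under (i) the ansatz $g(\gamma) := \ee^{\gamma \omega} Q$ satisfies the same equation with $g(0) = Q$, so $f = g$ by uniqueness.
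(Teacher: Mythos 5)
Your proof is correct. The paper itself gives no proof of this lemma (it is introduced as an ``elementary result''), but the remarks immediately following it --- decomposing a general $Q$ with $[Q,H]=\omega Q$ as $\sum_{(n,m):\,\eps_m-\eps_n=\omega} P_n V_{n,m} P_m$ over spectral projectors of $H$ --- are precisely the block decomposition $Q_{kl}=P_kQP_l$ you use, so your argument fills the gap in exactly the intended way. You also correctly isolate the one genuinely non-trivial point, namely that (ii) $\Rightarrow$ (i) needs the injectivity of $t\mapsto \ee^{\beta t}$ on $\R$, which is where the reality of $\omega$, the Hermiticity of $H$, and $\beta\neq 0$ all enter.
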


We note that operators with the commutator property (i) can easily be
constructed when using the spectral representation of the Hamiltonian
$H$, namely 
\[
H= \sum_{n=1}^N \eps_n \dyad{h_n}{h_n}, \quad \text{ and hence } 
\RHObeta= \frac1{Z_\beta} \sum_{n=1}^N
\ee^{-\beta \eps_n} \dyad{h_n}{h_n} \text{ with }Z_\beta=\sum_{n=1}^N
\ee^{-\beta \eps_n}. 
\]
For a given eigenvalue $\eps_n$ we define the spectral projector $P_n$
via 
\[
P_n = \sum_{k:\eps_k=\eps_n}\dyad{h_k}{h_k},\quad \text{giving }
P_n=P_n^2 = P_n^* \text{ and } P_nH=HP_n = \eps_n P_n.
\]
Now we can take any operator $V\in L ( \HH )$ and choose spectral
projectors $P_n$ and $P_m$. Then 
\[
Q=P_n V P_m \quad \text{satisfies} \quad [Q,H]= (\eps_m {-} \eps_n) Q.
\]
We emphasize that this relation is linear in $Q$, so that a general
$Q$ satisfying $[Q,H]=\omega Q$ may have the form 
\[
Q= \sum_{(n,m):\;\eps_m{-}\eps_n=\omega } P_n V_{n,m} P_m,
\]
thus possibly more than two energy levels $\eps_k$ may be
involved. This is trivial for the case $\omega=0$ but may also occur
in the case $\omega\neq 0$, see Example \ref{suu:NontrivialCase}. 
 
We introduce the spectrum $\Omega(H)$ of the map $A\mapsto [A,H]$ and the set
$\mathfrak E(H)$ of eigenpairs via  
\begin{align*}
& \Omega(H):=\mathrm{spec}([\,\cdot\,,H]) = \bigSet{\eps_m{-}\eps_n}{ \eps_n,\eps_m
  \in \text{spec}(H)},  
\\
&\mathfrak E(H):=\bigSet{(\omega, Q)\in \R\ti L ( \HH )}{ [Q,H]=
  \omega Q}. 
\end{align*} 
Let us further define the multiplicities of the eigenspaces via 
\[
  d_\omega = \text{dim} \bigSet{ Q \in L(\HH ) }{ [Q,H]=\omega Q }. 
\] 
If $H$ has only one-dimensional eigenspaces and no pairs of
eigenvalues $\eps_m,\eps_n$ ($m \neq n$) have equal differences
$\eps_m {-} \eps_n =\eps_{m'} {-} \eps_{m'}$, then $d_\omega = 1$ for
$\omega \neq 0$ and $d_0 = N$. However, in the most degenerate case
$H=0$ we find $d_0=N^2$ and $d_\omega = 0$ for all $\omega$. The
following result provides the building blocks for all DBC Lindbladians
with respect to $\RHObeta=\frac1{Z_\beta}\ee^{-\beta H}$.

\begin{proposition}[Building blocks $\SSS_W$ and 
$\MM{\beta}{Q}$]\label{pr:BuildBlock}
Let $H$ and $\RHObeta$ be given as above. 

(a) Consider any $W\in \Herm{\HH}$ with $[W,H]=0$, then
the operator $\SSS_W$ defined by 
\[
\SSS_W A \; := \; [W,AW]+ [WA,W]= \big[W,[A,W]\big]
\]
is a Lindblad operator satisfying $\SSS_W=\SSS_W^\ast$ and
the DBC for $\RHObeta$.

(b) Consider any pair $(\omega,Q)\in \mathfrak E(H)$, then the operator
$\MM{\beta}{Q}$ defined via 
\[
\MM{\beta}{Q} A \; := \; \ee^{\beta \omega/2} \big( [Q,AQ^*]+[QA,Q^*]\big)+
\ee^{-\beta \omega/2} \big([Q^*,AQ]+[Q^*A,Q]\big)
\]
is a DBC Lindbladian for $\RHObeta$.

(c) Every Lindbladian $\LLL$ satisfying the DBC
\eqref{eq:DBC} can be written in the form
\[
\LLL= \sum_{j=1}^J \SSS_{W_j} + \sum_{m=1}^M \MM{\beta}{Q_m} , \ \text{
  where }\left\{ \begin{aligned} 
& W_j=W_j^*, \ (0,W_j)\in \mathfrak E(H), \text{ and }\\
& (\omega_m,Q_m)\in  \mathfrak E(H) \text{ with }\omega_m>0. 
\end{aligned}\right. 
\] 
The numbers $J$ and $M$ of necessary terms is bounded by $ J\leq
d_0{-}1$ and $M\leq \sum_{\omega\in \Omega(H)\setminus\{0\}}
d_\omega$.
\end{proposition}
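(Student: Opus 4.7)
Parts (a) and (b) are direct verifications; part (c) is the substantive step.

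For (a), I would identify $\SSS_W$ as an instance of the Lindblad form \eqref{eq:LindEvol} with a single term $Q_1=W$ and $a_{1,1}=1$, which settles the Lindblad property at once. Self-adjointness $\SSS_W=\SSS_W^*$ with respect to the Hilbert--Schmidt pairing follows by applying the identity $\SPP{A}{[B,C]}=\SPP{B^*}{[C,A^*]}$ from Section \ref{su:Setup} twice to $[W,[A,W]]$, using $W^*=W$. The equilibrium condition $\SSS_W\RHObeta=0$ is immediate from $[W,H]=0\Rightarrow[W,\RHObeta]=0$, and the remaining DBC condition in the form \eqref{eq:DBC-variant} reduces to $\SSS_W(A\RHObeta)=\SSS_W(A)\,\RHObeta$, which is obtained by pulling $\RHObeta$ across both commutators. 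For (b), I would cast $\MM{\beta}{Q}$ into Lindblad form with the two operators $Q_1=Q$, $Q_2=Q^*$ weighted by the diagonal positive matrix $\mathrm{diag}(\ee^{\beta\omega/2},\ee^{-\beta\omega/2})$. Lemma \ref{le:Comm.Rel}(iii), applied with $\gamma=\beta$ and then to $Q^*$ (for which the eigenvalue is $-\omega$), gives $\RHObeta Q=\ee^{\beta\omega}Q\RHObeta$ and $\RHObeta Q^*=\ee^{-\beta\omega}Q^*\RHObeta$; a careful tally of the four double-commutator terms then collapses $\MM{\beta}{Q}\RHObeta$ to a multiple of $[\RHObeta,QQ^*]$, which vanishes since $[QQ^*,H]=0$. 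The second DBC condition would be verified by computing $\MM{\beta}{Q}^*$ and checking weighted symmetry: the prefactors $\ee^{\pm\beta\omega/2}$ are precisely tuned so that shifting $\RHObeta$ across $Q$ or $Q^*$ interchanges the forward and backward commutator pairs symmetrically under $A\leftrightarrow B$.

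The substantive step is (c). Here I would invoke the classical Alicki--Kossakowski--Frigerio--Gorini--Verri characterization \cite{Alic76DBCN,KFGV77QDBK}: every DBC Lindbladian admits a representation with operators $V_k$ that are eigenvectors of the modular map $A\mapsto\RHObeta^{-1}A\,\RHObeta$, and the DBC further forces the Hermitian coefficient matrix $(c_{kl})$ to couple only operators with matching modular eigenvalue. By Lemma \ref{le:Comm.Rel}, these modular eigenvectors are exactly the elements of $\mathfrak E(H)$. Grouping the resulting sum by $\omega$ and diagonalizing the positive semidefinite coefficient block on each eigenspace, the $\omega=0$ block yields Hermitian $W_j$ commuting with $H$ (the identity $\bm1_\HH$ drops out since $\SSS_{\bm1_\HH}=0$, which gives $J\le d_0-1$), while each pair of nonzero eigenspaces $\omega\leftrightarrow-\omega$ combines via the involution $Q\mapsto Q^*$ (which sends $\omega$ to $-\omega$) into building blocks of the form $\MM{\beta}{Q_m}$ with $\omega_m>0$. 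The stated count $M\le\sum_{\omega\neq0}d_\omega$ then emerges as a dimension bound for these paired eigenspaces.

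The main technical obstacle lies entirely in (c): extracting the Alicki reduction from the DBC and then performing the combinatorial regrouping that aligns the $(\omega,-\omega)$ pairs of modular eigenvectors into the specific building blocks $\MM{\beta}{Q}$ carrying the prescribed $\ee^{\pm\beta\omega/2}$ weights. Once the regrouping has been carried out, the bounds on $J$ and $M$ are immediate from the dimensions $d_\omega$ of the eigenspaces of the superoperator $\mathrm{ad}_H\colon A\mapsto[A,H]$.
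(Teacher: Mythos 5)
Your proposal is correct and follows essentially the same route as the paper: parts (a) and (b) are handled by exhibiting the operators as instances of the Lindblad form \eqref{eq:LindEvol} and invoking Lemma \ref{le:Comm.Rel}, while part (c) rests on the Kossakowski--Frigerio--Gorini--Verri characterization of DBC generators in terms of modular eigenvectors (identified with $\mathfrak E(H)$ via Lemma \ref{le:Comm.Rel}), with the $\omega\leftrightarrow-\omega$ pairs regrouped into $\MM{\beta}{Q_m}$ and the $\omega=0$ terms into $\SSS_{W_j}$, exactly as in the paper's decomposition of the index set into $I_{\neq}$ and $I_=$. Only a cosmetic remark: in (b) the quantity $\MM{\beta}{Q}\RHObeta$ reduces to a combination of $[\RHObeta,QQ^*]$ and $[\RHObeta,Q^*Q]$ rather than a single multiple of $[\RHObeta,QQ^*]$, but both commutators vanish since $QQ^*$ and $Q^*Q$ commute with $H$, so the conclusion stands.
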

\begin{proof} Part (a): Obviously, $\SSS_W$ is a special case of
  $\LLL$ in \eqref{eq:LindEvol} by choosing $a_{1,1}=1$ and $Q_1=W$
  and $a_{n,m}=0$ for $(n,m)\neq (1,1)$, so it is a Lindblad operator.
  We also see that the DBC $\SSS^{\RHObeta}_W = \SSS_W$ holds, since
  $\RHObeta W \RHObeta^{-1} =W$ by using Lemma \ref{le:Comm.Rel}.

  Part (b): It is obvious that $\MM{\beta}{Q}$ has the form of $\LLL$ in
  \eqref{eq:LindEvol} with $Q_1=Q$, $Q_2=Q^*$,
  $a_{1,1}=\ee^{\beta\omega/2}$, and $a_{2,2}=\ee^{-\beta\omega/2}$,
  while all other terms are $0$.  Moreover, $\left(\MM{\beta}{Q}\right)^{\RHObeta}$ can
  be calculated explicitly by using Lemma \ref{le:Comm.Rel} and
  $\RHObeta Q^*\RHObeta^{-1}= \ee^{-\beta\omega} Q^*$, so the DBC
  follows.

  Part (c): From \cite[Eqn.\,(2.16)-(2.20)]{KFGV77QDBK} (where $L_s$
  corresponds to our $\LLL^*$) we know that every DBC Lindbladian
  with respect to $\RHObeta$ can be written as
\begin{equation}
\label{eq:calL.gen}
\LLL A = \sum_{k,j=1}^N D_{kj} \big( [X_{kj} A, X_{kj}^*] +
[X_{kj},A X_{kj}^*]\big),
\end{equation}
where $D_{kj}\geq 0$ and $X_{kj}\in \C^{N\ti N}$ satisfy the
conditions (with $\wh r_j=\ee^{-\beta \eps_j}/Z_\beta>0$) 
\begin{equation}
\label{eq:KF*conds}
\begin{aligned}
& \text{(i) }D_{kj}\wh r_j = D_{jk}\wh r_k  \text{ and }
X_{kj}^*=X_{jk} \ \text{ for } \wh r_j\neq \wh r_k;\quad \text{(ii) }
\SPP{X_{kj}}{X_{lm}}=\delta_{kl}\delta_{jm};\\  
& \text{(iii) } X_{kj}^*=X_{kj} \text{ for }\wh r_j= \wh r_k ;\quad 
 \text{(iv) } \RHObeta \, X_{kj} (\RHObeta)^{-1} = \frac{\wh r_k}{\wh
   r_j} \, X_{kj}.  
\end{aligned}
\end{equation}
We decompose the set $I=\{1,..,N\}^2$ into $I_{\neq}:=\set{(k,j)\in
  \{1,..,N\}^2}{\wt r_j\neq \wh r_k}$ and $I_=:=\set{(k,j)\in
  \{1,..,N\}^2}{\wt r_j = \wh r_k}$. For $(j,k)\in I_{\neq}$
the second condition in (i) gives $X_{jk}=X_{kj}^*$, while (iv) and Lemma
\ref{le:Comm.Rel} imply $(\eps_k{-}\eps_j,X_{kj}) \in \mathfrak
E(H)$. Using now the first condition in (i) as well and setting 
$Q_{kj}= \ee^{\beta(\eps_k{-}\eps_j)/4} X_{kj}$, we find the
relation 
\[
D_{kj} \big( [X_{kj} A, X_{kj}^*] +
[X_{kj},A X_{kj}^*]\big)+ D_{jk} \big( [X_{jk} A, X_{jk}^*] +
[X_{jk},A X_{jk}^*]\big)= \MM{\beta}{Q_{kj}}A. 
\]

For $(k,j)\in I_=$ conditions (iii) and (iv) yield $X_{kj}= X_{kj}^*$
and $[X_{kj},H]=0$. Thus,
\[
D_{kj} \big( [X_{kj} A, X_{kj}^*] +
[X_{kj},A X_{kj}^*]\big) = \SSS_{W_{kj}} A \text{ with
}W_{kj}=\sqrt{D_{kj}} X_{kj}. 
\]
In summary, we find that $\LLL$ in \eqref{eq:calL.gen} can be written
in the form 
\[
\LLL= \sum_{(k,j)\in I_=} \SSS_{W_{kj}} + \sum_{(k,j)\in I_{\neq},\,
  j<k} \MM{\beta}{Q_{kj}}, 
\]
which is the desired result. 
\end{proof}

Note that the representation of $\LLL$ in terms of the Kraus
operators $Q_{n}$ in \eqref{eq:LindEvol} is not
unique. Correspondingly, our representation in terms of $\SSS_{W_j}$
and $\MMM_{Q_j}$ is not unique. Moreover, for a minimal
representation one may ask for additional orthogonality conditions.
We also remark that the operators $\SSS_W $ can be obtained from
$\MMM_{Q} $ as a special case allowing $\omega=0$ and asking for
$Q=Q^*$. More precisely, if $(0,Q)\in \mathfrak E(H)$ then also
$(0,Q^*)$ and $(0, \frac12(Q{+}Q^*))$ lie in $\mathfrak
E(H)$. In particular, we have 
\[
(0,Q)\in \mathfrak E(H) \text{ and } Q=Q^* \quad \Longrightarrow \quad
\MM{\beta}{Q}= 2 \SSS_{\beta,Q}= \SSS_{\beta,\sqrt 2 \:Q}. 
\] 
Moreover, $(\omega,Q)\in \mathfrak E(H)$ if and only if $({-}\omega,
Q^*)\in \mathfrak E(H)$ and $\MM{\beta}{Q}=\MM{\beta}{Q^*}$. Thus,
Proposition \ref{pr:BuildBlock}(c) tells us that all DBC Lindbladians
can be written in the form
\begin{equation}
  \label{eq:LindDBC}
\LLL \rho = \sum_{n=1}^N \MM{\beta}{Q_n} \rho
   \quad \text{where }  (\omega_n,Q_n) \in \mathfrak E(H) \text{ and
   }\omega_j\in   \Omega(H). 
\end{equation}
We will see specific examples in Sections \ref{su:Lind.Exa} and
\ref{se:ExaAppl}. The above representation in terms of the building blocks is
especially useful for modeling, while the next section provides a
form that is more elegant and compact.

\subsection{A compact form of all DBC Lindblad operators}
\label{su:Lind.CompForm}

In this section we will write Lindblad operators and our building
blocks \ref{pr:BuildBlock} in another way. The basic idea is to write
them as the partial trace of a double commutator on a larger
space. This will prove useful in Section \ref{se:EntrGSQME} when
writing down entropic gradient structures for the Lindblad equations
with DBC.  In what follows $\HH_{1}$ and $\HH_{2}$ denote
finite-dimensional Hilbert spaces, and for $A\in L(\HH_1)$ and $B\in
L(\HH_2)$ we denote the tensor product by $A \oti B \in L ( \HH_{1}
) \oti L ( \HH_{2} ) =L(\HH_1 \oti \HH_2)$. We also introduce the
notion of a partial trace
\[
\trace_{\HH_{2}}: L(\HH_1)\oti L(\HH_2) \to L(\HH_1) \ \text{
  defined via } \trace_{\HH_{2}} (A\oti B) = \trace_2(B) A 
\]
on direct products and extended by linearity to the whole space $L (\HH_1 \oti \HH_2 )$. Here $\trace_2$ is the trace in $L(\HH_2)$. 

The next result shows that all Lindblad operators on $L(\HH_1)$
can be written in a compact form by a double commutator on
$L(\HH_1)\oti L(\HH_2)$ and a partial trace. Moreover, this form
allows for a simple criterion for the DBC with respect to the
equilibrium $\RHObeta$.

\begin{proposition}[Compact representation for $\LLL$]
\label{pr:calL.CR} Consider two finite dimensio\-nal Hilbert spaces
  $\HH_1$ and $\HH_2$. Assume that $\bbQ\in \Herm { \HH_1 \oti \HH_2 } $, $\wh\sigma \in \Herm{\HH_2}$ and
  $\wh\sigma\geq0$. Then
  \begin{equation}
    \label{eq:Lind.CompForm}
    \LLL(\rho)=-\trace_{\HH_{2}}\Big(\big[\mathbb{Q}, 
  [\mathbb{Q},\rho\oti\wh\sigma] \big]\Big)
  \end{equation}
is a Lindblad operator in $L(\HH_1)$, i.e.\ the generator of a
completely positive semigroup.

If in addition $\bbQ $ and $\wh\sigma$ satisfy the commutator
relation 
\begin{equation}
  \label{eq:CommQSigma}
\big[ \mathbb{Q},\RHObeta\oti\wh\sigma \big] =0 ,  
\end{equation}
then $\LLL$ satisfies the DBC with respect to $\RHObeta$.
\end{proposition}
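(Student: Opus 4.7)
The plan is to handle the two claims separately: first verify that $\LLL$ has GKSL structure by a direct expansion in a well-chosen basis, then derive the DBC by combining an explicit adjoint formula with the commutation hypothesis.

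\textbf{Lindblad form.} I would diagonalize $\wh\sigma$, picking an orthonormal basis $\{e_k\}$ of $\HH_2$ with $\wh\sigma=\sum_k s_k\dyad{e_k}{e_k}$, $s_k\ge 0$, and writing $\bbQ=\sum_{k,l}Q_{kl}\oti\dyad{e_k}{e_l}$ with $Q_{lk}=Q_{kl}^{*}$ enforced by $\bbQ=\bbQ^{*}$. Exploiting $\wh\sigma\,\dyad{e_k}{e_l}=s_l\dyad{e_k}{e_l}$ and $\dyad{e_k}{e_l}\wh\sigma^{\text{left}}$, more precisely $\wh\sigma\,\dyad{e_k}{e_l}=s_k\dyad{e_k}{e_l}$ and $\dyad{e_k}{e_l}\wh\sigma=s_l\dyad{e_k}{e_l}$, the inner commutator expands to $[\bbQ,\rho\oti\wh\sigma]=\sum_{k,l}(s_l Q_{kl}\rho-s_k\rho\, Q_{kl})\oti\dyad{e_k}{e_l}$. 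Carrying out the outer commutator and partial trace (which picks off the diagonal block indices) yields
\[
\LLL(\rho)=2\sum_{k,l}s_k\,Q_{kl}^{*}\rho\,Q_{kl}\;-\;\sum_{k,l}s_l\,Q_{kl}^{*}Q_{kl}\,\rho\;-\;\sum_{k,l}s_k\,\rho\,Q_{kl}Q_{kl}^{*}.
\]
Setting $L_{kl}:=\sqrt{2s_k}\,Q_{kl}^{*}$ and using the index swap $k\leftrightarrow l$ together with $Q_{lk}=Q_{kl}^{*}$ to balance the two anticommutator sums collapses this into the standard GKSL form $\LLL(\rho)=\sum_{k,l}\bigl(L_{kl}\rho L_{kl}^{*}-\tfrac12\{L_{kl}^{*}L_{kl},\rho\}\bigr)$, exhibiting $\LLL$ as the generator of a completely positive semigroup.

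\textbf{Detailed balance.} For the second claim I would first derive a symmetric expression for $\LLL^{*}$. Combining the adjoint relation for partial trace $\SPP{A}{\trace_{\HH_2}Y}=\SPP{A\oti\ID_{\HH_2}}{Y}$ with the commutator-transfer rule $\SPP{C}{[\bbQ,D]}=\SPP{[\bbQ,C]}{D}$ (valid because $\bbQ=\bbQ^{*}$), applied twice, together with cyclicity of the partial trace in the $\HH_2$-factor, one obtains
\[
\LLL^{*}(A)=-\trace_{\HH_2}\Bigl(\big[\bbQ,[\bbQ,A\oti\ID_{\HH_2}]\big]\,(\ID_{\HH_1}\oti\wh\sigma)\Bigr).
\]
By the equivalent formulation \eqref{eq:DBC-variant} it then suffices to prove $\LLL(A\RHObeta)=\LLL^{*}(A)\RHObeta$ for every $A\in L(\HH_1)$. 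The assumption $[\bbQ,\RHObeta\oti\wh\sigma]=0$ lets me pull the factor $\RHObeta\oti\wh\sigma$ through both commutators, namely
\[
\big[\bbQ,[\bbQ,(A\oti\ID)(\RHObeta\oti\wh\sigma)]\big]=\big[\bbQ,[\bbQ,A\oti\ID]\big]\,(\RHObeta\oti\wh\sigma).
\]
Combining the factorization $\RHObeta\oti\wh\sigma=(\RHObeta\oti\ID)(\ID\oti\wh\sigma)$ with the pull-out identity $\trace_{\HH_2}\bigl(Y(\RHObeta\oti\ID)\bigr)=(\trace_{\HH_2}Y)\RHObeta$ produces $\LLL(A\RHObeta)=\LLL^{*}(A)\RHObeta$, which is \eqref{eq:DBC-variant}. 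Specializing to $A=\ID_{\HH_1}$ annihilates the inner commutator and gives $\LLL(\RHObeta)=0$ as a free by-product.

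\textbf{Main obstacle.} The only genuinely delicate step is the index-swap trick needed to merge the two unbalanced anticommutator sums into a single symmetric $\{L_{kl}^{*}L_{kl},\rho\}$ term: it is easy to miscount indices when the weights $s_k$ and $s_l$ differ. Once the adjoint formula and the pull-out identity for partial traces are available, the DBC itself is essentially a one-line consequence of $[\bbQ,\RHObeta\oti\wh\sigma]=0$, which is of a quite different flavor from the algebraic bookkeeping in the first part.
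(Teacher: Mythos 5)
Your proof is correct and follows essentially the same route as the paper: the same diagonalization of $\wh\sigma$ and block expansion of $\bbQ$ to reach GKSL form, and the same double application of the pull-through identity $[\bbQ,\bbA(\RHObeta\oti\wh\sigma)]=[\bbQ,\bbA](\RHObeta\oti\wh\sigma)$ for detailed balance, the only organizational difference being that you verify the equivalent one-sided condition \eqref{eq:DBC-variant} instead of checking the symmetry of $(A,B)\mapsto\SPP{\LLL^*A}{B\RHObeta}$ directly. Incidentally, your adjoint formula carries the correct overall minus sign, which the paper's displayed expression for $\LLL^*$ omits --- a harmless typo there, since the sign cancels in the detailed-balance relation.
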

\begin{proof}
Since $\wh\sigma\geq0$ we can write
$\wh\sigma=\sum_{j=1}^{J}\sigma_{j} e_{j}\oti\overline{e}_{j}$
with $\sigma_{j}\geq0$. Let us define 
\[
   Q_{kl}=\left\langle e_{k}\left|\mathbb{Q}\right|e_{l}\right\rangle
_{\HH_{2}}=\trace_{\HH_{2}} ( \mathbf{1}_{\HH_{1}}\oti (
e_{l}\oti\overline{e}_{k} ) \mathbb{Q} )  \text{ giving } \bbQ =
\sum_{k,l} Q_{kl}\otimes(e_l\oti \ol e_k).
\]
Then $Q_{kl}=Q_{lk}^*$ and using $\trace_{\HH_2}\big(B\otimes
\big((e_k\oti \ol e_l)\oti(e_m\oti\ol e_n)\big)\big)=
\delta_{kn}\delta_{lm} B $ we find 
\begin{align}
 \label{eq:calL.bbQ.sigma}
\LLL\rho  & =\sum_{k,l=1}^{J}\big(2\sigma_{l}Q_{kl}\rho
Q_{lk}-\sigma_{k}\{ Q_{kl}Q_{lk},\rho\}\big) 
%\\ & 
=\sum_{k,l=1}^{J}\sigma_{l} \big( [Q_{kl}\rho,Q_{kl}^* ]+ [Q_{kl},\rho
    Q_{kl}^*] \big) . 
\end{align}
which is clearly of Lindblad form. 

The commutation relation \eqref{eq:CommQSigma} immediately implies
$\LLL \RHObeta=0$, which is the first relation in the DBC
\eqref{eq:DBC}. The second relation  is written in terms of the dual operator
$\LLL^*$ that takes the form 
\[
\LLL^* ( A ) =\trace_{\HH_{2}} ( \mathbf{1}_{\HH_{1}}\oti\wh\sigma\,\left[\mathbb{Q},\left[\mathbb{Q},A\oti
      \mathbf{1}_{\HH_{2}}\right]\right] ) . 
\]
We have to show $\trace \big( (\LLL^* A )^* B\RHObeta ) \big) =
\trace ( A^*\LLL^* ( B ) \RHObeta ) $. Using $(\LLL^*
A)^*=\LLL^*(A^*)$ the left hand side is equivalent to
\begin{align*}
\trace_{\HH_1} ( \LLL^* ( A^* ) B\RHObeta ) 
&
=\trace_{\HH_{1}\oti\HH_{2}} ( \mathbf{1}_{\HH_{1}}\oti\wh\sigma
  \,\left[\mathbb{Q},\left[\mathbb{Q},A^*\oti\mathbf{1}_{\HH_{2}}\right]
  \right]\, ( B\RHObeta ) \oti\mathbf{1}_{\HH_{2}} ) \\ 
 &
 =\trace_{\HH_{1}\oti\HH_{2}} ( \,\left[\mathbb{Q},\left[\mathbb{Q},
       A^*\oti\mathbf{1}_{\HH_{2}}\right]\right]\,
    ( B\oti\mathbf{1}_{\HH_{2}} ) 
   \RHObeta\oti\wh\sigma ) .
\end{align*}
Again using the commutator condition \eqref{eq:CommQSigma} we obtain,
for all $\bbA$, the identity
\begin{align*}
\left[\mathbb{Q},\mathbb{A} ( \RHObeta\oti 
   \wh\sigma ) \right]
&
=+\left[\mathbb{Q},\mathbb{A}\right]\RHObeta\oti\wh\sigma
+\mathbb{A}\left[\mathbb{Q},\RHObeta\oti\wh\sigma\right]
% \\& 
=\left[\mathbb{Q},\mathbb{A}\right]\RHObeta\oti\wh\sigma,
\end{align*}
which we use twice, namely once with
$\mathbb{A}=B\oti\mathbf{1}_{\HH_{2}}$ and once with 
 $\mathbb{A}=\left[\mathbb{Q},B\oti\mathbf{1}_{\HH_{2}}\right]$.
Thus, we can move the $\mathbb{Q}$ operators to the right and obtain
\begin{align*}
\mathrm{Tr}_{\HH_{1}} ( \LLL^* ( A^* ) B\RHObeta ) 
&
=\mathrm{Tr}_{\HH_{1}\oti\HH_{2}} ( A^*\oti
  \mathbf{1}_{\HH_{2}}\,\left[\mathbb{Q},\left[\mathbb{Q}, ( B\oti
        \mathbf{1}_{\HH_{2}} ) \right]\right] 
  \RHObeta\oti\wh\sigma ) \\ 
 &
 =\mathrm{Tr}_{\HH_{1}} ( A^*\mathrm{Tr}_{\HH_{2}}
   ( \left[\mathbb{Q},\left[\mathbb{Q}, ( B\oti\mathbf{1}_{\HH_{2}} ) 
       \right]\right]\mathbf{1}\oti\wh\sigma ) \RHObeta ) 
\\ &
 = \mathrm{Tr}_{\HH_{1}} ( A^*\LLL^* 
   ( B ) \RHObeta ) , 
\end{align*}
which is the desired DBC. 
\end{proof}

The above result shows that the commutator relation
\eqref{eq:CommQSigma} is crucial for the study of DBC Lindbladians. In
the following lemma we give an alternative characterization which will
be useful later, when studying the associated gradient structures.

\begin{lemma}[Equivalent commutation relation]\label{le:CommQSigma}
Consider $\bbQ\in \Herm{\HH_1\oti \HH_2}$, $\wh\rho \in
\Herm{\HH_1}$, and $\wh\sigma\in \Herm{\HH_2}$ with $\wh\rho, \wh\sigma>0$. Then we have
\begin{equation}
  \label{eq:Comm.Equiv}
[\bbQ,\wh\rho\oti \wh\sigma]=0 \quad \Longleftrightarrow \quad [\bbQ,
\log\wh\rho \oti \mathbf{1}_{\HH_2}] +  [\bbQ,
\mathbf{1}_{\HH_1} \oti \log\wh\sigma ] =0.    
\end{equation}
\end{lemma}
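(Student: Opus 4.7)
The plan is to reduce the equivalence to a single scalar statement about a positive operator and its logarithm. Since $\wh\rho,\wh\sigma>0$, the tensor product $\wh\rho\oti\wh\sigma$ is positive definite on $\HH_1\oti\HH_2$, and the two factors $\wh\rho\oti\mathbf{1}_{\HH_2}$ and $\mathbf{1}_{\HH_1}\oti\wh\sigma$ manifestly commute. Using the standard identity $\log(AB)=\log A+\log B$ for commuting positive operators, I would deduce
\[
\log(\wh\rho\oti\wh\sigma) \;=\; \log\wh\rho\oti\mathbf{1}_{\HH_2} \;+\; \mathbf{1}_{\HH_1}\oti\log\wh\sigma.
\]
Consequently, the right-hand side of \eqref{eq:Comm.Equiv} is exactly $[\bbQ,\log(\wh\rho\oti\wh\sigma)]=0$, so the claim reduces to the purely operator-theoretic equivalence
\[
[\bbQ,T]=0 \quad\Longleftrightarrow\quad [\bbQ,\log T]=0
\]
for any positive definite $T\in\Herm{\HH_1\oti\HH_2}$ and Hermitian $\bbQ$.

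For this reduced statement I would invoke the spectral theorem: writing $T=\sum_k t_k P_k$ with pairwise distinct positive eigenvalues $t_k$ and orthogonal spectral projectors $P_k$, one has $\log T=\sum_k(\log t_k)P_k$. Since $t\mapsto\log t$ is injective on $(0,\infty)$, the values $\log t_k$ are again pairwise distinct, and therefore $T$ and $\log T$ share the same family of spectral projectors. Their commutants then coincide, both consisting precisely of those $\bbQ$ that commute with every $P_k$. This closes the equivalence in both directions simultaneously.

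I do not expect any substantive obstacle: everything takes place in finite dimensions, where functional calculus is unproblematic, and the only nontrivial ingredient is the additive splitting of the logarithm on the tensor product. The one point worth flagging is the commutativity $(\wh\rho\oti\mathbf{1}_{\HH_2})(\mathbf{1}_{\HH_1}\oti\wh\sigma)=(\mathbf{1}_{\HH_1}\oti\wh\sigma)(\wh\rho\oti\mathbf{1}_{\HH_2})$, which is immediate from the bilinearity of the tensor product but is precisely what makes the logarithmic identity available and the whole reduction work.
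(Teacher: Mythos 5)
Your proof is correct and follows essentially the same route as the paper: the paper likewise reduces the claim to the additive splitting $\log(\wh\rho\oti\wh\sigma)=\log\wh\rho\oti\mathbf{1}_{\HH_2}+\mathbf{1}_{\HH_1}\oti\log\wh\sigma$ together with the fact that a Hermitian $\bbQ$ commutes with a positive definite operator if and only if it commutes with its logarithm. The only difference is that you spell out, via the spectral theorem, the two ingredients the paper states without proof.
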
 
\begin{proof} We simply note that a Hermitian operator $\bbQ$
  commutes with a Hermitian operator $\bbB>0$ if and only if it
  commutes with its logarithm $\log \bbB$. We apply this to $\bbB=
  \wh\rho\oti \wh\sigma$, for which we have 
\[
\log(\wh\rho\oti \wh\sigma) = \log\wh\rho \otimes \mathbf{1}_{\HH_2} \:
+\: \mathbf{1}_{\HH_1} \otimes \log\wh\sigma.
\]
This gives the desired result. 
 \end{proof}

The above proposition demonstrates that the definition
\eqref{eq:Lind.CompForm} together with
\eqref{eq:CommQSigma} generates DBC Lindbladians. The following
corollary shows that the building blocks 
$\SSS_W$ and $\MM{\beta}{Q}$ from Proposition
\ref{pr:BuildBlock} can be written in the 
compact form \eqref{eq:Lind.CompForm} as well.

\begin{corollary}[Building blocks in compact
  form]\label{co:Biuld.Comp} Consider
  $\RHObeta=\frac1{Z_\beta}\ee^{-\beta H}$ with  $H\in \Herm{\HH_1}$.  
 
(1) Choosing $\HH_2=\C$ and $\bbQ_W=W$ for $W \in \Herm{\HH_1}$ we
have the identity
\[
\SSS_W A  = [W,AW]+[WA,W]= -\big[W,[W,A]\big] =  -\big[
\bbQ_W,[\bbQ_W,A]\big].
\] 
The commutator relation \eqref{eq:Comm.Equiv} for $\bbQ_W$ is simply
$[H,W]=0$.

(2) Choosing $\HH_{2}=\mathbb{C}^{2}$ and $(\omega,Q)\in \mathfrak
E(H)$ we define
\[
\mathbb{Q}_{Q}=\begin{pmatrix}0 & Q^*\\
Q & 0
\end{pmatrix}\quad\wh\sigma_{\beta\omega}=\begin{pmatrix}\ee^{\beta\omega/2} & 0\\
0 & \ee^{-\beta\omega/2}
\end{pmatrix}.
\]
Then, the commutation relation \eqref{eq:Comm.Equiv} holds and we
have
\begin{equation}
\label{eq:BuildingBlockAsTensorProduct}
\begin{aligned}
\MM{\beta}{Q}\,A &
=-\trace_{\HH_{2}} ( \left[\mathbb{Q}_{Q},\left[\mathbb{Q}_{Q},
      A \oti\wh\sigma_{\beta\omega}\right]\right] ) \\
 & =\ee^{\beta\omega/2} ( \left[Q,\rho Q^*\right]+ 
  \left[Q\rho,Q^*\right] ) +\ee^{-\beta\omega/2} 
   ( \left[Q^*,\rho Q\right]+\left[Q^*\rho,Q\right] ) .
\end{aligned}
\end{equation}
\end{corollary}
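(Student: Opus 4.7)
The plan is to verify both parts by direct computation, exploiting the block-matrix representation of $L(\HH_1\oti \HH_2)$ and the eigenvalue relation $[Q,H]=\omega Q$ via Lemma \ref{le:Comm.Rel}. The task is essentially to reconcile the abstract form of Proposition \ref{pr:calL.CR} with the explicit building blocks of Proposition \ref{pr:BuildBlock}.

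For part~(1), choosing $\HH_2=\C$ trivializes the tensor product: $L(\HH_1)\oti L(\C)\cong L(\HH_1)$ and the partial trace reduces to multiplication by the scalar $\wh\sigma>0$, which we may set equal to $1$. The formula \eqref{eq:Lind.CompForm} therefore becomes $-[W,[W,A]]$; expanding gives $2WAW - W^2A - AW^2$, which equals $[W,AW]+[WA,W]=\SSS_WA$. The commutator relation \eqref{eq:Comm.Equiv} from Lemma \ref{le:CommQSigma} has its $\HH_2$-contribution vanishing because $\log\wh\sigma$ is a scalar in $L(\C)$, while the $\HH_1$-contribution reduces to $[W,\log\RHObeta]=-\beta[W,H]$, since $\log\RHObeta=-\beta H-(\log Z_\beta)\mathbf{1}_{\HH_1}$. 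Hence the relation is equivalent to $[H,W]=0$.

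For part~(2), I represent elements of $L(\HH_1\oti \C^2)$ as $2\times2$ block matrices with entries in $L(\HH_1)$. Then $\RHObeta\oti\wh\sigma_{\beta\omega}=\diag(\ee^{\beta\omega/2}\RHObeta,\ee^{-\beta\omega/2}\RHObeta)$, and direct block multiplication shows $[\bbQ_Q,\RHObeta\oti\wh\sigma_{\beta\omega}]$ is strictly off-diagonal, with the $(2,1)$-block $\ee^{\beta\omega/2}Q\RHObeta-\ee^{-\beta\omega/2}\RHObeta Q$ and the $(1,2)$-block obtained by adjoining. The assumption $(\omega,Q)\in\mathfrak E(H)$ together with Lemma \ref{le:Comm.Rel} gives $\RHObeta Q\RHObeta^{-1}=\ee^{\beta\omega}Q$, which makes the $(2,1)$-block vanish; taking adjoints supplies the analogous relation $\RHObeta^{-1}Q^*\RHObeta=\ee^{\beta\omega}Q^*$ that kills the $(1,2)$-block.

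The double-commutator identity is then a bookkeeping exercise. Writing $A\oti\wh\sigma_{\beta\omega}=\diag(\ee^{\beta\omega/2}A,\ee^{-\beta\omega/2}A)$, a first bracket with $\bbQ_Q$ produces a purely off-diagonal matrix, and a second bracket returns a purely block-diagonal matrix whose two diagonal entries the partial trace $\trace_{\HH_2}$ then adds. Collecting terms according to the factors $\ee^{\pm\beta\omega/2}$, one recovers exactly $\ee^{\beta\omega/2}([Q,AQ^*]+[QA,Q^*])+\ee^{-\beta\omega/2}([Q^*,AQ]+[Q^*A,Q])=\MM{\beta}{Q}A$. The only non-routine point is the consistent tracking of the powers $\ee^{\pm\beta\omega/2}$ as $\bbQ_Q$ passes through the diagonal blocks of $A\oti\wh\sigma_{\beta\omega}$, which is precisely what balances the final formula against the eigenvalue relation $[Q,H]=\omega Q$.
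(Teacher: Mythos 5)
Your proposal is correct and follows essentially the same route as the paper: part (1) by trivial expansion of the double commutator, and part (2) by an explicit $2\times2$ block computation of $[\bbQ_Q,\RHObeta\oti\wh\sigma_{\beta\omega}]$ combined with Lemma \ref{le:Comm.Rel} (in the correct form $\RHObeta Q\RHObeta^{-1}=\ee^{\beta\omega}Q$), plus a direct evaluation of the double commutator and partial trace. If anything, your write-up is slightly more detailed than the paper's, which dismisses (1) as trivial and (2) as ``a direct calculation.''
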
 
\begin{proof} (1) is trivial. For (2) the relation
  \eqref{eq:BuildingBlockAsTensorProduct} follows from a direct
  calculation of the partial trace. In order to check the commutation
  condition \eqref{eq:CommQSigma} we observe 
\[
\left[\mathbb{Q}_{Q},\RHObeta\oti\wh\sigma_{\beta\omega}\right]=\begin{pmatrix}0
  &
  \ee^{-\frac{\beta\omega}{2}}Q^*\RHObeta-\ee^{\frac{\beta\omega}{2}}\RHObeta
  Q^*\\
\ee^{\frac{\beta\omega}{2}}Q\RHObeta-\ee^{-\frac{\beta\omega}{2}}\RHObeta
Q & 0
\end{pmatrix}.
\]
By Lemma \ref{le:Comm.Rel} the relation $\left[Q,H\right]=\omega Q$ is
equivalent to $\ee^{\beta\omega/2} \RHObeta Q=
\ee^{-\beta\omega/2} Q \RHObeta$, so indeed the commutator $ \left[\mathbb{Q}_{Q},\RHObeta\oti\wh\sigma_{\beta\omega}\right]$
vanishes.
\end{proof}

As a last step we want to show that all DBC Lindbladians can be
written in the form \eqref{eq:Lind.CompForm} with a particular choice
of $\wh\sigma$, namely either $\wh\sigma=\RHObeta$ or
$\wh\sigma=\RHObeta^{-1}$. Therefore, we introduce a \emph{partial
  transpose} on $L(\HH_1\oti \HH_2)$ that acts on the $\HH_2$ part and
is associated with a fixed $\wh\sigma= \wh\sigma^*>0$, namely
\begin{equation}
  \label{eq:def.PartTransp}
  \ADJ_{\wh\sigma}: L(\HH_1\oti\HH_2) \to L(\HH_1\oti\HH_2) \ \text{
  is defined via } \ADJ_{\wh{\sigma}} (A\oti B) := A\oti B^{\top_{\wh{\sigma}}} 
\end{equation}
and by linearity on the whole space $L(\HH_1\oti \HH_2)$. To define the
$\wh\sigma$-transpose $B^{\top_{\hat{\sigma}}} $ we write $\wh\sigma =
\sum_{j=1}^J \sigma_j e_j\oti \ol e_j$, where $\set{e_j\in
  \HH_2}{j=1,..,J}$ is an orthonormal basis in $\HH_2$, set
$P_{jk}=e_j\oti \ol e_k$, and set $P_{jk}^{\top_{\wh\sigma}}= P_{kj}$,
which defines $B^{\top_{\hat{\sigma}}} $ by linearity. Clearly,
$\ADJ_{\wh\sigma}\big(\ADJ_{\wh\sigma} \bbQ\big) = \bbQ$ and
$\ADJ_{\wh\sigma} \bbQ$ is Hermitian if and only if $\bbQ$ is
Hermitian.

Moreover, we define a $\wh\sigma$-related operator $\YYY_{\wh\sigma}$
from $L(\HH_1\oti \HH_2)$ into itself via
\begin{equation}
  \label{eq:def.YSigma}
  \YYY_{\wh\sigma} \bbQ :=
  (\bm1_{\HH_1}\oti\wh\sigma^{1/2}) \,\big(\ADJ_{\wh\sigma}\bbQ\big) \,
  (\bm1_{\HH_1}\oti\wh\sigma^{1/2}),
\end{equation}
With $P_{jk}=e_j\oti \ol e_k$ from above, every $\bbQ\in L(\HH_1\oti
\HH_2)$ has the unique representation $\bbQ=\sum_{j,k=1}^J Q_{jk}\oti
P_{jk}$, and we obtain the formulas
\begin{equation}
  \label{eq:bfT.Sigma}
 \ADJ_{\wh\sigma}\bbQ=\sum_{j,k=1}^J Q_{kj}\oti P_{jk} \ \text{ and } \  
\YYY_{\wh\sigma}\bbQ=\sum_{j,k=1}^J \wt Q_{jk}\oti P_{jk} \ \text{
  with } \wt Q_{jk}= (\sigma_j\sigma_k)^{1/2} Q_{kj}. 
\end{equation}
 
The following result indicates how  the partial
$\wh\sigma$-transpose interacts with the commutator of two Hermitian operators. It shows that a representation
\eqref{eq:Lind.CompForm} for $\LLL$ in terms
of $\bbQ$ and $\wh\sigma$ is equivalent to a representation in terms
of $\YYY_{\wh\sigma}\bbQ$ and $\wh\sigma^{-1}$.

\begin{lemma}\label{le:Sigma.wtQ}
Consider $\wh\sigma \in \Herm{\HH_2}$ with
$\wh\sigma>0$. For $\bbQ \in L(\HH_1 \oti \HH_2)$ we
have
\begin{equation}
  \label{eq:wtQ.sigma}
  \ADJ_{\wh\sigma}\big( [ \bbQ, A\oti\wh\sigma]\big) = \big(\bm1_{\HH_1}\oti
  \wh\sigma^{1/2}\big)\, [\YYY_{\wh\sigma}\bbQ, A\oti \wh\sigma^{-1}] \,
\big(\bm1_{\HH_1}\oti \wh\sigma^{1/2}\big) \text{ \ for all }A\in L(\HH_1).
\end{equation}
In particular, we have an equivalence between the commutation relations
\begin{equation}
  \label{eq:Equi.CommR}
  [\bbQ, \RHObeta \oti \wh\sigma]=0 \quad \Longleftrightarrow \quad 
  [\YYY_{\wh\sigma}\bbQ, \RHObeta \oti \wh\sigma^{-1}]=0
\end{equation}
and the dual representations of the compact representation of Lindblad
operators: 
\begin{equation}
  \label{eq:Lind.twice}
  \trace_{\HH_2} \big[ \bbQ,[\bbQ, A\oti \sigma]\big] \ = \ 
 \trace_{\HH_2} \big[ \YYY_{\wh\sigma}\bbQ,[\YYY_{\wh\sigma}\bbQ,
    A\oti  \wh\sigma^{-1}]\big].
\end{equation}
\end{lemma}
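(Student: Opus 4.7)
The plan is to work throughout in an orthonormal eigenbasis $\{e_j\}_{j=1}^J$ of $\wh\sigma$, so that $\wh\sigma = \sum_j \sigma_j P_{jj}$ and every $\bbQ\in L(\HH_1\oti\HH_2)$ admits the component decomposition $\bbQ = \sum_{j,k} Q_{jk}\oti P_{jk}$ with Hermiticity $\bbQ=\bbQ^*$ translating into $Q_{kj}=Q_{jk}^*$. By the definitions in \eqref{eq:def.PartTransp}--\eqref{eq:bfT.Sigma}, $\ADJ_{\wh\sigma}$ just exchanges the two $\HH_2$-indices, and $\YYY_{\wh\sigma}\bbQ$ has components $\wt Q_{jk}=(\sigma_j\sigma_k)^{1/2}Q_{kj}$. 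Everything then reduces to a bookkeeping argument on the indices $(j,k)$.

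For \eqref{eq:wtQ.sigma}, I would compute both sides explicitly in this basis. The left-hand side yields
\[
 \ADJ_{\wh\sigma}\bigl[\bbQ, A\oti\wh\sigma\bigr]
 = \sum_{j,k}\bigl(\sigma_k Q_{jk}A - \sigma_j AQ_{jk}\bigr)\oti P_{kj},
\]
using $P_{jk}P_{ll}=\delta_{kl}P_{jl}$ and $P_{ll}P_{jk}=\delta_{lj}P_{jk}$ for the inner commutator and then swapping indices in the transpose. For the right-hand side, I would compute $[\YYY_{\wh\sigma}\bbQ, A\oti \wh\sigma^{-1}]$ in components, pick up the prefactors $(\sigma_j/\sigma_k)^{1/2}$ and $(\sigma_k/\sigma_j)^{1/2}$, then conjugate by $\bm1_{\HH_1}\oti\wh\sigma^{1/2}$ which multiplies the $(j,k)$-entry by $(\sigma_j\sigma_k)^{1/2}$. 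After relabelling $j\leftrightarrow k$ the two expressions agree. This step is straightforward but the main technical care lies in tracking the four different powers of $\sigma$ and not confusing $Q_{jk}$ with $\wt Q_{jk}$.

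The equivalence \eqref{eq:Equi.CommR} then follows immediately: setting $A=\RHObeta$ in \eqref{eq:wtQ.sigma}, the map $\ADJ_{\wh\sigma}$ is a linear involution and hence invertible, and the two-sided multiplication by $\bm1_{\HH_1}\oti\wh\sigma^{1/2}$ is invertible because $\wh\sigma>0$. So one side vanishes iff the other does.

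For the dual Lindblad representations \eqref{eq:Lind.twice}, rather than trying to push $\ADJ_{\wh\sigma}$ through the outer commutator (which it does not respect), I would use the already-established explicit Kraus form \eqref{eq:calL.bbQ.sigma}. Substituting the components $\wt Q_{jk}=(\sigma_j\sigma_k)^{1/2}Q_{kj}$ together with $\wh\sigma^{-1}$ in place of $\wh\sigma$, the prefactor $\sigma_l^{-1}$ combines with $(\sigma_k\sigma_l)^{1/2}\cdot(\sigma_k\sigma_l)^{1/2}=\sigma_k\sigma_l$ to give exactly $\sigma_k$, while Hermiticity $\wt Q_{kl}^*=(\sigma_k\sigma_l)^{1/2}Q_{kl}$ converts each Kraus term into $\sigma_k\bigl([Q_{lk}\rho,Q_{kl}]+[Q_{lk},\rho Q_{kl}]\bigr)$. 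A final swap $k\leftrightarrow l$ of the summation variables and the Hermiticity relation $Q_{lk}=Q_{kl}^*$ restore the original expression $\LLL_{\bbQ,\wh\sigma}\rho$. The main obstacle throughout is purely notational, namely keeping the conventions for $\YYY_{\wh\sigma}$ and for the index-swapping partial transpose consistent; once that is fixed, both identities fall out from direct substitution.
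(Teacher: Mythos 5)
Your proposal is correct and follows essentially the same route as the paper's proof: a componentwise computation in the eigenbasis of $\wh\sigma$ for \eqref{eq:wtQ.sigma} (the paper merely reduces to a single term $Q_{kl}\oti P_{kl}$ by linearity instead of carrying the full sum), invertibility of $\ADJ_{\wh\sigma}$ and of the conjugation by $\bm1_{\HH_1}\oti\wh\sigma^{1/2}$ for \eqref{eq:Equi.CommR}, and substitution of $\wt Q_{jk}=(\sigma_j\sigma_k)^{1/2}Q_{kj}$ and $1/\sigma_l$ into the Kraus form \eqref{eq:calL.bbQ.sigma} followed by the index swap for \eqref{eq:Lind.twice}. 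The bookkeeping of the $\sigma$-powers in your sketch checks out.
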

\begin{proof}  To simplify the notation we abbreviate $M_{\wh\sigma}:= \bm1_{\HH_1}\oti
  \sigma^{1/2}$.   

  For establishing \eqref{eq:wtQ.sigma}, we can use linearity such
  that it is sufficient to consider the case $\bbQ = Q_{kl}\oti
  P_{kl}$, which gives
\[
\ADJ_{\wh\sigma} \big[ \bbQ, A\oti\wh\sigma\big] = \sigma_l
(Q_{kl}A)\oti P_{lk} - \sigma_k (AQ_{kl})\oti P_{lk}.
\]
Moreover, using $\YYY_{\wh\sigma} \bbQ=(\sigma_k\sigma_l)^{1/2}
Q_{kl}\oti P_{lk}$, we find
\[
\big[\YYY_{\wh\sigma} \bbQ, A\oti\wh\sigma^{-1} \big]=
    (\sigma_l/\sigma_k)^{1/2} (Q_{kl} A)\oti P_{lk} -  
    (\sigma_k/\sigma_l)^{1/2} (A Q_{kl})\oti P_{lk}.
\] 
Since multiplying this expression from the left and from the right by
$M_{\wh\sigma} $ reduces to multiplying by $(\sigma_k\sigma_l)^{1/2}$,
we see that identity \eqref{eq:wtQ.sigma} is established.
 
Clearly, \eqref{eq:Equi.CommR} follows from \eqref{eq:wtQ.sigma} by
choosing $A=\RHObeta$ and using that $M_{\wh\sigma}$ is invertible.

Identity \eqref{eq:Lind.twice} follows by recalling the relation
\eqref{eq:calL.bbQ.sigma}, which gives
\[
-\trace_{\HH_2} \big[ \bbQ,[\bbQ, A\oti \sigma]\big]= \sum\nolimits_{k,l=1}^J
\sigma_l\big( [Q_{kl} A, Q_{kl}^*]+[Q_{kl},AQ_{kl}^*] \big). 
\]
Applying the same formula but with $\bbQ$ and $\wh\sigma$ replaced by
$\YYY_{\wh\sigma} \bbQ$ and $\wh\sigma^{-1}$ we simply have to replace $Q_{kl}
$ by $\wt Q_{kl}=(\sigma_k\sigma_l)^{1/2} Q_{lk}$ and the eigenvalues 
$\sigma_l$ by  $1/\sigma_l$. We then find the same result,
and \eqref{eq:Lind.twice} is proved.
\end{proof}

We now come to our representation of DBC Lindbladians with the choice
$\wh\sigma=\RHObeta$ or $\wh\sigma=\RHObeta^{-1}$, 
which are both useful and have a natural interpretation. In the first
case, we can use the fact that for all $\rho\in \RRR\subset L(\HH)$  the
tensor product $\rho\oti\RHObeta$ is again a density matrix, but now
on the Hilbert space $\HH\oti \HH$. In the second case the matrix
$\rho\oti \RHObeta^{-1}$ can be seen as a non-commutative counterpart
of the relative density $\varrho(x)=u(x)/U^\eq(x)$ in the
Fokker-Planck equation or of the relative density
$(p_n/w^\eq_n)_{n=1,..,N}$ for discrete Markov processes, see Section
\ref{su:ClassMP}.  Note also that the two commutator relations
\[
[\bbQ, \RHObeta\oti\RHObeta]=0 \quad \text{and} \quad [\wt\bbQ,
\RHObeta \oti \RHObeta^{-1}]=0
\]
look quite different, since $\RHObeta\oti \RHObeta$ has the
eigenvalues $\frac1{Z_\beta}\ee^{-\beta(\eps_j+\eps_k)}$ while
$\RHObeta\oti \RHObeta^{-1}$  has the
eigenvalues $\frac1{Z_\beta}\ee^{-\beta(\eps_j-\eps_k)}$. So the
latter appears closer to the relevant eigenpairs $(\omega,Q)\in
\mathfrak E(H)$. However, we will see in the following theorem that there is a one-to-one correspondence
between all possible $\bbQ$ and $\wt\bbQ$. Its proof is
based on the previous lemma.

\begin{theorem}[Compact representation of $\LLL$ with DBC]
\label{th:CompRepDBC}
Let $\LLL$ be a DBC Lindblad operator with respect to $\RHObeta \in
L(\HH)$. Then, there exists $\bbQ\in \Herm{ \HH \oti
  \HH_2} $ with $\HH_2 = \HH$ satisfying the commutator relation
$[\bbQ,\RHObeta \oti \RHObeta ]=0$ such that the representation
\[
\LLL \rho =-\trace_{\HH_{2}} \Big( \big[\mathbb{Q},
[\mathbb{Q},\rho\oti \RHObeta ]\big] \Big) 
\]
holds. Moreover, choosing $\wt\bbQ =\YYY_{\RHObeta} \bbQ$ as in Lemma
\ref{le:Sigma.wtQ}, we have the alternative representation
\[
\LLL \rho = - \trace_{\HH_2} \Big( \big[ \wt\bbQ, 
[\wt\bbQ,\rho\oti \RHObeta^{-1} ]\big] \Big)  .
\]
\end{theorem}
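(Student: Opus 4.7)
My plan is to combine the building-block decomposition from Proposition \ref{pr:BuildBlock}(c) with the compact representations of Corollary \ref{co:Biuld.Comp}, packing all building blocks into a single Hermitian $\bbQ\in \Herm{\HH\oti\HH}$ on the fixed auxiliary space $\HH_2=\HH$ with the natural choice $\wh\sigma=\RHObeta$. The key idea is that the eigenbasis $\{h_n\}_{n=1}^N$ of $H$ (equivalently of $\RHObeta$) gives me $N^2$ ``slots'' $P_{nm}:=\dyad{h_n}{h_m}$ on $\HH_2$, and I embed each Kraus-type building block into its own ordered pair of slots so that the single index sum in formula \eqref{eq:calL.bbQ.sigma} decouples into non-interacting contributions, one per building block.

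Concretely, for each block $\MM{\beta}{Q_m}$ with $[Q_m,H]=\omega_m Q_m$ and $\omega_m>0$ I pick an ordered pair $(k_m^+,k_m^-)$ with $\eps_{k_m^-}-\eps_{k_m^+}=\omega_m$ and set
\[
\bbQ_m := c_m\big(Q_m^*\oti P_{k_m^+ k_m^-}+Q_m\oti P_{k_m^- k_m^+}\big),\qquad
c_m:=\sqrt{Z_\beta}\,\ee^{\beta(\eps_{k_m^+}+\eps_{k_m^-})/4},
\]
so that the eigenvalues $r_{k_m^\pm}$ of $\wh\sigma=\RHObeta$ multiplied by $c_m^2$ reproduce precisely the Boltzmann factors $\ee^{\pm\beta\omega_m/2}$ appearing in the definition of $\MM{\beta}{Q_m}$. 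For each block $\SSS_{W_j}$ with $[W_j,H]=0$ I pick a distinct diagonal index $k_j^0$ and set $\bbQ_{W_j}:=r_{k_j^0}^{-1/2}\,W_j\oti P_{k_j^0 k_j^0}$; in degenerate situations where diagonal slots do not suffice, I instead use off-diagonal equal-eigenvalue slots $(k_j^0,\tilde k_j^0)$ with $\eps_{k_j^0}=\eps_{\tilde k_j^0}$. The number of admissible ordered pairs carrying frequency $\omega$ is exactly $d_\omega$, which matches the maximum number of building blocks of that type, and the total count $J{+}M\leq N^2{-}1$ leaves enough room to keep all chosen pairs pairwise disjoint. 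Thanks to the orientation convention $\eps_{k_m^-}>\eps_{k_m^+}$, disjointness of the ordered pairs automatically implies disjointness of the associated unordered pairs $\{k_m^+,k_m^-\}$.

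Setting $\bbQ:=\sum_m\bbQ_m+\sum_j\bbQ_{W_j}$, Hermiticity is manifest and the commutation relation $[\bbQ,\RHObeta\oti\RHObeta]=0$ follows term-by-term from Lemma \ref{le:Comm.Rel}: for each $m$-term the identity $\RHObeta Q_m^*=\ee^{\beta\omega_m}Q_m^*\RHObeta$ matches precisely the eigenvalue ratio $r_{k_m^-}/r_{k_m^+}=\ee^{\beta\omega_m}$ carried by $P_{k_m^+ k_m^-}$, while the $W_j$-terms commute trivially. Because the ordered pairs are disjoint, the sum \eqref{eq:calL.bbQ.sigma} splits into isolated contributions, each reproducing its $\MM{\beta}{Q_m}$ or $\SSS_{W_j}$ exactly as in Corollary \ref{co:Biuld.Comp}, with no cross-terms surviving; this yields the first representation. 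The second representation with $\wt\bbQ=\YYY_{\RHObeta}\bbQ$ and $\wh\sigma=\RHObeta^{-1}$ is then immediate from Lemma \ref{le:Sigma.wtQ}: equivalence \eqref{eq:Equi.CommR} transports the commutator condition, and identity \eqref{eq:Lind.twice} transports the double-commutator formula. The main obstacle I anticipate is the combinatorial slot-assignment step, particularly in degenerate situations where the commutant of $H$ forces some $\SSS$-blocks into off-diagonal equal-eigenvalue slots; once feasibility of the assignment is secured by the dimension counts above, the cross-term cancellation is automatic from the decoupled structure of \eqref{eq:calL.bbQ.sigma}.
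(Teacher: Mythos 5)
Your route is genuinely different from the paper's: you distribute the building blocks of Proposition \ref{pr:BuildBlock}(c) over pairwise distinct slots $P_{nm}=\dyad{h_n}{h_m}$ of the auxiliary space $\HH_2=\HH$ and exploit the fact that formula \eqref{eq:calL.bbQ.sigma} decouples over the matrix entries $Q_{kl}$ of $\bbQ$ (so no cross terms arise between blocks sitting in distinct unordered pairs $\{k,l\}$), whereas the paper avoids all combinatorics by taking a matrix square root of the rescaled KFGV coefficient matrix $\wt M_{ij,mn}$ and reading $\bbQ$ off from $\wt M^{1/2}$. Your treatment of the $\MM{\beta}{Q_m}$ blocks is correct: the normalization $c_m=\sqrt{Z_\beta}\,\ee^{\beta(\eps_{k_m^+}+\eps_{k_m^-})/4}$ does reproduce the weights $\ee^{\pm\beta\omega_m/2}$, the commutator condition follows termwise from Lemma \ref{le:Comm.Rel}, and for each $\omega>0$ the number of blocks of frequency $\omega$ delivered by the \emph{proof} of Proposition \ref{pr:BuildBlock}(c) is at most $d_\omega$, which equals the number of unordered index pairs $\{n,m\}$ with $|\eps_n{-}\eps_m|=\omega$. (Note that this sharper per-frequency bound is not the statement $M\le\sum_{\omega\neq0}d_\omega$ of the proposition; you must extract it from its proof, where the pairs $(k,j)$ and $(j,k)$ with $X_{jk}=X_{kj}^*$ merge into a single $\MM{\beta}{Q_{kj}}$.)

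The genuine gap is the feasibility count for the $\SSS_{W_j}$ blocks. The number of independent slots is not $N^2-1$ but $N+\binom{N}{2}$ (diagonal slots plus unordered off-diagonal pairs, since Hermiticity of $\bbQ$ ties $Q_{ln}$ to $Q_{nl}^*$). Under your rule of one Hermitian block per slot, the capacity available to the $\SSS$-blocks is $N+\sum_\eps\binom{m_\eps}{2}=\tfrac12(d_0+N)$, where $m_\eps$ are the eigenvalue multiplicities of $H$, while $J$ can be as large as $d_0-1$; the assignment therefore fails whenever $d_0>N+2$. Concretely, for $H=\bm1_\HH$ and $N=3$ one has $d_0=9$, up to $J=8$ independent blocks $\SSS_{W_j}$ (e.g.\ $W_j$ the Gell-Mann matrices with distinct positive weights, whose sum cannot be rewritten with fewer blocks), but only $6$ slots. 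The repair is to observe that an off-diagonal equal-energy slot $\{n,l\}$ ($\eps_n=\eps_l$, $\sigma_n=\sigma_l$) can carry an arbitrary complex matrix $Q_{nl}=W_1+\ii W_2$: expanding the sesquilinear form in \eqref{eq:calL.bbQ.sigma} for $Q_{nl}$ and $Q_{ln}=Q_{nl}^*$ shows that the mixed terms in $(W_1,W_2)$ cancel and the slot contributes $2\sigma_n(\SSS_{W_1}+\SSS_{W_2})$, i.e.\ \emph{two} independent Hermitian blocks. This raises the capacity to $N+2\sum_\eps\binom{m_\eps}{2}=d_0\ge J+1$ and closes the gap. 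With that amendment your argument is complete; the final step via $\YYY_{\RHObeta}$ and Lemma \ref{le:Sigma.wtQ} coincides with the paper's.
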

\begin{proof}
By \cite{KFGV77QDBK} every DBC Lindblad operator can be written in the form
\[
\LLL ( \rho ) =\sum_{ij,mn}M_{ij,mn} (
\left[P_{ij}\rho,P_{mn}^*\right]+\left[P_{ij},\rho P_{mn}^*\right] )  
\]
with $P_{ij}=h_{i}\oti\overline{h}_{j}$, where $h_{i}$ are the
eigenvectors of 
$\RHObeta$, and $M_{ij,mn}$ satisfy
\begin{align*}
\text{(i) }& \overline{M}_{mn,ij}=M_{ij,mn}, \qquad \text{(ii) } 
\eps_{j}-\eps_{i}\neq\eps_{n}-\eps_{m}\
\Longrightarrow \  M_{ij,mn}=0, \\
\text{(iii) }& M_{nm,ji}=\ee^{-\beta\omega}M_{ij,mn} \text{ with } 
 \omega=\eps_{j}-\eps_{i}=\eps_{n}-\eps_{m}.
\end{align*}
We construct the Hermitian operator $\bbQ$ in the form 
$\mathbb{Q}=\sum_{i,j,m,n}A_{ij,kl}P_{ij}\oti P_{kl}^*$. Hence, 
\begin{align*} 
[\mathbb{Q},\RHObeta\oti\RHObeta]=0 \quad &\Longleftrightarrow
\quad \Big( A_{ij,kl}=0 \text{ whenever }
\eps_{j}{-}\eps_{i}\neq\eps_{l}{-}\eps_{k} \Big),  
\\
\mathbb{Q}^*=\mathbb{Q} \quad &\Longleftrightarrow
\quad  \overline{A}_{ij,kl}=A_{ji,lk}.
\end{align*}
To this end we define
\[
\tilde{M}_{ij,mn}=Z_{\beta}^{2}\, \ee^{\beta(\eps_{i}+\eps_{m})/2}M_{ij,mn}.
\] 
Then $\tilde{M}_{ij,mn}=\tilde{M}_{nm,ji}=\overline{\tilde{M}}_{ji,mn}$.
This symmetry property remains true for all powers of $\tilde{M}$
and thus for $\tilde{M}^{\frac{1}{2}}$ as well. Define $A_{ij,kl}=\rme^{\beta\frac{\eps_{k}-\eps_{i}}{2}} ( \tilde{M}^{\frac{1}{2}} ) _{ij,kl}.$
Then,  
\begin{align*}
A_{ji,lk} & =\rme^{\beta\frac{\eps_{l}-\eps_{j}}{2}} (
\tilde{M}^{\frac{1}{2}} ) _{ji,lk}
%\\ & 
  =\rme^{\beta\frac{\eps_{l}-\eps_{j}}{2}}\overline{ ( \tilde{M}^{\frac{1}{2}} ) _{ij,kl}}\\
 & =\rme^{\beta\frac{\eps_{l}-\eps_{k}}{2}-\beta\frac{\eps_{j}-\eps_{i}}{2}}\overline{A_{ij,kl}}=\rme^{\beta\frac{\omega-\omega}{2}}\overline{A_{ij,kl}}
\end{align*}
Thus the corresponding $\mathbb{Q}$ is Hermitian and $A_{ij,kl}=0\quad\text{if }\eps_{j}-\eps_{i}\neq\eps_{l}-\eps_{k}$
follows from condition (2) on $M_{ij,mn}$. Finally 
\begin{align*}
&\frac{1}{Z_{\beta}^{2}}\sum_{k,l}A_{ij,kl}\overline{A_{mn,kl}}\rme^{-\beta\eps_{k}}  =\frac{1}{Z_{\beta}^{2}}\sum_{k,l}\rme^{\beta\frac{\eps_{k}-\eps_{i}}{2}} ( \tilde{M}^{\frac{1}{2}} ) _{ij,kl}\cdot\rme^{\beta\frac{\eps_{k}-\eps_{m}}{2}}\overline{ ( \tilde{M}^{\frac{1}{2}} ) _{mn,kl}}\rme^{-\beta\eps_{k}}\\
 &
 =\rme^{-\beta\frac{\eps_{i}+\eps_{m}}{2}}\frac{1}{Z_{\beta}^{2}}\sum_{k,l}
 ( \tilde{M}^{\frac{1}{2}} ) _{ij,kl} ( \tilde{M}^{\frac{1}{2}} ) _{kl,mn} 
%\\ &
 =\rme^{-\beta\frac{\eps_{i}+\eps_{m}}{2}}\frac{1}{Z_{\beta}^{2}}\tilde{M}_{ij,mn}=H_{ij,mn}
\end{align*}
which means that
\[
\LLL\rho=\sum_{ij,mn}M_{ij,mn} \big(
[P_{ij}\rho,P_{mn}^*]+[P_{ij},\rho P_{mn}^*] \big) =
-\trace_{\HH_{2}} \Big( \big[\mathbb{Q},
[\mathbb{Q},\rho\oti\RHObeta] \big]  \Big)  .
\]
This establishes the first representation based on $\bbQ$ and
$\RHObeta$. The second representation involving $\wt\bbQ_{\RHObeta}$
and $\RHObeta^{-1}$ follows simply by
applying \eqref{eq:Lind.twice} to the case $\sigma=\RHObeta$.
 \end{proof}

\subsection{Examples of Lindblad operators and equations}
\label{su:Lind.Exa}

Here we give two elementary examples to highlight the structures and
to come back to them in later sections. We refer to
\cite{BaNaTh08AQSG1,BauNar08AQSG2,BauNar12SSSQ,AlbLia14SCQL} for general
discussions of the dynamics of Lindblad equations with or without
DBC. Even without explicitly mentioning the DBC, it was already
observed in \cite[Thm.\,4.4]{Davi74MME} that under the additional
assumption that the eigenvalues of $H$ are all simple, the diagonal
elements $\rho_{jj}(t)$, $j=1,...,N$, of $\rho(t)$ with respect to the
eigenbasis of $H$ evolve according to a 
classical Markov process, see also \cite[Eqn.\,(21)]{Alic76DBCN}.

\subsubsection{The Bloch sphere for the case $N=2$}
\label{suu:Bloch}

For the case $\HH=\C^2$ and $H=\mathrm{diag}(\eps_1,\eps_2)$ with
$\eps_1\neq \eps_2$ we characterize all DBC Lindbladians $\LLL$ with
respect to $\RHObeta$. For this we use the Pauli matrices
\[
\sigma_{1}=\begin{pmatrix}0 & 1\\
1 & 0
\end{pmatrix},\quad\sigma_{2}=\begin{pmatrix}0 & -\ii\\
\ii & 0
\end{pmatrix},\quad\sigma_{3}=\begin{pmatrix}1 & 0\\
0 & -1
\end{pmatrix}.
\]
and the operators $\sigma_\pm = \frac{1}{2}(\sigma_1 \pm \sigma_2)$.  
Then, all $\LLL$ satisfying the DBC have the form
\begin{align}
  \LLL_{d} ( \rho )  & =\frac{\gamma}{2}\Big( \ee^{-\beta
      \eps_{1}} \big( \left[\sigma_{+}\rho,\sigma_{-}\right] {+}
    \left[\sigma_{+},\rho\sigma_{-}\right] \big)  
  +\ee^{- \beta \eps_{2}} \big( \left[\sigma_{-}\rho,\sigma_{+}\right]
  {+} \left[\sigma_{-},\rho\sigma_{+}\right] \big) \Big) \nonumber \\
  & \quad+\frac{\delta}{2} \big( \left[\sigma_{3}\rho,\sigma_{3}\right]
  {+} \left[\sigma_{3},\rho\sigma_{3}\right] \big), 
  \label{eq:DissipativeLindbladGenerator}
\end{align}
which are simply two building blocks with $Q=\sigma_+$,
$Q^*=\sigma_-$, and $W=\sigma_3$.

It is more convenient to write the above generator in terms of
real-valued Bloch coordinates $\mathbf{a}\in \mathbb{R}^3$ via $\rho (
\mathbf{a} ) = \frac{1}{2} \mathrm{id}+\frac{1}{2} \sum_{i=1}^3 a_i
\cdot \sigma_i$. The positivity $\rho \geq 0$ is equivalent to $\left|
  \mathbf{a} \right| \leq 1$. The Lindblad equation
$\dot{\rho}=\LLL_d ( \rho ) $ reads $\dot{\mathbf{a}}=R
\mathbf{a} + \mathbf{k}$ in Bloch coordinates with
\begin{equation}
R=\begin{pmatrix}- ( \gamma {+}2\delta )  & 0 & 0\\
0 & - ( \gamma{+}2\delta )  & 0\\
0 & 0 & -2\gamma
\end{pmatrix},\quad\mathbf{k}=\begin{pmatrix}0\\
0\\
2\gamma\overline{w}
\end{pmatrix}\label{eq:BlochRepresentationDetailedBalance}
\end{equation}
and with $\overline{w_\beta}= \ee^{-\beta \eps_1}- \ee^{-\beta
  \eps_2}$. This is the dissipative part of the well-known
phenomenological Bloch equations. The longitudinal and transverse
relaxation times $T_{1}$ and $T_{2}$ are given by
$T_{1}=\frac{1}{2\gamma}$ and $T_{2}=\frac{1}{\gamma+2\delta}.$ They
satisfy the inequality $T_{1}\geq\frac{1}{2}T_{2}$.

\subsubsection{A nontrivial case}
\label{suu:NontrivialCase}

In this example we give a case where an eigenvalue $\omega \neq
0$  of $A\mapsto
[A,H]$ is not simple, which allows for a nontrivial coupling
between four energy levels.

In $\HH=\C^4$ we choose $H=\sum_{i=1}^{4}\eps_{i}h_{i}$ with
$\eps_{4} = 10$, $\eps_{3} = 9$, $\eps_{2} = 2$, $\eps_{1} = 1$, and
$Q=h_{1}\oti\overline{h}_{2}+h_{3}\oti\overline{h}_{4}$. Using
$\eps_4 {-} \eps_3 = \eps_2 {-}\eps_1=1$ we have $(1,Q)\in \mathfrak
E(H)$.  Then, by Proposition \ref{pr:BuildBlock}(b) we see that
\begin{equation}
\dot{\rho}=\MM{\beta}{Q}\rho=\ee^{\frac{\beta}{2}} (
\left[QA,Q^*\right]+\left[Q,AQ^*\right] ) +\ee^{-\frac{\beta}{2}} (
\left[Q^*A,Q\right]+\left[Q^*,AQ\right]
) \label{eq:ExampleDegenerateEnergyDifferences} 
\end{equation}
satisfies the DBC with respect to $\RHObeta$.
Let us rewrite the above equation in coordinates. The diagonal elements
form a Markov chain
\begin{align*}
\dot{\rho}_{11} & =-\ee^{-\frac{\beta}{2}}\rho_{11}+\ee^{\frac{\beta}{2}}\rho_{22} & \quad\dot{\rho}_{33} & =-\ee^{-\frac{\beta}{2}}\rho_{33}+\ee^{\frac{\beta}{2}}\rho_{44}\\
\dot{\rho}_{22} & =+\ee^{-\frac{\beta}{2}}\rho_{11}-\ee^{\frac{\beta}{2}}\rho_{22} & \quad\dot{\rho}_{44} & =+\ee^{-\frac{\beta}{2}}\rho_{33}-\ee^{\frac{\beta}{2}}\rho_{44}
\end{align*}
and the evolution of the off-diagonal elements is given by 
\[
\dot{\rho}_{13}=-\ee^{-\frac{\beta}{2}}\rho_{13}+\ee^{\frac{\beta}{2}}\rho_{24}\quad\dot{\rho}_{24}=+\ee^{-\frac{\beta}{2}}\rho_{13}-\ee^{\frac{\beta}{2}}\rho_{24}
\]
and $\dot{\rho}_{kl}=-\cosh\frac{\beta}{2}\,\rho_{kl}$ for $ ( k,l )
\notin\left\{ ( 1,3 ) , ( 3,1 ) , ( 2,4 ) , ( 4,2 ) \right\} .$ This
example shows, that non-diagonal elements (here $\rho_{13}$ and
$\rho_{24}$) can couple, if the energy differences are the same.  Note
that $\rho_{11}$ and $\rho_{22}$ are decoupled from $\rho_{33}$ and
$\rho_{44}$. Thus $\RHObeta$ is not the only equilibrium of
\eqref{eq:ExampleDegenerateEnergyDifferences}. This is also the reason
why $\rho_{13}$ and $\rho_{24}$ do not decay to $0$, contrary to the
other off-diagonal elements. 

In the presence of symmetries or conserved quantities even more
complicated situations are possible, see \cite{AlbLia14SCQL}.

\section{An entropic gradient structure for the Lindblad equation}
\label{se:EntrGSQME}

\subsection{Entropic gradient structures for classical Markov processes}
\label{su:ClassMP}
The entropic gradient structure for master equations for classical
Markov processes goes back to the seminal work
\cite{JoKiOt97FEFP,JoKiOt98VFFP}, where the Fokker-Planck equation
\[
\dot u = \div \Big( a(x) \big(\nabla u + u \nabla V(x)\big) \Big)
\]
for the probability density $u(t,x)\geq 0$ was written as a gradient
system with respect to the Wasserstein 
distance. Here $a(x)\in \R^{d\ti d}$ is a symmetric and positive
definite diffusion matrix. The gradient structure has the form 
\[
\dot u = -\bbK_\rmW(u) \,\rmD\calF(u),
\]
where $\calF$ is the free energy (or relative entropy with respect to
the equilibrium density $U^\eq(x)=\ee^{-V(x)}$) and
$\bbK_\rmW$ is the Onsager operator associated with the Wasserstein
distance, namely 
\begin{align*}
&\calF(u) = \int_{\R^d} \Big(u(x) \log u(x) + V(x) u(x) \Big)\dd x =
\int_{\R^d} u(x) \log\big( u(x)/U^\eq(x)\big) \dd x ,\\
&\bbK_\rmW(u) \xi = - \div\big( u\, a(x) \,\nabla \xi \big). 
\end{align*}  

A related gradient structure for time-continuous Markov processes on a
discrete state space $\{1,\ldots,N\}$ was found independently by the
three groups \cite{Maas11GFEF,ErbMaa12RCFM}, \cite{CHLZ12FPEF}, and
\cite{Miel11GSRD,Miel13GCRE}. In this case the Kolmogorov forward
equation  for the probability vector
$p(t)\in \set{(p_1,\ldots,p_N)\in [0,1]^N}{\sum_{n=1}^N p_n=1 }$ is
the linear system
\[
\dot p = L p , \quad \text{where } L_{nm}\geq 0 \text{ for }n\neq m
\text{ and } L^\top(1,1,..,1)^\top=0.
\]
The detailed balance condition for $L$ and the equilibrium $w^\eq$ reads 
\[
Lw^\eq =0 \text{ with }w^\eq_n>0 \quad \text{and} 
\quad \kappa_{nm}:=L_{nm}w^\eq_m=L_{mn} w^\eq_n \text{ for all }n,m\in \{1,..,N\}.
\] 
The entropic gradient structure is defined in terms of the relative
entropy $\calE(p)=\calH(p|w^\eq)$ and the Onsager operator
$\bbK_\rmM(p)$ with 
\[
\calE(p)=\sum_{n=1}^N p_n\log\big( p_n/w^\eq_n\big)
\text{ and } 
\bbK_\rmM(p)= \sum_{m>n} \kappa_{nm} \Lambda\big(\frac{p_n}{w_n^\eq},
\frac{p_m}{w^\eq_m} \big) \Big(e_n{-}e_m\Big)\oti \Big(e_n{-}e_m\Big),
\]
where $\Lambda(a,b)\geq 0$ denotes the \emph{logarithmic mean of $a$
  and $b$}:
\begin{equation}
  \label{eq:LogMean}
  \Lambda(a,b) = \int_0^1 a^s \,b^{1-s} \dd s = \frac{a\;-\; b}{\log
    a{-}\log b} . 
\end{equation}
Note that using $\rmD\calE(p)=\big(\log p_n - \log
w^\eq_n\big)_{n=1,..,N}$, the relation $\Lambda(a,b)\big(\log a - \log
b\big) = a- b$, and the detailed balance condition easily yield the
identity $Lp = - \bbK_\rmM(p)\rmD\calE(p)$.

\subsection{The Kubo-Mori operator $\CCC_\rho$ and the generalization
  $\DDD^\alpha_\rho$} 
 \label{su:KuboMori}

 The development of an analogous gradient structure for the
 dissipative part of the Lindblad equation was
 less successful. The attempts in
 \cite{Otti10NTQM,Otti11GTDQ,Miel13DQMU,Miel15TCQM} produced nonlinear
 terms, unless the Hamiltonian $H$ is a multiple of $\bm1_\HH$ (as in
 \cite{CarMaa14A2WM} or more generally only the building blocks $\SSS_W$
 in Proposition \ref{pr:BuildBlock} are used).  All of these works 
 involve the Kubo-Mori operator $\CCC_\rho:L(\HH) \to L(\HH)$ as a
 generalization of the 
 multiplication with $u$ in the Fokker-Planck equation and the
 logarithmic mean $\Lambda(\frac{p_n}{w^\eq_n},
 \frac{p_m}{w^\eq_m})$. It is defined via
\[
\CCC_\rho A:= \int_0^1 \rho^s\,A\,\rho^{1-s} \dd s \ = \ 
\sum_{n,m=1}^N \Lambda(r_n,r_m)\,\SP{\psi_n}{A\psi_m}\,
\dyad{\psi_n}{\psi}_m ,
\]
if $\rho$ is given by  \eqref{eq:RepresRho}.
\begin{comment}
Moreover, using $\rho\in \RRR$ we have the formulas $(\CCC_\rho A)^* =
\CCC_\rho (A^*)$ and $\SPP{A}{\CCC_\rho A}\geq 0$. We refer to
\cite[Sec.\,21.4.1]{Miel13DQMU} for more details and background on
$\CCC_\rho$, and in particular to \cite[Prop.\,21.2]{Miel13DQMU} for a
continuity result of the mapping $\RRR\ni \rho \mapsto \CCC_\rho$,
which holds even in the infinite dimensional setting.
\end{comment}

One major property of $\CCC_\rho$ is that it satisfies the analog of
the identities 
\begin{equation}
  \label{eq:LinRHS}
   u \nabla \log\big( u/\ee^{-V}\big)  = \nabla u + u \nabla V \quad
 \text{and} \quad \Lambda(a,b)(\log  a{-}\log b) = a - b
\end{equation}
for the classical Markov setting. Note that the right-hand sides are
linear in $u$ and $(a,b)$, respectively. 
For all $Q\in L(\HH)$ the operator $\CCC_\rho$ satisfies a similar ``miracle
identity'', namely  
\begin{equation}
  \label{eq:CCC-mira}
  \CCC_\rho[Q,\log \rho]= [Q,\rho],
\end{equation}
see \cite{Otti10NTQM,Otti11GTDQ,Miel13DQMU,Miel15TCQM}. 
We will provide a proof of a more general version of this identity in
Proposition \ref{pr:genMira}.  

This relation works well (see \cite{CarMaa14A2WM}) 
if we are using the total entropy 
$\calS_0(\rho)=- \trace(\rho\log \rho)$ which has the derivative 
$\rmD\calS(\rho)=-\log \rho$ (up to an identity which is irrelevant
since $\trace \rho =1$). 
However, for relative entropies of the form
$\calS_\beta(\rho)=-\trace(\beta H \rho + \rho \log\rho)$ we have 
\[
\rmD\calS_\beta( \rho)=-\beta H - \log \rho \quad \leadsto 
\quad \CCC_\rho\,[Q,\rmD\calS_\beta( \rho)] =-[Q,\rho] -
\CCC_\rho[Q,H].
\]
Thus, the right-hand side is no longer linear, unless $Q$ commutes
with $H$. The Fokker--Planck equation studied in \cite{CarMaa14A2WM}
has $H=0$ and hence falls into this class, i.e. the Fokker-Planck
equation is indeed a linear Lindblad equation. However, the models
studied in \cite{Otti10NTQM,Otti11GTDQ,Miel13DQMU,Miel15TCQM} include
the nonsmooth term  $\CCC_\rho[Q,H]$, which is continuous but not
H\"older continuous, so the existence theory developed in
\cite[Sec.\,21.6]{Miel13DQMU} is nontrivial and uniqueness of
solutions couldn't be established. 
\medskip

We now show that it is possible to use variants of $\CCC_\rho$ such
that for $(\omega,Q) \in \mathfrak E(H)$ we obtain a suitable
counterpart of \eqref{eq:LinRHS}. Indeed we will be able to show that
all DBC Lindbladians can be written in terms of these variants of
$\CCC_\rho$. The variant of $\CCC_\rho$ we are using is defined in
terms of the tilted operator $\DDD^\alpha_\rho$, where $\alpha \in \R$
will be related to an energy difference:
\begin{equation}
\label{eq:Def-D}
\DDD^\alpha_\rho A := \ee^{-\alpha/2}\int_0^1 \ee^{s \alpha} \rho^s A
\,\rho^{1-s} \dd s \ = \ \sum_{n,k=1}^N \Lambda\big( \ee^{\alpha/2} r_n\,,\,
\ee^{-\alpha/2}r_k\big) \: \SP{\psi_n}{A|\psi_k}\: \psi_n\oti \ol\psi_k,
\end{equation}
if $\rho$ is given by  \eqref{eq:RepresRho}. Again the logarithmic
mean $\Lambda(a,b)$ from \eqref{eq:LogMean} is involved, but now
weighted with $\ee^{\pm \alpha/2}$.

The generalized miracle identity is given in the following result
\eqref{eq:genMiracle},
which again shows that applying $\DDD^\alpha_\rho$ to a commutator
with $\log \rho$ plus a suitable correction provides a linear
expression, i.e.\ the nonlinearities involved in $\log \rho$ and
$\DDD^\alpha_\rho$ cancel each other. 

\begin{proposition}\label{pr:genMira}
For all $\alpha\in\R$, $A, Q\in L(\HH ) $, and $\rho \in \RRR_N$ we
have the identities
\begin{align}
&\label{eq:DDD.adj} 
\big(\DDD^\alpha_\rho\big)^*= \DDD^{\alpha}_\rho \quad \text{and}
\quad \big(\DDD^\alpha_\rho A^*\big)^* = \DDD^{-\alpha}_\rho A, \\
&\label{eq:DDD.pos} \SPP{A}{\DDD^\alpha_\rho A} \geq 0, 
\\
\label{eq:genMiracle}
 & \DDD^\alpha_\rho\big( [Q,\log \rho]-\alpha Q\big) = \ee^{-\alpha/2}Q\rho -
  \ee^{\alpha/2} \rho Q. 
\end{align}
\end{proposition}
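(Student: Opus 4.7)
The plan is to work entirely in the spectral basis of $\rho$, where all three identities reduce to elementary scalar calculations, with the logarithmic mean $\Lambda$ from \eqref{eq:LogMean} doing the essential work in the miracle identity.

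First I would fix the representation \eqref{eq:RepresRho}, $\rho=\sum_n r_n\dyad{\psi_n}{\psi_n}$, so that $\rho^s\dyad{\psi_n}{\psi_m}\rho^{1-s}=r_n^s r_m^{1-s}\dyad{\psi_n}{\psi_m}$. Introducing the shorthand $a_{nm}:=\ee^{\alpha/2}r_n$ and $b_{nm}:=\ee^{-\alpha/2}r_m$, the prefactor $\ee^{-\alpha/2}\ee^{s\alpha}$ in the definition of $\DDD^\alpha_\rho$ absorbs cleanly into $a_{nm}^s b_{nm}^{1-s}$, so that
\[
\DDD^\alpha_\rho\dyad{\psi_n}{\psi_m} = \Lambda(a_{nm},b_{nm})\,\dyad{\psi_n}{\psi_m},
\]
recovering the second form of \eqref{eq:Def-D}. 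From this representation, \eqref{eq:DDD.adj} is immediate: self-adjointness follows because the scalar factor $\Lambda(a_{nm},b_{nm})$ only depends on $(n,m)$ and swapping $n\leftrightarrow m$ corresponds to transposing the matrix element of the Hilbert–Schmidt pairing; the $*$-property $(\DDD^\alpha_\rho A^*)^*=\DDD^{-\alpha}_\rho A$ follows from the substitution $s\mapsto 1-s$ in the integral, which flips $a_{nm}\leftrightarrow b_{nm}$ and hence the sign of $\alpha$.

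For the positivity \eqref{eq:DDD.pos}, the cleanest route is to avoid diagonalizing and instead write
\[
\SPP{A}{\DDD^\alpha_\rho A} = \ee^{-\alpha/2}\int_0^1\ee^{s\alpha}\,\trace\!\big((\rho^{s/2}A\rho^{(1-s)/2})^*\,\rho^{s/2}A\rho^{(1-s)/2}\big)\dd s,
\]
so the integrand is a nonnegative weight times a Hilbert–Schmidt norm squared.

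The substantive step is the generalized miracle identity \eqref{eq:genMiracle}. Expanding $Q=\sum_{n,m}Q_{nm}\dyad{\psi_n}{\psi_m}$, the matrix entries of $[Q,\log\rho]-\alpha Q$ in the spectral basis are $Q_{nm}(\log r_m-\log r_n-\alpha)=Q_{nm}(\log b_{nm}-\log a_{nm})$. Applying $\DDD^\alpha_\rho$ attaches the scalar $\Lambda(a_{nm},b_{nm})$, after which the logarithmic-mean identity $\Lambda(a,b)(\log a-\log b)=a-b$ from \eqref{eq:LogMean} collapses the product to $b_{nm}-a_{nm}=\ee^{-\alpha/2}r_m-\ee^{\alpha/2}r_n$. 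Reassembling the sum gives $\ee^{-\alpha/2}Q\rho-\ee^{\alpha/2}\rho Q$, as desired. The only real subtlety is bookkeeping the $\ee^{\pm\alpha/2}$ factors so that the shifts in $\alpha$ combine correctly with the arguments of $\Lambda$; everything else is linear and spectral. A minor technical point is the case $r_n=0$, which is harmless because the definition of $\DDD^\alpha_\rho$ through the integral avoids $\log\rho$ entirely, and the identity then extends by density to all $\rho\in\RRR_N$.
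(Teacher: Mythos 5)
Your proof is correct, but for the key identity \eqref{eq:genMiracle} it follows a genuinely different route from the paper. The paper works at the operator level: writing $\Lambda=\log\rho$, it recognizes the integrand of $\DDD^\alpha_\rho\big([Q,\log\rho]-\alpha Q\big)$ as the total derivative $-\frac{\rmd}{\rmd s}\big(\ee^{(\Lambda+\alpha I)s}Q\,\ee^{(1-s)\Lambda}\big)$ and evaluates the boundary terms, never diagonalizing $\rho$. You instead expand everything in the spectral basis of $\rho$ and reduce \eqref{eq:genMiracle} to the scalar identity $\Lambda(a,b)(\log a-\log b)=a-b$ applied entrywise with $a_{nm}=\ee^{\alpha/2}r_n$, $b_{nm}=\ee^{-\alpha/2}r_m$; your bookkeeping of the $\ee^{\pm\alpha/2}$ shifts is accurate. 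Your route makes the analogy with the classical Markov-chain computation in Section \ref{su:ClassMP} completely transparent, which is conceptually appealing; the paper's basis-free computation is the one that transfers more directly to infinite dimensions, as it avoids the spectral resolution entirely. For \eqref{eq:DDD.pos} your symmetric factorization $\rho^{s/2}A\rho^{(1-s)/2}$ is a small improvement on the paper's, which uses $\rho^{s/2}A\rho^{s/2}$ and the extra fact that $\trace(PC)\ge 0$ for $P,C\ge 0$ with $C=\rho^{1-2s}$. Two minor caveats: your justification of $(\DDD^\alpha_\rho)^*=\DDD^\alpha_\rho$ is better phrased as ``$\DDD^\alpha_\rho$ is diagonal in the basis $\dyad{\psi_n}{\psi_m}$ of $L(\HH)$ with real eigenvalues $\Lambda(a_{nm},b_{nm})$'' rather than via a swap of $n$ and $m$ (the swap is what you need for the second identity in \eqref{eq:DDD.adj}, where your $s\mapsto 1-s$ argument is fine); and the appeal to ``density'' for $r_n=0$ is not really meaningful since the left-hand side of \eqref{eq:genMiracle} contains $\log\rho$ and is only defined for $\rho>0$ — but the paper's proof tacitly makes the same restriction, so this is not a gap relative to it.
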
 
From the proof it is clear, that the generalized miracle identity
\eqref{eq:genMiracle} also holds for $\alpha \in \C$, but our use will
be restricted to real-valued energy levels. \smallskip 
\\
\begin{proof} The relations in \eqref{eq:DDD.adj} follow directly from
  the definition.  For \eqref{eq:DDD.pos} we use
\begin{align*}
  \SPP{A}{\DDD^\alpha_\rho A}&= \ee^{-\alpha/2}\int_0^1 \ee^{s\alpha}
  \SPP{A}{\rho^s A \rho^{1-s} } \dd s\\& = \int_0^1
  \ee^{\alpha(s-1/2)} \trace\big( (\rho^{s/2}A\rho^{s/2})^*
  \rho^{s/2}A\rho^{s/2} \rho^{1-2s}\big) \dd s \geq 0,
\end{align*}
since the integrand is non-negative for all $s\in [0,1]$. 

For \eqref{eq:genMiracle}, we generalize the simple proof of
\eqref{eq:CCC-mira} from \cite[Prop.\,21.1]{Miel13DQMU}, write
$\Lambda = \log \rho $, and use the fact that the integrand defining
$\DDD^\alpha_\rho$ can be written as a total derivative with respect
to $s\in [0,1]$:
\begin{align*}
&\DDD^\alpha_\rho\big([ Q,\log \rho]-\alpha Q\big) 
 = \ee^{-\alpha/2} \int_0^1 \ee^{\alpha s} \ee^{s\Lambda} \big(
 Q\Lambda- \Lambda Q - \alpha \Lambda\big) \ee^{(1{-}s) \Lambda} \dd s \\
& = - \ee^{-\alpha/2} \int_0^1 \Big( \ee^{\alpha s} \ee^{s\Lambda}
(\Lambda{+}\alpha I)Q\ee^{(1{-}s) \Lambda}  +\ee^{\alpha s}
\ee^{s\Lambda} Q\big({-}\Lambda\big)  \ee^{(1{-}s) \Lambda} \Big)\dd s \\
&=- \ee^{-\alpha/2} \int_0^1 \frac{\rmd}{\rmd s} \Big(\ee^{(\Lambda{+}\alpha I)s} 
   Q\ee^{(1-s)\Lambda} \Big) \dd s \ = \ \ee^{-\alpha/2} \big(Q\ee^\Lambda -
\ee^{\Lambda{+}\alpha I}Q\big)\\ 
&=  \ee^{-\alpha/2} Q\rho- \ee^{\alpha/2} \rho Q  .
\end{align*}
This is the desired result.
\end{proof}
The following result follows immediately from the above
proposition by setting $\alpha=-\beta\omega$. It will be the basis for
our construction of the entropic gradient structure.

\begin{corollary}\label{co:DDD-lin}
Assume $\beta>0$ and that $(\omega,H)\in \mathfrak E(H)$, that is
$[Q,H]=\omega Q$, then  
\begin{equation}
  \label{eq:DDlinRHS}
  \DDD^{-\beta\omega}_\rho \,\big[\ Q\,,\,\log \rho+\beta H\;\big] 
  \ = \ \ee^{\beta\omega/2}Q\rho -  \ee^{-\beta\omega/2} \rho Q. 
\end{equation}
\end{corollary}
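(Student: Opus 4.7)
The plan is to derive this directly from the generalized miracle identity \eqref{eq:genMiracle} of Proposition \ref{pr:genMira} by the substitution $\alpha = -\beta\omega$. With this choice, the prefactors on the right-hand side of \eqref{eq:genMiracle} become $\ee^{-\alpha/2}=\ee^{\beta\omega/2}$ and $\ee^{\alpha/2}=\ee^{-\beta\omega/2}$, which already match the right-hand side of \eqref{eq:DDlinRHS}. So the entire task reduces to rewriting the argument of $\DDD^{-\beta\omega}_\rho$ on the left-hand side in the form $[Q,\log\rho]-\alpha Q$ prescribed by \eqref{eq:genMiracle}.

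For this rewriting I would exploit the hypothesis $(\omega,Q)\in\mathfrak E(H)$, i.e.\ $[Q,H]=\omega Q$. Multiplying by $\beta$ and using linearity of the commutator in its second slot gives $[Q,\beta H]=\beta\omega Q$, and hence
\[
[Q,\log\rho+\beta H] \;=\; [Q,\log\rho]+\beta\omega Q \;=\; [Q,\log\rho]-(-\beta\omega)Q.
\]
This is precisely the bracketed expression in \eqref{eq:genMiracle} with $\alpha=-\beta\omega$, so applying $\DDD^{-\beta\omega}_\rho$ to both sides and invoking \eqref{eq:genMiracle} immediately yields \eqref{eq:DDlinRHS}.

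There is essentially no obstacle here; the result is a direct specialization of Proposition \ref{pr:genMira}. The only conceptual point worth emphasizing is why the choice $\alpha=-\beta\omega$ is natural: the eigenrelation $[Q,H]=\omega Q$ produces the linear correction term $\beta\omega Q$ when one passes from $\log\rho$ to $\log\rho+\beta H$, and the tilted operator $\DDD^{-\beta\omega}_\rho$ is constructed exactly so that this correction is absorbed, leaving the linear right-hand side $\ee^{\beta\omega/2}Q\rho-\ee^{-\beta\omega/2}\rho Q$. This linearity is what will make the subsequent construction of the Onsager operator $\bbK(\rho)$ produce a genuine Lindblad operator on the right-hand side of the gradient flow.
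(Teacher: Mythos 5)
Your proposal is correct and coincides with the paper's own (one-line) argument: the corollary is obtained from Proposition \ref{pr:genMira} by setting $\alpha=-\beta\omega$ and using $[Q,\beta H]=\beta\omega Q$ to identify $[Q,\log\rho+\beta H]$ with $[Q,\log\rho]-\alpha Q$. Nothing is missing.
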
 

The next lemma shows that $\DDD_\rho^\alpha$ appears naturally if we
tensorize $\rho$ with a diagonal matrix, and thus connects our
construction with that in Section \ref{su:Lind.CompForm}.

\begin{lemma}\label{le:CCCab} 
For all $\rho\in \RRR$ and all $\alpha\in \R$ we have  
\[
\CCC_{\block{\ee^{\alpha/2} \rho}00{\ee^{-\alpha/2}\rho}}\block ABCD =
\block{\ee^{\alpha/2} \CCC_\rho A}{\DDD^{\alpha}_\rho B}
 {\DDD^{-\alpha}_\rho C}{ \ee^{-\alpha/2}\CCC_\rho D}. 
\]
\end{lemma}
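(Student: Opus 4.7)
The plan is to prove the identity by direct block-wise computation, exploiting the block-diagonal structure of $\wt\rho := \block{\ee^{\alpha/2}\rho}{0}{0}{\ee^{-\alpha/2}\rho}$. Since $\wt\rho$ is positive semidefinite and block diagonal, and since the scalar prefactors $\ee^{\pm\alpha/2}$ commute with $\rho$, for every $s\in[0,1]$ its $s$-th power factors as
\[
\wt\rho^{\,s} \;=\; \block{\ee^{s\alpha/2}\rho^s}{0}{0}{\ee^{-s\alpha/2}\rho^s}.
\]

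Next, I would compute the integrand $\wt\rho^{\,s}\,\block{A}{B}{C}{D}\,\wt\rho^{\,1-s}$ by direct block multiplication. Writing $\epsilon_1=+1$ and $\epsilon_2=-1$, the resulting $(i,j)$-block equals $\ee^{(\epsilon_i s+\epsilon_j(1-s))\alpha/2}\,\rho^s X_{ij}\rho^{1-s}$. This produces the scalar prefactor $\ee^{\alpha/2}$ on the top-left, $\ee^{-\alpha/2}$ on the bottom-right, $\ee^{(2s-1)\alpha/2}$ on the top-right, and $\ee^{(1-2s)\alpha/2}$ on the bottom-left block.

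Finally, I would integrate each block separately over $s\in[0,1]$. The two diagonal blocks immediately yield $\ee^{\alpha/2}\CCC_\rho A$ and $\ee^{-\alpha/2}\CCC_\rho D$ by the definition of $\CCC_\rho$. For the off-diagonal blocks I would rewrite
\[
\ee^{(2s-1)\alpha/2} \;=\; \ee^{-\alpha/2}\ee^{s\alpha}, \qquad \ee^{(1-2s)\alpha/2} \;=\; \ee^{\alpha/2}\ee^{-s\alpha},
\]
so that by definition \eqref{eq:Def-D} of $\DDD^\alpha_\rho$ the integrals produce exactly $\DDD^{\alpha}_\rho B$ on the top-right and $\DDD^{-\alpha}_\rho C$ on the bottom-left.

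I do not expect a genuine obstacle here: the lemma is essentially a bookkeeping identity that hinges only on the block-diagonal structure of $\wt\rho$ and on scalars commuting with operator multiplication under the integral. The only point requiring care is tracking the signs of the exponents in the integrand and verifying that the normalizing prefactor $\ee^{-\alpha/2}$ appearing in the definition of $\DDD^\alpha_\rho$ is consistently absorbed when identifying each off-diagonal contribution.
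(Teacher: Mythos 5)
Your proposal is correct and follows essentially the same route as the paper: both rest on the identity $\bigl(\begin{smallmatrix}a\rho&0\\0&b\rho\end{smallmatrix}\bigr)^s=\bigl(\begin{smallmatrix}a^s\rho^s&0\\0&b^s\rho^s\end{smallmatrix}\bigr)$ followed by block-wise multiplication, integration, and matching against the definitions of $\CCC_\rho$ and $\DDD^\alpha_\rho$. Your exponent bookkeeping for the four blocks is accurate, so nothing is missing.
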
 
\begin{proof} This follows simply by using the identity (for $a,b>0$)
\[
\block{a\rho}00{b\rho}^s=
  \block{a^s\rho^s}00{b^s\rho^s} 
\] 
and the definitions of $\CCC_{\binom{\rho_1\ 0\;}{\;0\ \rho_2}}$ and
$\DDD_\rho^\alpha$ from above. 
\end{proof}

As was pointed out in Section \ref{su:Lind.CompForm} every Lindblad
operator can be written as the sum of partial traces of double
commutators on a larger tensor product space. This enlarged space
has the advantage, that the miracle identity \eqref{eq:genMiracle}
becomes very elegant and more transparent, when taking the
original miracle identity for granted.

\begin{theorem}[Generalized miracle identity]\label{th:MiracleIdentity}
  Consider $\bbQ\in \Herm{\HH_1\oti \HH_2}$, $\wh\rho\in \Herm{\HH_1}$ and 
  $\wh\sigma \in \Herm{\HH_2}$ with $\wh\rho, \wh\sigma>0$ satisfying the
  commutator relation 
  \eqref{eq:CommQSigma}, i.e. $[\bbQ , \RHObeta \oti \wh\sigma ]
  = 0$. Then, for all $\rho\in \RRR$ we have the identity  
\begin{equation} 
\label{eq:tensorizedMiracle}
\CCC_{\rho\otimes\wh\sigma} \big[ \mathbb{Q}, ( \log\rho {-}
\log\RHObeta ) \oti\mathbf{1}_{\HH_{2}} \big] = \left[\mathbb{Q},\rho
  \oti \wh\sigma\right]. 
\end{equation}
\end{theorem}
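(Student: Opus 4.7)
The plan is to reduce the generalized miracle identity \eqref{eq:tensorizedMiracle} to the classical miracle identity \eqref{eq:CCC-mira}, applied on the larger Hilbert space $\HH_1\oti\HH_2$, by using the commutator hypothesis \eqref{eq:CommQSigma} to convert a commutator involving $\log\wh\sigma$ into one involving $\log\RHObeta$.

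First, I would observe that although $\wh\sigma$ need not be a density matrix, the operator $\rho\oti\wh\sigma$ is Hermitian and strictly positive (after noting that only $\wh\sigma>0$ matters; if $\wh\sigma$ only has $\wh\sigma\geq 0$, one restricts to its support in $\HH_2$, or equivalently inspects the proof of Proposition \ref{pr:genMira}). The proof of the miracle identity \eqref{eq:CCC-mira} given in Proposition \ref{pr:genMira} (with $\alpha=0$) works verbatim for any positive operator $\tau$, so I may apply it with $\tau=\rho\oti\wh\sigma$ and with the Hermitian operator $\bbQ\in L(\HH_1\oti\HH_2)$ in place of $Q$. This yields
\begin{equation*}
\CCC_{\rho\oti\wh\sigma}\bigl[\bbQ,\log(\rho\oti\wh\sigma)\bigr]
 \;=\;\bigl[\bbQ,\rho\oti\wh\sigma\bigr].
\end{equation*}

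Next, I would expand the logarithm on a tensor product, which splits additively:
\begin{equation*}
\log(\rho\oti\wh\sigma)\;=\;\log\rho\oti\mathbf{1}_{\HH_2}\;+\;\mathbf{1}_{\HH_1}\oti\log\wh\sigma.
\end{equation*}
The second summand carries the unknown $\wh\sigma$ and must be eliminated. This is exactly where the commutator hypothesis enters: by Lemma \ref{le:CommQSigma}, the relation $[\bbQ,\RHObeta\oti\wh\sigma]=0$ is equivalent to
\begin{equation*}
\bigl[\bbQ,\log\RHObeta\oti\mathbf{1}_{\HH_2}\bigr]\;+\;\bigl[\bbQ,\mathbf{1}_{\HH_1}\oti\log\wh\sigma\bigr]\;=\;0.
\end{equation*}
Hence $[\bbQ,\mathbf{1}_{\HH_1}\oti\log\wh\sigma]=-[\bbQ,\log\RHObeta\oti\mathbf{1}_{\HH_2}]$, and combining gives
\begin{equation*}
\bigl[\bbQ,\log(\rho\oti\wh\sigma)\bigr]\;=\;\bigl[\bbQ,(\log\rho-\log\RHObeta)\oti\mathbf{1}_{\HH_2}\bigr].
\end{equation*}
Substituting into the classical miracle identity applied above yields exactly \eqref{eq:tensorizedMiracle}.

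The only real obstacle is verifying that the classical identity \eqref{eq:CCC-mira} is applicable in this broader setting, since $\rho\oti\wh\sigma$ lives on $\HH_1\oti\HH_2$ rather than on a single factor and is not a priori normalized; but a glance at the proof of Proposition \ref{pr:genMira} shows it only requires $\rho\oti\wh\sigma$ to be strictly positive and Hermitian, which is automatic here. Everything else is an algebraic bookkeeping step and follows directly from Lemma \ref{le:CommQSigma}.
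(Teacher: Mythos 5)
Your proposal is correct and follows essentially the same route as the paper's proof: both reduce \eqref{eq:tensorizedMiracle} to the classical miracle identity \eqref{eq:CCC-mira} applied on $\HH_1\oti\HH_2$, using the additive splitting $\log(\rho\oti\wh\sigma)=\log\rho\oti\mathbf{1}_{\HH_2}+\mathbf{1}_{\HH_1}\oti\log\wh\sigma$ together with Lemma \ref{le:CommQSigma} to trade the $\log\wh\sigma$ commutator for one with $\log\RHObeta$; you merely run the chain of equalities in the opposite direction. Your explicit remark that \eqref{eq:CCC-mira} only needs $\rho\oti\wh\sigma$ to be Hermitian and strictly positive (not trace-normalized) is a point the paper leaves implicit, and it is correct since the proof of Proposition \ref{pr:genMira} never uses the normalization.
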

\begin{proof}
Using Lemma \ref{le:CommQSigma}
in ``$\overset{\star}=$'' below
we obtain the following chain of identities:
\begin{align*}
&\CCC_{\rho\oti\wh\sigma} [\mathbb{Q}, ( \log\rho{-}\log\wh\rho)
  \oti \mathbf{1}_{\HH_{2}} ] =
  \CCC_{\rho \oti \wh\sigma} [\mathbb{Q},  (  \log\rho  )  \oti
  \mathbf{1}_{\HH_{2}} ] -\CCC_{\rho\oti\wh\sigma}[\bbQ,(\log\wh\rho) \oti
  \mathbf{1}_{\HH_{2}} ]
\\ 
& \overset{\star}= \CCC_{\rho \oti
  \wh\sigma}\left[\mathbb{Q},  (  \log\rho  )  \oti
  \mathbf{1}_{\HH_{2}} ] +\CCC_{\rho\oti\wh\sigma}[\bbQ, \mathbf{1}_{\HH_{1}}\oti  (  \log\wh\sigma  )  \right]\\
 & =\CCC_{\rho\otimes\wh\sigma}\big[\mathbb{Q}, (  \log\rho  )  \oti
  \mathbf{1}_{\HH_{2}} {+}  \mathbf{1}_{\HH_{1}}\oti  (  \log\wh\sigma  ) \big]
=\CCC_{\rho\otimes\wh\sigma}\left[\mathbb{Q},\log ( \rho\oti\wh\sigma
  ) \right]
\overset{\circ}= \left[\mathbb{Q},\rho\oti\wh\sigma\right],
\end{align*}
 where ``$\overset{\circ}=$'' uses the classical miracle identity
\eqref{eq:CCC-mira}. 
\end{proof}

We note that relation \eqref{eq:DDlinRHS} in Corollary
\ref{co:DDD-lin} is a direct consequence of Theorem
\ref{th:MiracleIdentity} by setting
\[
\mathbb{Q}= \block{0}{Q^\ast}{Q}{0} \quad \wh\sigma = 
\block{\ee^{\beta \omega/2}}00{\ee^{-\beta \omega/2}}.
\]

\subsection{Dissipation potential and Onsager operator}
\label{su:DissPot}

We now complete the task of writing the dissipative part $\LLL$
of any DBC Lindblad operator with respect to $\RHObeta$
as a gradient of the relative entropy, namely
\[
\calF(\rho)=\calH(\rho|\RHObeta):= \trace \Big( \rho\, \big( \log
\rho - \log \RHObeta\big) \Big) = \trace\big( \rho\log \rho + \rho\,\beta
H)+\log Z_\beta.
\]
The aim is to construct an Onsager operator $\bbK(\rho)$ such that 
\[
\LLL \rho = -\bbK(\rho) \rmD \calF(\rho) = -\bbK(\rho)\big( \log\rho +
\beta H\big) .
\]

The symmetric and positive definite Onsager operator $\bbK(\rho)$ is
most easily defined in terms of a non-negative and quadratic dual
dissipation potential 
\[
\calR^*(\rho,\xi)= \frac12\SPP\xi{\bbK(\rho)\xi}.
\]
Such a structure is conveniently written down in the compact 
tensor product formulation for Lindblad operators $\LLL$ as developed in Section
\ref{su:Lind.CompForm}, namely 
\begin{equation}
  \label{eq:Lind.Tensor}
  \LLL\rho= - \trace_{\HH_2}\Big( \big[ \bbQ, [\bbQ, \rho\oti\wh
\sigma]\big] \Big). 
\end{equation}
As a corollary we will also obtain the corresponding gradient
structures for the building blocks of Proposition
\ref{pr:BuildBlock}, namely 
\[
\LLL\rho= \sum_{j=1}^J  \MM{\beta}{
  Q_j} \,\rho \quad\text{where }(\omega_j,Q_j)\in \mathfrak E(H).
\] 
Indeed, each building block can be expressed in tensor form with
$\wh\sigma_j=1$ or $\wh\sigma_n\in \R^{2\ti 2}$. If we construct a
suitable $\KKK_j(\rho)$ for each of the building blocks, we can use
the additivity principle for Onsager operators, i.e.\ the sum
$\bbK(\rho):= \sum_{j=1}^J \KKK_j(\rho)$ is the desired total Onsager
operator.  
   
The tensorial representation \eqref{eq:Lind.Tensor} for  DBC
Lindblad operators with respect to $\RHObeta$ and the
tensorial miracle identity \eqref{eq:tensorizedMiracle} lead us to
the following general result. 

\begin{proposition}[Onsager operators and dissipation potentials]
  \label{pr:TensorGS} For given $\mathbb{Q}$, $\RHObeta$, and
  $\wh\sigma$ satisfying the commutation relation
  $\left[\mathbb{Q},\RHObeta \oti \wh\sigma\right]=0$, we define
  the Onsager operator $\bbK (\rho ):\Herm{\HH_1} \rightarrow \Herm{\HH_1}$ and the
  dual dissipation potential
  $\mathcal{R}^* (\rho, \cdot) :\Herm{\HH_1}\rightarrow\mathbb{R}$ via
\begin{align*}
\bbK ( \rho ) \xi & :=\trace_{\HH_{2}} \Big( \big[\mathbb{Q},\CCC_{\rho \oti \wh\sigma}[\mathbb{Q},\xi \oti \mathbf{1}_{\HH_{2}}]\big] \Big) \\
\mathcal{R}^* ( \rho,\xi )  & :=\frac12\Big\langle \big[
\mathbb{Q},\xi \oti \mathbf{1}_{\HH_{2}}\big],\CCC_{\rho \oti
  \wh\sigma}\big[\mathbb{Q},\xi \oti
\mathbf{1}_{\HH_{2}}\big]\Big\rangle  .
\end{align*}
Then,
\begin{equation}
  \bbK ( \rho )  ( \log\rho-\log\RHObeta ) 
   =\trace_{\HH_{2}}
   ( \left[\mathbb{Q},\left[\mathbb{Q},
        \rho \oti \wh\sigma\right]\right] ) 
  \label{eq:TensorGS} 
\end{equation}
as well as
\[
\bbK ( \rho ) =\bbK ( \rho ) ^*\geq0, \quad\bbK ( \rho ) \xi
=\mathrm{D}_{\xi}\mathcal{R}^* ( \rho,\xi ) , \quad\left\langle
  \xi,\bbK ( \rho ) \xi\right\rangle =2\mathcal{R}^* ( \rho,\xi )
\geq0.
\]
\end{proposition}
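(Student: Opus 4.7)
The plan is to recognize that the Onsager operator has the factored form $\bbK(\rho)=T^*\,\CCC_{\rho\oti\wh\sigma}\,T$, where the auxiliary linear map is the embedding
\[
T:\Herm{\HH_1}\to L(\HH_1\oti\HH_2),\qquad T\xi:=[\,\bbQ,\,\xi\oti\bm1_{\HH_2}\,].
\]
First, I would verify that, since $\bbQ$ is Hermitian, the commutator $[\bbQ,\,\cdot\,]$ is self-adjoint in the Hilbert--Schmidt inner product, and that the partial trace $\trace_{\HH_2}$ is adjoint to the embedding $\xi\mapsto \xi\oti\bm1_{\HH_2}$. Combining these two elementary facts gives
\[
T^*\bbA=\trace_{\HH_2}\!\bigl([\bbQ,\bbA]\bigr)\qquad\text{for all }\bbA\in L(\HH_1\oti\HH_2),
\]
so that $\bbK(\rho)\xi=T^*\CCC_{\rho\oti\wh\sigma}T\xi$ agrees with the definition in the statement.

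With this factorization the algebraic properties are immediate from the corresponding properties of $\CCC_{\rho\oti\wh\sigma}$. Self-adjointness $\CCC_{\rho\oti\wh\sigma}^{*}=\CCC_{\rho\oti\wh\sigma}$ and positive semidefiniteness $\SPP{\bbA}{\CCC_{\rho\oti\wh\sigma}\bbA}\geq 0$ (standard properties of the Kubo--Mori operator applied to the positive density matrix $\rho\oti\wh\sigma$) yield at once $\bbK(\rho)=\bbK(\rho)^*\geq 0$. Moreover, the dual dissipation potential rewrites as
\[
\calR^*(\rho,\xi)=\tfrac12\SPP{T\xi}{\CCC_{\rho\oti\wh\sigma}T\xi},
\]
so differentiating in $\xi$ (and using the symmetry of $\CCC_{\rho\oti\wh\sigma}$) gives $\rmD_\xi\calR^*(\rho,\xi)=T^*\CCC_{\rho\oti\wh\sigma}T\xi=\bbK(\rho)\xi$, and consequently $\SPP{\xi}{\bbK(\rho)\xi}=2\calR^*(\rho,\xi)\geq 0$.

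The core identity \eqref{eq:TensorGS} is the step that really uses the hypotheses, and it is the conceptual heart of the proof. I would set $\xi=\log\rho-\log\RHObeta$ in the inner commutator and invoke the generalized miracle identity of Theorem \ref{th:MiracleIdentity} (whose applicability requires exactly the commutation relation $[\bbQ,\RHObeta\oti\wh\sigma]=0$) to get
\[
\CCC_{\rho\oti\wh\sigma}\bigl[\bbQ,(\log\rho-\log\RHObeta)\oti\bm1_{\HH_2}\bigr]=[\bbQ,\rho\oti\wh\sigma].
\]
Applying $[\bbQ,\,\cdot\,]$ and then $\trace_{\HH_2}$ converts the left-hand side into $\bbK(\rho)(\log\rho-\log\RHObeta)$ and the right-hand side into $\trace_{\HH_2}\bigl([\bbQ,[\bbQ,\rho\oti\wh\sigma]]\bigr)$, which is precisely \eqref{eq:TensorGS}.

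The only step that requires care, and which I would flag as the main obstacle, is the adjointness calculation identifying $T^*$ with $\trace_{\HH_2}\circ[\bbQ,\,\cdot\,]$: one must correctly handle Hermitian conjugation on $L(\HH_1\oti\HH_2)$ and keep track of the distinction between the Hilbert--Schmidt pairing on $L(\HH_1)$ and on $L(\HH_1\oti\HH_2)$, using the defining property $\trace_{\HH_1\oti\HH_2}\!\bigl(\bbA\,(B\oti\bm1_{\HH_2})\bigr)=\trace_{\HH_1}\!\bigl(\trace_{\HH_2}(\bbA)\,B\bigr)$. Once this adjointness is in place, symmetry, positivity, and the potential-gradient relation follow mechanically, and the miracle identity delivers \eqref{eq:TensorGS}.
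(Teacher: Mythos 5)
Your proposal is correct and follows essentially the same route as the paper: the identity \eqref{eq:TensorGS} via the generalized miracle identity of Theorem \ref{th:MiracleIdentity}, positivity from $\CCC_{\rho\oti\wh\sigma}\geq 0$ (Proposition \ref{pr:genMira} with $\alpha=0$), and the remaining properties from the adjointness of $\trace_{\HH_2}$ to $\xi\mapsto\xi\oti\bm1_{\HH_2}$. Your explicit factorization $\bbK(\rho)=T^*\CCC_{\rho\oti\wh\sigma}T$ merely makes the paper's terse adjointness remark more transparent.
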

\begin{proof} We first observe that relation \eqref{eq:TensorGS}
  follows by employing Theorem \ref{th:MiracleIdentity}. 
  The positivity of $\calR^\ast$ is a special case of Proposition
  \ref{pr:genMira} for $\alpha=0$, i.e.\
  $\CCC_{\rho\oti\wh\sigma}\geq 0$. Finally, we obtain
  $\bbK(\rho)\xi=\rmD_\xi\calR^*(\rho,\xi)$ by noting that
  $\trace_{\HH_2} $ is adjoint to $A\mapsto A\oti
  \bm1_{\HH_2}$. This then shows $\bbK=\bbK^*\geq 0$.  
\end{proof}

Note that the operators $\mathbb{Q}$ and $\wh\sigma$ strongly depend
on $H$ and $\beta$ since they have to satisfy the commutation relation
$\left[\mathbb{Q},\RHObeta \oti \wh\sigma\right]=0$. The relation
\eqref{eq:TensorGS} shows that the above gradient structure leads
indeed to DBC Lindbladians. A direct corollary of the above gradient
structure are the Onsager operators for the building blocks
$\MM{\beta}{Q}$ introduced in Proposition \ref{pr:BuildBlock}.

\begin{corollary}[Simple Onsager operators]\label{co:DissBlockNeu}
For $H$ and $\RHObeta$ as above and $(\omega,Q)\in \mathfrak E(H)$,
we define $\KK{\beta}{Q}(\rho): \Herm{\HH} \to \Herm{\HH}$ and $\calR^*_{\beta,Q}:\Herm{\HH} \to
{[0,\infty[}$ as follows:
\[
 \KK{\beta}{Q} (\rho)\xi:=\big[Q^*,\DDD^{-\beta\omega}_\rho[Q,\xi]\big] + 
            \big[Q,\DDD^{\beta\omega}_\rho[Q^*,\xi]\big]           
\]
Then $\KK{\beta}{Q}$ is an Onsager operator as in Proposition
\ref{pr:TensorGS} and satisfies the identity
\begin{equation}
  \label{eq:bbK-QNeu}
\MM{\beta}{Q} \rho= - \KK{\beta}{Q}(\rho)\big(\log \rho+\beta H) 
\end{equation}
\end{corollary}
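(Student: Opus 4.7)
The plan is to reduce everything to the general result of Proposition \ref{pr:TensorGS} using the specific tensor-product representation of $\MM{\beta}{Q}$ already provided by Corollary \ref{co:Biuld.Comp}(2). Concretely, I will take $\HH_2 = \C^2$, $\bbQ = \bbQ_Q = \block{0}{Q^*}{Q}{0}$, and $\wh\sigma = \wh\sigma_{\beta\omega} = \block{\ee^{\beta\omega/2}}{0}{0}{\ee^{-\beta\omega/2}}$. Corollary \ref{co:Biuld.Comp}(2) already guarantees that the commutation relation $[\bbQ_Q,\RHObeta\oti\wh\sigma_{\beta\omega}]=0$ holds whenever $(\omega,Q)\in \mathfrak E(H)$, so Proposition \ref{pr:TensorGS} applies and produces an Onsager operator with all the required properties ($\bbK = \bbK^*\geq 0$, $\bbK\xi = \rmD_\xi\calR^*$, etc.).

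The bulk of the proof is then a direct block-matrix calculation identifying the general $\bbK(\rho)\xi = \trace_{\HH_2}\big[\bbQ_Q, \CCC_{\rho\oti\wh\sigma_{\beta\omega}}[\bbQ_Q,\xi\oti\bm1_{\HH_2}]\big]$ with the formula defining $\KK{\beta}{Q}(\rho)\xi$. First I would compute
\[
[\bbQ_Q,\xi\oti \bm1_{\HH_2}] = \block{0}{[Q^*,\xi]}{[Q,\xi]}{0}.
\]
Then, since $\rho\oti\wh\sigma_{\beta\omega} = \block{\ee^{\beta\omega/2}\rho}{0}{0}{\ee^{-\beta\omega/2}\rho}$, Lemma \ref{le:CCCab} with $\alpha = \beta\omega$ gives
\[
\CCC_{\rho\oti\wh\sigma_{\beta\omega}}[\bbQ_Q,\xi\oti\bm1_{\HH_2}] = \block{0}{\DDD^{\beta\omega}_\rho[Q^*,\xi]}{\DDD^{-\beta\omega}_\rho[Q,\xi]}{0}.
\]
Taking the outer commutator with $\bbQ_Q$ produces a block-diagonal matrix, and summing its two diagonal blocks (i.e.\ taking $\trace_{\HH_2}$) yields exactly $[Q^*,\DDD^{-\beta\omega}_\rho[Q,\xi]] + [Q,\DDD^{\beta\omega}_\rho[Q^*,\xi]]$, which is the defining formula for $\KK{\beta}{Q}(\rho)\xi$.

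For the gradient-flow identity \eqref{eq:bbK-QNeu}, I would invoke the key relation \eqref{eq:TensorGS} of Proposition \ref{pr:TensorGS}, which gives
\[
\KK{\beta}{Q}(\rho)(\log\rho - \log\RHObeta) = \trace_{\HH_2}\big[\bbQ_Q,[\bbQ_Q,\rho\oti\wh\sigma_{\beta\omega}]\big].
\]
Since $\log\RHObeta = -\beta H - (\log Z_\beta)\bm1_\HH$ and $\KK{\beta}{Q}(\rho)\bm1_\HH = 0$ (because $[\bbQ_Q,\bm1_\HH\oti\bm1_{\HH_2}]=0$), the left-hand side equals $\KK{\beta}{Q}(\rho)(\log\rho + \beta H)$. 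The right-hand side equals $-\MM{\beta}{Q}\rho$ by the compact form \eqref{eq:BuildingBlockAsTensorProduct} in Corollary \ref{co:Biuld.Comp}(2). Combining these yields the desired identity.

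I expect no genuine obstacle here: once the reduction to Proposition \ref{pr:TensorGS} is in place, the proof is a bookkeeping exercise with $2\times 2$ block matrices. The only point requiring a modicum of care is confirming the sign convention in the outer commutator so that the two diagonal blocks after $\trace_{\HH_2}$ combine as the stated sum of commutators rather than cancelling; this is immediate from writing out the four products $Q^*\DDD^{-\beta\omega}_\rho[Q,\xi]$, $\DDD^{\beta\omega}_\rho[Q^*,\xi]\,Q$, etc., and regrouping.
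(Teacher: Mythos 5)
Your proposal is correct and follows essentially the same route as the paper, whose own proof consists precisely of choosing $\bbQ_Q$ and $\wh\sigma_{\beta\omega}$ as you do and then citing \eqref{eq:TensorGS} and Lemma \ref{le:CCCab}. You merely spell out the $2\times 2$ block computation and the elimination of the $\log Z_\beta\,\bm1_\HH$ term that the paper leaves implicit; both checks are accurate.
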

\begin{proof}
Choose\[
\mathbb{Q}= \block{0}{Q^\ast}{Q}{0} \quad \wh\sigma =
\block{\ee^{\beta \omega/2}}00{\ee^{-\beta \omega/2}}. 
\] Then equation \eqref{eq:bbK-QNeu} follows directly from
\eqref{eq:TensorGS} and Lemma \ref{le:CCCab}.
\end{proof}

Our main result concerning the representation of general DBC
Lindbladians is now simply stated by collecting the previous
results. We have two forms, the first is based on the compact tensor
representation and the second is based on the additive form $
\LLL=\sum_{m=1}^M \MM{\beta}{Q_m}$ in terms of the building blocks
$\MM{\beta}{Q_m}$, which will be reflected in an additive structure
for the Onsager operator $\bbK$, whereas the relative entropy as the
driving functional is independent of the $M$ different dissipative
mechanisms.

\begin{theorem}[Gradient structure for $\LLL$ with DBC] \  Consider
  $H \in \Herm{\HH}$ and $\RHObeta$ as
  above. Then, for any Lindblad operator $\LLL$ satisfying the DBC
  \eqref{eq:DBC} there exists an Onsager operator $\bbK$ such that
  $\LLL$ can be written as $\bbK$-gradient of the relative entropy
  $\calF_\beta = \calH(\cdot|\RHObeta)$. 
\\
(1) If $\LLL\rho= - \trace_{\HH_2}\big(
[\bbQ,[\bbQ,\rho\oti \RHObeta]]= - \trace_{\HH_2}\big(
[\wt\bbQ,[\wt\bbQ,\rho\oti \RHObeta^{-1}]]$ with
$\wt\bbQ=\YYY_{\RHObeta}\bbQ$, we can 
choose
\[
\bbK(\rho)\xi =\trace_{\HH_2}\Big( \big[\bbQ,\CCC_{\rho\oti\RHObeta}
[\bbQ, \xi \oti\bm1_{\HH_2}] \big]\Big) \ = \ 
  \trace_{\HH_2}\Big( \big[\wt\bbQ,\CCC_{\rho\oti\RHObeta^{-1}}
[\wt\bbQ, \xi \oti\bm1_{\HH_2}] \big]\Big).
\] 
(2) If $\LLL= \sum_{m=1}^M \MM{\beta}{Q_m}$ with
$(\omega_m,Q_m) \in \mathfrak E(H)$, we can choose 
$ \bbK(\rho)= \sum_{m=1}^M \KK{\beta}{Q_m}(\rho)$.
\end{theorem}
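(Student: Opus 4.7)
The plan is to assemble the theorem as a corollary of results that have already been established in Sections~\ref{su:Lind.CompForm} and~\ref{su:DissPot}, with essentially no new calculation required. The key observation that ties everything together is that the derivative of the relative entropy is
\[
\rmD\calF_\beta(\rho) = \log\rho - \log\RHObeta = \log\rho + \beta H + (\log Z_\beta)\bm1_\HH,
\]
and the constant term $(\log Z_\beta)\bm1_\HH$ lies in the kernel of every Onsager operator of the form $\bbK(\rho)\xi = \trace_{\HH_2}([\bbQ,\CCC_{\rho\oti\wh\sigma}[\bbQ,\xi\oti\bm1_{\HH_2}]])$, because the inner commutator annihilates $\bm1_{\HH_1}\oti\bm1_{\HH_2}$. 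Hence $\bbK(\rho)\rmD\calF_\beta(\rho) = \bbK(\rho)(\log\rho - \log\RHObeta)$, which is the form to which our earlier identities apply.

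For part (1), I would first invoke Theorem~\ref{th:CompRepDBC} to represent the given DBC Lindbladian $\LLL$ as
\[
\LLL\rho = -\trace_{\HH_2}\big([\bbQ,[\bbQ,\rho\oti\RHObeta]]\big)
\]
with $\bbQ\in\Herm{\HH\oti\HH}$ satisfying $[\bbQ,\RHObeta\oti\RHObeta]=0$, which is exactly the commutation hypothesis required by Proposition~\ref{pr:TensorGS} with the choice $\wh\sigma=\RHObeta$. Then I apply \eqref{eq:TensorGS} to the argument $\xi = \log\rho - \log\RHObeta$ and combine with the kernel remark above to conclude $\LLL\rho = -\bbK(\rho)\rmD\calF_\beta(\rho)$. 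The equivalence with the $\wt\bbQ=\YYY_{\RHObeta}\bbQ$ and $\wh\sigma=\RHObeta^{-1}$ representation is immediate from Lemma~\ref{le:Sigma.wtQ}: the identity \eqref{eq:Lind.twice} gives the same $\LLL$, the equivalence \eqref{eq:Equi.CommR} preserves the commutation condition, and a parallel application of Proposition~\ref{pr:TensorGS} then yields the alternative Onsager-operator formula.

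For part (2), I would use Proposition~\ref{pr:BuildBlock}(c) to decompose $\LLL = \sum_{m=1}^M \MM{\beta}{Q_m}$ with $(\omega_m,Q_m)\in \mathfrak E(H)$ (absorbing any $\SSS_{W_j}$ into $\MM{\beta}{W_j}$ via the remark that $(0,W_j)\in \mathfrak E(H)$ with $W_j=W_j^*$ gives $\MM{\beta}{W_j}=2\SSS_{W_j}$). For each summand Corollary~\ref{co:DissBlockNeu} delivers the identity $\MM{\beta}{Q_m}\rho = -\KK{\beta}{Q_m}(\rho)(\log\rho+\beta H)$, and since each $\KK{\beta}{Q_m}(\rho)$ is symmetric and positive semidefinite, the additivity of Onsager operators shows that $\bbK(\rho):=\sum_{m=1}^M \KK{\beta}{Q_m}(\rho)$ is again an Onsager operator. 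Summation then yields $\LLL\rho = -\bbK(\rho)(\log\rho+\beta H) = -\bbK(\rho)\rmD\calF_\beta(\rho)$, again using that the constant $(\log Z_\beta)\bm1_\HH$ is killed by each $\KK{\beta}{Q_m}(\rho)$.

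Because every ingredient has already been proved, there is no real obstacle; the only point requiring care is the bookkeeping of constants and signs when passing between $\log\rho-\log\RHObeta$, $\log\rho+\beta H$, and $\rmD\calF_\beta(\rho)$, and the verification that the generalized miracle identity of Theorem~\ref{th:MiracleIdentity} indeed justifies dropping the constant $(\log Z_\beta)\bm1_\HH$ inside the double commutator. Once that is noted, the two parts of the theorem are simply the compact and the additive packagings of the same gradient structure.
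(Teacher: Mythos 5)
Your proposal is correct and follows essentially the same route as the paper: part (1) by combining Theorem~\ref{th:CompRepDBC} with Proposition~\ref{pr:TensorGS} (and Lemma~\ref{le:Sigma.wtQ} for the $\wt\bbQ$ variant), and part (2) by combining Proposition~\ref{pr:BuildBlock}(c) with Corollary~\ref{co:DissBlockNeu} and additivity of Onsager operators. Your explicit remark that the constant $(\log Z_\beta)\bm1_\HH$ in $\rmD\calF_\beta(\rho)$ is annihilated because the inner commutator kills multiples of the identity is a correct piece of bookkeeping that the paper leaves implicit.
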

\begin{proof} The result (1) follows by combining the tensor representation
  in Theorem \ref{th:CompRepDBC} and Proposition \ref{pr:TensorGS}. 
The result (2) follows by combining the additive representation in
Proposition \ref{pr:BuildBlock}(c) and Corollary \ref{co:DissBlockNeu}.
\end{proof}

\section{Dissipative quantum mechanics via  GENERIC}
\label{s:GENERIC}

GENERIC is an acronym for General Equations for Non-Equilibrium
Reversible Irreversible Coupling, which was introduced by \"Ottinger
and Grmela in \cite{GrmOtt97DTCF1,OttGrm97DTCF2}. It describes a
thermodynamically consistent way of coupling Hamiltonian (=reversible)
dynamics with gradient-flow (irreversible) dynamics. It is a variant
of metriplectic systems introduced in \cite{Morr84BFIC,Morr86PJHD},
see also \cite{Morr09TBDO}. We refer to  \cite{MieTho12?GPTT}  
for an introductory survey of this framework and applications in a
large variety of applications. After our general introduction in
Section \ref{su:GENERIC-g} we will mainly dwell on
the quantum mechanical papers \cite{Otti10NTQM,Otti11GTDQ,Miel13DQMU}.    

\subsection{General setup of GENERIC}
\label{su:GENERIC-g}

A GENERIC system is defined in terms of a quintuple
$(\bfQ,\calE,\calS,\bbJ,\bbK)$, where the smooth functionals $\calE$ and
$\calS$ on the state space $\bfQ$ denote the total
energy and the total entropy, respectively. Moreover, $\bfQ$ carries
two geometric structure, namely a Poisson structure $\bbJ$ and 
a dissipative structure $\bbK$, i.e.,
for each $q\in \bfQ$ the operators $\bbJ(q)$ and $\bbK(q)$ map the
cotangent space $\rmT_q^* \bfQ$ into the tangent space
$\rmT_q\bfQ$. The evolution of the system is given by the
differential equation
\begin{equation}\label{eG.1}
\dot q= \bbJ(q) \rmD \calE(q) + \bbK(q) \rmD \calS(q),
\end{equation}
where $ \rmD \calE$ and $\rmD \calS$ are the differentials taking values in
the cotangent space.  

The basic conditions on the geometric structures $\bbJ$ and $\bbK$
are the symmetries 
\begin{subequations}
  \label{eq:JKES}
\begin{equation}\label{eG.JKsym}
\bbJ(q)=-\bbJ(q)^* \ \text{ and } \ \bbK(q)=\bbK(q)^*
\end{equation}
and the structural properties
\begin{equation}\label{eG.JKstruc}
\begin{aligned}
&\bbJ \text{ satisfies Jacobi's identity},\\
&\bbK(q) \text{ is positive semi-definite, i.e., } 
\langle \xi,\bbK(q)\xi\rangle \geq
0. 
\end{aligned}
\end{equation}
Thus, the triples $(\bfQ,\calE,\bbJ)$ and $(\bfQ,\calS,\bbK)$ form a
Hamiltonian and an Onsager or gradient system, respectively, with
evolution equations $\dot q = \bbJ(q)\rmD\calE(q)$ and $\dot q=\bbK(q)
\rmD\calS(q)$, respectively. 
Finally, the central condition states that the energy functional does not
contribute to dissipative mechanisms and that the entropy functional does not
contribute to reversible dynamics, which is the following
\emph{non-interaction condition (NIC)}:
\begin{equation}\label{eG.orth}
\forall \, q \in \bfQ:\quad \bbJ(q)\rmD\calS(q)=0 \quad \text{and} \quad 
\bbK(q)\rmD\calE(q)=0. \medskip
\end{equation}
\end{subequations}
A first observation is that \eqref{eq:JKES}
implies energy conservation and entropy increase:
\begin{align}
\label{eq:EnergBal}
\frac{\rmd}{\rmd t} \calE(q(t))&=\langle \rmD\calE(q),\dot q\rangle = 
\langle \rmD\calE(q),\bbJ\rmD\calE + \bbK\rmD\calS\rangle = 0+0=0,\\
\label{eq:EntropyProd}
\frac{\rmd}{\rmd t} \calS(q(t))&=\langle \rmD\calS(q),\dot q\rangle = 
\langle \rmD\calS(q),\bbJ\rmD\calE + \bbK\rmD\calS\rangle = 0+\langle
\rmD\calS,\bbK\rmD\calS\rangle \geq 0.
\end{align}
Note that we would need much less than the three conditions
\eqref{eq:JKES} to guarantee these two
properties. However, the next property needs \eqref{eG.orth} in its
full strength.

Namely we show that equilibria can be obtained by the \emph{maximum
 entropy principle}.  If $x_\eq $ maximizes $\calS$ under the
constraint $\calE(q)=E_0$, then we obtain a Lagrange multiplier
$\lambda_\eq \in \R$ such that $\rmD \calS(q_\eq )=\lambda_\eq
\rmD\calE(q_\eq )$. Assuming $\lambda_\eq \neq 0$ we immediately
find that $x_\eq $ is an equilibrium of \eqref{eG.1}. Indeed,
\[
\bbJ(q_\eq )\rmD\calE(q_\eq )=\tfrac1{\lambda_\eq } \bbJ(q_\eq
)\rmD\calS(q_\eq )=0 \text{ and } \bbK(q_\eq )\rmD\calS(q_\eq
)=\lambda_\eq \bbK(q_\eq )\rmD\calE(q_\eq )=0,
\]
where we used the NIC \eqref{eG.orth}. 

Vice versa, for every steady state $q_\eq $ of \eqref{eG.1} we must have 
\begin{equation}
  \label{eq:SteadyState}
  \bbJ(q_\eq )\rmD \calE(q_\eq )=0 \text{ and }  \bbK(q_\eq )\rmD
  \calS(q_\eq )=0.   
\end{equation}
Thus, in a steady state there cannot be any balancing between
reversible and irreversible forces, both have to vanish independently.
To see this we recall the entropy production relation
\eqref{eq:EntropyProd}, which implies $\langle \rmD\calS(q_\eq
),\bbK(q_\eq )\rmD\calS(q_\eq )\rangle = 0 $ for any steady state.
Since $\bbK(q_\eq )$ is positive semidefinite, this implies the second
identity in \eqref{eq:SteadyState}. The first identity then follows
from $\dot q\equiv 0$ in \eqref{eG.1}.
\medskip

Very often one is only interested in isothermal systems with fixed
temperature $\theta_*>0$, where the
free energy $\calF(q)=\calE(q)-\theta_*\calS(q)$ is a Liapunov
function. The associated structure is then that of a damped
Hamiltonian system, namely 
\begin{equation}
  \label{eq:DampedHam}
  \dot q = \Big(\bbJ(q) - \frac1{\theta_*} \bbK(q)\Big)\rmD \calF(q),
\end{equation}
where again $\bbJ$ and $\bbK$ are Poisson and Onsager structures,
respectively. However, there are no longer any non-interaction
conditions, since only  one functional $\calF$ is left. 

As in \cite{Miel11FTDM,DuPeZi13GFVF} we note that \eqref{eq:DampedHam}
can be converted to a GENERIC system by 
introducing a scalar slack variable $e$ and defining $  
\wt\calE(q,e) = \calF(q)+ e$, $\wt\calS(q,e)= \frac{e}{\theta_*}$, 
\begin{align*}
\wt\bbJ(q,e)= \block{\bbJ(q)}000, \quad \text{and} \quad \wt\bbK(q,e)=
\block{\bbK}{{-}\bbK\rmD\calF}{{-}(\bbK\rmD\calF)^\top} 
{\langle\rmD\calF,\bbK\rmD\calF\rangle}.
\end{align*} 
Clearly, the NIC \eqref{eG.orth} are satisfied. The variable $e$ can
be seen as the entropic part of the energy. Of course, in concrete
cases it is usually easy to find a physically more reasonable
splitting into entropy and energy.

\subsection{Coupling a dissipative and a quantum system}
\label{su:DissEvol}

Since GENERIC systems are closed systems with energy conservation and
entropy increase, we need to model all couplings to the quantum system
by suitable macroscopic variables. For this aim we introduce the
macroscopic variables $z$ lying in a Hilbert space $Z$. The macro-variable $z$ 
may include Hamiltonian parts (like the Maxwell equations) as well as
dissipative parts producing entropy. The important point of this work
is the thermodynamically consistent coupling of the macroscopic system
to the quantum system in such a way that energy can be exchanged via
Lindblad-like terms. 

Following \cite{Otti10NTQM,Otti11GTDQ,Miel13DQMU} we consider an
energy and an entropy in the decoupled form 
\begin{equation}
  \label{eq:E-S.decoup}
  \calE(\rho,z)=\trace(\rho H) + E(z) \quad 
\text{and} \quad \calS(\rho,z)=-\KB \trace(\rho \log \rho) + S(z).
\end{equation}
Whenever suitable we abbreviate the full state with $q=(\rho,z)$ and
choose a Poisson 
structure as follows. Consider a constant macroscopic Poisson operator
$\bbJ_\ma(z):Z^*\to Z$ and a constant coupling
operator $\Gamma: L(\HH ) \to Z $, then 
\begin{equation}
  \label{eq:bbJ.qs.ma}
  \bbJ(q)= \block{\bbJ_\qs(\rho)}{ -\bbJ_\qs(\rho)\Gamma^*}
 {-\Gamma \bbJ_\qs(\rho)}{\bbJ_\ma{+}\Gamma \bbJ_\qs(\rho)\Gamma^*} 
\text{ with } \bbJ_\qs(\rho)\mu=\ii[\rho, \mu],
\end{equation}
is a Poisson structure, which easily follows by transforming the
decoupled structure $\diag(\bbJ_\qs,\bbJ_\ma)$ via the linear
mapping  $(\rho,z)\mapsto (\rho,
z{-}\Gamma\rho)$. 

Using the relation $\rmD\calS(q)=(-\KB \log \rho, \rmD_zS(z))$ and
$[\rho,\log \rho]=0$, the first NIC 
$\bbJ(q)\rmD\calS(q)\equiv 0$ follows by asking 
\begin{equation}
  \label{eq:NIC.Gamma}
\bbJ_\ma(z)\rmD_z S(z) \equiv 0 \quad \text{and} \quad   \Gamma^*
\rmD_z S(z)  \equiv  0.  
\end{equation}

The choice for the Onsager operator $\bbK$ is more delicate, since we
do not want to generate nonlinear (non-smooth) terms arising from
$-\KB \log \rho$ in the term $\bbK(q)\rmD\calS(q)$. This will be
achieved by using the theory from above
concerning the gradient structures for the Lindblad equation for a fixed 
quantum Hamiltonian $H$ and coupling operators $Q_c$, where $c \in C$
is a finite set of couplings. Throughout we assume that
$(\omega_c,Q_c) \in \mathfrak E(H)$ are fixed, while the inverse 
coupling temperatures $\beta_c(z)$ may depend on the state of the
macroscopic system.  

More precisely  we use the ansatz
\begin{equation}
  \label{eq:calP*}
\begin{aligned}  \calP^*(\rho,z\:;\: \mu,\zeta)&= \frac12 \langle
  \zeta\, , \, \bbK_\ma(z) \zeta \rangle_Z \\
&\quad  + \sum_{c\in C} 
\frac12\SPP{ \mu {-} \langle \zeta,b_c(z)\rangle
H}{\wh\bbK_c(z)\big( \mu {-} \langle \zeta,b_c(z)\rangle
H \big)}.
\end{aligned}
\end{equation}
for the dual entropy-production potential of the coupled
system. Here $\bbK_\ma(z):Z^*\to Z$ is a symmetric and positive semi-definite
macroscopic Onsager operator. The coupling vectors $b_c(z)\in Z$ are
chosen to satisfy the conditions
\begin{equation}
  \label{eq:b.c}
  \forall\, c \in C\ \forall\,z\in Z: \quad \langle
\rmD_zE(z), b_c (z)\rangle_Z =1 \text{ and } 
\langle \rmD_z S(z), b_c(z)\rangle_Z >0
\end{equation} 
where the first ensures the NIC $\bbK(\rho,z)\rmD\calE(\rho,z)\equiv
0$ and the second the positivity of the temperature.  The Onsager
operators $\wh \bbK_c(z)$, which couple the quantum system via
the vector $b_c$ to the macroscopic system, are constructed with
the help of the operators $\KK{\beta}{ Q}(\rho):L(\HH ) \to L(\HH ) $
from Corollary \ref{co:DissBlockNeu} as follows:
\begin{equation}
  \label{eq:bbK-beta.c}
  \wh \bbK_c(z) = \kappa_c(z) \KK{\wh\beta_c(z)}{Q_c}(\rho), \quad 
   \text{where }\wh\beta_c(z):=\frac1{\KB} \langle
  \rmD_z S(z), b_c(z)\rangle_Z \text{ and }\kappa_c(z)\geq 0.
\end{equation}
Hence, the full Onsager operator takes the form 
\begin{equation}
  \label{eq:bbK.qs.ma}
  \bbK(q)= \block000{\bbK_\ma(z)}
 + \sum_{c \in C}  
 \block{\wh\bbK_c}{{-}\langle\Box,b_c\rangle_Z \wh\bbK_c H}
{{-}\SPP{H}{\wh\bbK_c\Box}b_c}{\ \SPP H{\wh\bbK_c H}b_c{\oti} b_c},
\end{equation}
where $\Box$ indicates where  the corresponding argument has
to be inserted. Using the definition of $\wh\beta_c$ in
\eqref{eq:bbK-beta.c} we easily see that the NIC
$\bbK(\rho,z)\rmD\calE(\rho,z)\equiv 0$ holds, if we assume
$\bbK_\ma(z)\rmD_z E(z)\equiv 0$. We emphasize that $\bbK$ depends
highly nonlinearly on $\rho$ through $\KK{\beta_c(z)}{Q_c}(\rho)$,
which in turn depends on $\DDD^{\pm \wh\beta_c(z)\omega_c}_\rho$. 

\begin{proposition}[GENERIC structure]\label{pr:GENERIC}
Let $X=\RRR\ti Z$ and let $\calE$ and $\calS$ be given in the decoupled form
\eqref{eq:E-S.decoup}. Moreover, consider the Poisson structure $\bbJ$
defined in \eqref{eq:bbJ.qs.ma} and the Onsager structure $\bbK$
defined in \eqref{eq:bbK.qs.ma}. Assuming additionally \eqref{eq:b.c},
$\bbJ_\ma \rmD S\equiv 0$, $\Gamma^*\rmD S \equiv 0$, and $\bbK_\ma
\rmD E \equiv 0$ the quintuple $(X,\calE,\calS,\bbJ,\bbK)$ forms a
GENERIC system. 
\end{proposition}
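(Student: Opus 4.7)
The plan is to verify the three defining properties of a GENERIC system in turn: the Poisson property of $\bbJ$ (antisymmetry plus Jacobi), the symmetry and positive semi-definiteness of $\bbK$, and the two non-interaction conditions $\bbJ\rmD\calS \equiv 0$ and $\bbK\rmD\calE \equiv 0$. All four checks reduce to short computations once the structure of $(\calE,\calS,\bbJ,\bbK)$ prescribed by \eqref{eq:E-S.decoup}, \eqref{eq:bbJ.qs.ma}, \eqref{eq:calP*}, and \eqref{eq:bbK.qs.ma} is unpacked.

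For $\bbJ$ I would invoke the observation already made just after \eqref{eq:bbJ.qs.ma}: the block form is the push-forward of the decoupled operator $\diag(\bbJ_\qs,\bbJ_\ma)$ under the invertible affine change of variables $(\rho,z)\mapsto(\rho,z{-}\Gamma\rho)$. Since affine changes of variables preserve antisymmetry and Jacobi's identity, it suffices to verify each block: $\bbJ_\ma$ is Poisson by assumption, and $\bbJ_\qs(\rho)\mu = \ii[\rho,\mu]$ is the standard Lie-Poisson bracket on density matrices, with antisymmetry $\SPP{\mu}{\ii[\rho,\nu]} = -\SPP{\nu}{\ii[\rho,\mu]}$ following from the $*$-property of the commutator and Jacobi following from the Jacobi identity of $[\cdot,\cdot]$. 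For $\bbK$, the block form \eqref{eq:bbK.qs.ma} is simply the Hessian of the non-negative quadratic dual dissipation potential $\calP^*$ in \eqref{eq:calP*}. Because $\bbK_\ma$ is symmetric positive semi-definite by hypothesis and each $\wh\bbK_c = \kappa_c\KK{\wh\beta_c(z)}{Q_c}(\rho)$ with $\kappa_c\geq 0$ is symmetric positive semi-definite by Corollary \ref{co:DissBlockNeu}, $\calP^*$ is a sum of non-negative quadratic forms, so $\bbK$ is automatically symmetric and positive semi-definite.

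The two non-interaction conditions are then direct component computations. Using $\rmD\calS(\rho,z)=(-\KB\log\rho,\rmD_zS(z))$ and the crucial cancellation $\bbJ_\qs(\rho)(-\KB\log\rho) = -\KB\ii[\rho,\log\rho]=0$, the first component of $\bbJ\rmD\calS$ collapses to $-\bbJ_\qs(\rho)\Gamma^*\rmD_zS$, which vanishes because $\Gamma^*\rmD S\equiv 0$; the second component reduces to $\KB\Gamma\,\ii[\rho,\log\rho]+\bbJ_\ma\rmD_zS+\Gamma\bbJ_\qs(\rho)\Gamma^*\rmD_zS$, and all three summands vanish by $[\rho,\log\rho]=0$, $\bbJ_\ma\rmD S\equiv 0$, and $\Gamma^*\rmD S\equiv 0$. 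Similarly, with $\rmD\calE(\rho,z)=(H,\rmD_zE(z))$, the first component of $\bbK\rmD\calE$ evaluates to $\sum_{c\in C}\bigl(1-\langle\rmD_zE,b_c\rangle_Z\bigr)\wh\bbK_c H$, which is zero by the calibration $\langle\rmD_zE,b_c\rangle_Z=1$ in \eqref{eq:b.c}; the second component becomes $\bbK_\ma\rmD_zE + \sum_{c\in C}\SPP{H}{\wh\bbK_c H}\bigl(\langle\rmD_zE,b_c\rangle_Z-1\bigr)b_c$, which vanishes by the same calibration together with $\bbK_\ma\rmD E\equiv 0$.

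The only real subtlety, and thus the main obstacle, is the bookkeeping of dualities: one must check that the off-diagonal blocks of \eqref{eq:bbK.qs.ma}, written with the pairing $\langle\cdot,b_c(z)\rangle_Z$ and the Hilbert--Schmidt product $\SPP{H}{\wh\bbK_c\,\Box}$, really are transposes of each other, so that the symmetry of $\bbK$ follows from that of the scalar quadratic form $\calP^*$. Once this identification and the sign conventions in $\bbJ_\qs$ are in place, no new ideas beyond Sections \ref{su:Lind.CompForm} and \ref{su:DissPot} are required, and the proposition follows by assembling the four verifications above.
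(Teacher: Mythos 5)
Your proposal is correct and follows essentially the same route as the paper, which does not print a separate proof but distributes exactly these verifications over the surrounding text: the push-forward of $\diag(\bbJ_\qs,\bbJ_\ma)$ under $(\rho,z)\mapsto(\rho,z{-}\Gamma\rho)$ for the Poisson property, the identification of \eqref{eq:bbK.qs.ma} as the Hessian of the non-negative quadratic potential $\calP^*$ from \eqref{eq:calP*} for symmetry and positive semi-definiteness, the cancellation $[\rho,\log\rho]=0$ together with \eqref{eq:NIC.Gamma} for $\bbJ\rmD\calS\equiv 0$, and the calibration $\langle\rmD_z E,b_c\rangle_Z=1$ from \eqref{eq:b.c} together with $\bbK_\ma\rmD E\equiv 0$ for $\bbK\rmD\calE\equiv 0$. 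Your component computations for both non-interaction conditions check out against the block forms \eqref{eq:bbJ.qs.ma} and \eqref{eq:bbK.qs.ma}.
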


So far, we have not used the special form of $\wh\bbK_c$ defined in
terms of $\KK{\beta_c}{Q_c}$. The advantage of this choice is of
course dictated by the aim to obtain a linear Lindblad equation for
$\rho$. Indeed, using the relation \eqref{eq:bbK-QNeu} relating
$\KK{\beta}{Q}$ and $\MM{\beta}{Q}$ we see that the equations
obtained from the GENERIC system have the explicit form
\begin{equation}
  \label{eq:Coupl.Syst}
\begin{aligned}
  \binom{\dot\rho}{\dot z} &=
  \bbJ(\rho,z)\rmD\calE(\rho,z)+ \bbK(\rho,z) \rmD\calS(\rho, z)\\
  &= \block{\bbJ_\qs(\rho)}{ -\bbJ_\qs(\rho)\Gamma^*} {-\Gamma
    \bbJ_\qs(\rho)}{\bbJ_\ma{+}\Gamma \bbJ_\qs(\rho)\Gamma^*}
  \binom{H}{\rmD E(z)} +
  \binom{0}{\bbK_\ma(z)\rmD S(z)} \\
  &\quad +\sum_{c\in C} \block{\wh\bbK_c}{{-}\langle\Box,b_c\rangle_Z
    \wh\bbK_c H} {{-}\SPP{H}{\wh\bbK_c\Box}b_c}{\ \SPP H{\wh\bbK_c
      H}b_c{\oti} b_c}
  \binom{-\KB \log \rho}{\rmD S(z)}.
\end{aligned}  
\end{equation}
Introducing the effective Hamiltonian $\wt H(z)$ through
\[
\wt H(z) = H - \Gamma^* \rmD E(z).
\]
and using the construction of $\wh\bbK_c(\rho,z)$ via
$\KK{\wh\beta_c(z)}{Q_c}(\rho)$ we arrive at a coupled system for
$\rho$ and $z$ that is indeed linear in $\rho$, namely
\begin{equation}
  \label{eq:Coupl.rho.z}
\begin{aligned}
  \binom{\dot\rho}{\dot z} \ = \ \binom{\ii \big[ \rho\,,\, \wt
    H(z)\big]}{\bbJ_\ma\rmD E(z) -\ii \Gamma\big[ \rho\,,\, \wt
    H(z)\big] } &+ \binom{0}{\bbK_\ma(z)\rmD S(z)} \\ & + \sum_{c\in
    C}\binom{\KB \kappa_c(z) \MM{\beta_c(z)}{Q_c}\rho}{-\KB
    \kappa_c(z) \SPP{H}{ \MM{\beta_c(z)}{Q_c}\rho} b_c(z)}.
\end{aligned}
\end{equation}
Moreover, the coupling of the linear quantum system for $\rho$ with
the macroscopic system for $z$ is given in a very particular manner
reflecting the DBC, as the vectors $b_c(z)$ occur twice, namely
(i) in the definition of $\beta_c(z) = \langle \rmD
S(z),b_c(z)\rangle/\KB$ in \eqref{eq:bbK-beta.c} and (ii) in the equation
for $z$, i.e.\ the second component of \eqref{eq:Coupl.rho.z}.
The fact that this equation is obtained from the GENERIC system
$(\RRR\ti Z,\calE,\calS,\bbJ,\bbK)$ implies energy
conservation and entropy production along solutions
$q(t)=(\rho(t),z(t))$, namely $ \frac\rmd{\rmd t} \calE(q(t)) \equiv 0$ and 
\begin{align*}
\frac\rmd{\rmd
  t} \calS(q(t)) &= 2\calP^*(q(t),\rmD\calS(t)) \\
&= 2\langle \rmD
S,\bbK_\ma \rmD S\rangle_Z + 2 \KB^2 
\sum_{c\in C} \SPP{\log \rho {+}\wh\beta_c H}{\KK{\wh\beta_c}{Q_c}(\log \rho
  {+}\wh\beta_c H)}  \geq 0. 
\end{align*} 
\begin{comment}
For these relations to be true it is essential to choose $\wh\bbK_c$
and hence $\KK{\beta}{Q} $, in a very particular way. Moreover, it is
crucial to normalize $b_c(z)$ and $\wh\beta_c(z)$ in such a way that
\[
\langle \rmD E(z),b_c(z)\rangle_Z = 1 \quad \text{and} \quad
\langle \rmD S(z),b_c(z)\rangle_Z =\KB \wh\beta_c(z),  
\]
see \eqref{eq:b.c} and \eqref{eq:bbK-beta.c}. 
\end{comment}

\section{Examples and applications}
\label{se:ExaAppl}

We discuss the above construction and give a few examples and
applications to highlight the concept of the operators
$\DDD^\alpha_\rho$ and the relevance of the corresponding Lindblad
operators. From a modeling point of view, we emphasize that using
the gradient structure with respect to the relative entropy, it is easy
to construct thermodynamically consistent coupled system including
macroscopic variables $z\in Z$ and a quantum state $\rho$, see e.g.\
the quantum-dot model discussed in Section \ref{su:QD}. In particular,
we can interprete the dissipative mechanisms in the macroscopic
system and in the quantum system as given building blocks that have to
be combined in a suitable way to obtain a thermodynamically correct
system. This can either be a damped Hamiltonian system at constant
temperature or a GENERIC (General Equation for Non-Equilibrium
Reversible Irreversible Coupling) system where the total energy is
preserved while the physical entropy increases. 

\subsection{The isothermal, damped quantum system}
\label{su:DamQS}

For a general DBC Lindbladian $\LLL$ with respect to $\RHObeta=\frac1Z
\ee^{-\beta H}$ Proposition \ref{pr:BuildBlock} shows that it can be
written as a sum of operators $\MM{\beta}{Q}$ as defined in
Proposition \ref{pr:BuildBlock}(b). Indeed, we have
\begin{equation}
  \label{eq:DampHS}
  \dot \rho= \ii [\rho, H] + \LLL \rho = \ii[\rho,H] + \sum_{c\in C}
  \MM{\beta}{Q_c} \rho, 
\end{equation}
where the coupling operator $Q_c$ satisfy
$(\omega_c,Q_c) \in \mathfrak E(H)$. 

We are now able to state that all dissipative quantum generators
satisfying the DBC can be written as a damped Hamiltonian system
$(\RRR, \calF, \bbJ, \bbK)$, namely 
\[
\dot \rho = \ii [\rho, H] + \LLL \rho = \big( \bbJ_\qs(\rho) -
\bbK(\rho)\big) \rmD \calF(\rho),
\]
where we can use the following choices 
\begin{align*} 
&\calF(\rho) = \trace\big(\beta H + \rho \log \rho\big) ,\quad  
\bbJ_\qs(\rho)= \ii\big[ \rho,\Box\big], \quad \text{and} \quad
\bbK(\rho) = \sum_{c \in C} \KK{\beta}{Q_c}(\rho) . 
\end{align*}
For this we simply use that $(\omega,Q)\in \mathfrak E(H)$ implies 
 $\KK{\beta}{Q} (\log \rho{+}\beta H) = \MM{\beta}{Q}
\rho$, see \eqref{eq:bbK-QNeu}.

\subsection{Coupling to simple heat baths}
\label{su:SimpleBath}

We consider a quantum system coupled to a finite number of
finite-energy heat baths indexed by $m=1,..,M$. We set $Z= \R^M$ with
elements $z=\bftheta= (\theta_m)_{m}$, where $\theta_m>0$ denotes the
absolute temperature of the $m$th heat bath. Each heat bath is
coupled to the quantum state $\rho$ such that energy may flow from one
heat bath into the others via the quatum system. 

We assume each heat bath
to have a constant specific heat $c_m$. 
Hence, we let
\[
E(\bftheta )= \sum_{m=1}^M c_m \theta_m \quad \text{and} \quad
S(\bftheta) = \sum_{m=1}^M c_m \log \theta_m. 
\]
For the coupling vectors $b_m$ we choose 
\[
b_m(\bftheta) = \frac1{c_m} \bfe^{(m)}\in \R^M, \quad \text{where
}\bfe^{(m)} =(0,..,0,1,0,..,0)^\top
\]
is the $m$th unit vector. Clearly we find 
\[
b_m(\bftheta)\cdot \rmD E(\bftheta)\equiv 1 \ \ \text{ and } \ \  
\wh\beta_m(\bftheta)= \frac1{\KB} b_m(\bftheta) \cdot \rmD
S(\bftheta)= \frac{1}{\KB \theta_m}, 
\]
which is the usual inverse temperature of the $m$th heat bath.  

Assuming $\bbJ_\ma\equiv 0$ and $\Gamma=0$, we can choose a symmetric
and positive semi-definite matrix
$K_\ma\in \R^{M\ti M}$ such that $K_\ma \ol\bfc=0$, where $\ol\bfc$ is
the constant vector $\rmD E(\bftheta)=(1/c_m)_{m=1,..,M}$. The
construction in Section \ref{su:DissEvol} provides a
GENERIC system for $q=(\rho,\bftheta)$ in the following form:
\begin{equation}
  \label{eq:QS.HeatB}
 \begin{aligned}  \dot\rho&= \ii[\rho, H] + \sum_{m=1}^M \MM{1/(\KB \theta_m)}{Q_m}
  \rho ,\\ \dot \bftheta &= K_\ma \rmD S(\bftheta) + \sum_{m=1}^M
  \frac1{c_m} \SPP{H}{\MM{1/(\KB \theta_m)}{Q_m} \rho}\,\bfe^{(m)}. 
\end{aligned}
\end{equation}
It is now easy to see that we may take the heat capacities very large,
such that the temperatures $\theta_m$ do not change any more, or at
least not on the time scale where the typical changes of $\rho$ occur. 
Thus, it is possible to investigate in a natural way non-equilibrium
steady states, where energy is exchanged between heat baths with
different temperatures. 

\subsection{An isothermally coupled system}
\label{su:IsoQS.z}

Here we return to an isothermally system where the quantum
state $\rho$ 
is coupled to a macroscopic variable $z$ in such a way that we obtain a
damped Hamiltonian system. In contrast to the simple model in Section
\ref{su:DamQS} we now allow the effective inverse temperatures
$\wh\beta_c$ to depend on the state $z$. Here we give a general
form of such systems and in Section \ref{su:QD} we provide a model for
a quantum dot interaction with charge-carrier densities taking the
roles of $z$. 

For the definition
of the free energy $\calF$ we choose the fixed equilibrium
temperature $\theta_*$ and obtain
\[
\calF(\rho,z) = \frac1{\KB \theta_*} \calE(\rho,z) - \frac1{\KB}
\calS(\rho,z) = \trace\big(\rho \log \rho + \frac1{\KB\theta_*}
 \rho H\big) + F(z). 
\] 
We may take $\bbJ$ as in \eqref{eq:bbJ.qs.ma} but need to choose a
special $\bbK$ to obtain equations linear in $\rho$, namely
\[
\bbK(\rho,z)= \block000{\bbK_\ma(z)} + \sum_{c\in C}
\block{\wt\bbK_c }{\langle\Box,a_c\rangle_Z\,\wt\bbK_c H}
{\SPP{H}{\wt\bbK_c\Box}a_c}{ \SPP{H}{\wt\bbK_c H}
  a_c{\oti}a_c}
\]
with coupling vectors $a_c(z)\in Z$ and Onsager operators
$\wt\bbK_c(\rho,z)=\kappa_c(z) \KK{\wt\beta_c(z)}{Q_c}(\rho)$, where
\[
\kappa_c(z)\geq 0 \quad \text{and} \quad \wt\beta_c(z)= 
\frac1{\KB\theta_*}  + \langle \rmD
F(z),a_c(z)\rangle_Z>0. 
\]
Hence, the equations $\dot q=\big( \bbJ(q){-}\bbK(q)\big) \rmD\calF(q)$
take the form 
\begin{align*}
\dot\rho&= \ii \big[\rho, H{-}\Gamma^* \rmD F(z)\big] \hspace*{6.2em}+
\sum_{c\in C} 
\kappa_c(z) \MM{\wt\beta_c(z)}{Q_c} \rho ,\\
\dot z&= \bbJ_\ma \rmD F(z) - \ii \Gamma\big[\rho, H{-}\Gamma^* \rmD
F(z)\big] +\sum_{c\in C} \kappa_c(z)\SPP{H}{\MM{\wt\beta_c(z)}{Q_c}
  \rho} a_c(z).    
\end{align*}
Again $F(Z)$, $a_c(z)$, and $\wt\beta_c(z)$ are 
intrinsically linked to each other in order to generate this
evolutionary equation from 
the damped Hamiltonian system $(\RRR{\ti}Z,\calF,\bbJ,\bbK)$.

\subsection{A quantum dot interacting with charge
  carriers}
\label{su:QD}

In \cite{RGGJ10EPPS} a four-level model with two levels in the
conduction and two levels in the valence band is described in
detail. However, the charge carriers in the wetting layer are assumed
to be given, i.e.\ kept constant during the experiment. In
\cite{LuMaSc09RDEH}, the turn-on behavior of single quantum-dot lasers
was analyzed based on a rate equation for the photon density
$n_\text{ph}$, the densities $w_\rme$ and $w_\rmh$
for electrons and holes in the wetting layer, and the charge carriers
$n_\rme$ and $n_\rmh$ in the quantum dot. We will show in
\cite{MiMiRo17?TCCQ} that the approach developed in Section
\ref{su:IsoQS.z} allows us to combine theses two approaches to derive
a thermodynamically consistent model coupling the charge-carrier
relaxation with the state of the quantum dot.

Here we look at a very simplistic model, where the quantum system has
only two states with energies $\eps_2>\eps_1$. Moreover, in the
wetting layer we have free charge
carriers with density $c_\rmf$ that can be captured by the quantum
dot leading to a density $c_\rmb$ of bound charge carriers. This charge
capture excites 
the quantum state from $|1\rangle$ to $|2\rangle$: 
\[
X_\text{free} + |1\rangle \quad \rightleftharpoons \quad
X_\text{bound} + |2\rangle.
\]   
This is one example of several capture-escape or scattering processes
that need to be modeled for a complete description of single
quantum-dot lasers, see \cite{RGGJ10EPPS}.
 
With $H=\diag(\eps_1,\eps_2)\in \Herm{\C^2}$, a fixed $\beta_*>0$,
and the density vector $z=\bfc=
(c_\rmf,c_\rmb)\in {]0,\infty[}^2$, we have the relative entropy   
\[
\calF(\rho, z)= \trace(\rho \log \rho + \beta_* \rho H) + c_\rmf \big(
\log(c_\rmf/w_\rmf)-1\big) + c_\rmb \big(
\log(c_\rmb/w_\rmb)-1\big), 
\]
where $w_\rmf,w_\rmb>0$ are suitable equilibrium densities. With 
\[
\bfa=\frac1{\eps_2{-}\eps_1} \binom{-1}{1} \ \text{ and } \ 
Q= |1\rangle \;\! \langle 2| = \bma{cc} 0&1\\ 0&0\ema 
\]
we define the Onsager operator as in Section \ref{su:IsoQS.z}: 
\begin{align*}
&\bbK(\rho,\bfc)= \kappa(\bfc) \bma{cc} 
\calK_Q^{\wt\beta(\bfc)}(\rho) & 
\langle \bfa,\Box\rangle_Z\: \calK_Q^{\wt\beta(\bfc)}(\rho)H  \\
\SPP{H}{\calK_Q^{\wt\beta(\bfc)}(\rho)\Box}\bfa &
\SPP{H }{\calK_Q^{\wt\beta(\bfc)}(\rho) H} \bfa\oti\bfa  \ema,
 \\ 
&\text{where } 
\kappa(\bfc)=\wh\kappa\big(\frac{c_\rmf c_\rmb}{w_\rmf w_\rmb})^{1/2}
\text{ and } 
\wt\beta(\bfc)= \beta_* - 
 \frac1{\eps_2{-}\eps_1} \big( \log(c_\rmf/w_\rmf) - \log(c_\rmb/w_\rmb)\big).
\end{align*} 
Here $\wt\beta (\bfc)$ and $\bfa$ are chosen in a very particular way
such that
\begin{align*}
&-\bbK(\rho,\bfc)\rmD \calF(\rho,\bfc)= -\bbK(\rho,\bfc)\binom{\log
  \rho+\beta_*H}{ (\log c_j/a_j)_{j\in \{\rmf,\rmb\}}}\\
&=-\kappa(\bfc)
\binom{\calK_Q^{\wt\beta(\bfc)}(\rho)\big( \log \rho +
  \wt\beta(\bfc)H\big)}{ \SPP{H}{\calK_Q^{\wt\beta(\bfc)}(\rho)\big( \log \rho +
  \wt\beta(\bfc)H\big)} \bfa}=
\kappa(\bfc)\binom{\calM^{\wt\beta(\bfc)}_Q \rho}{
  \SPP{H}{\calM^{\wt\beta(\bfc)}_Q \rho} \bfa } ,
\end{align*}
where we used the miracle identity as stated in  Corollary
\ref{co:DissBlockNeu}.

Using the explicit form of $\calM^b_Q$, $H$, $Q$, and
$\wt\beta(\bfc)$ we
obtain (with $\omega=\eps_2{-}\eps_1$) 
\begin{align*}
&\calM_Q^{\wt\beta(\bfc)}\rho =  \ee^{\beta_*\omega/2} 
\big(\frac{c_\rmb}{a_\rmb}\frac{a_\rmf}{c_\rmf} \big)^{1/2} \calN_Q \rho
+   
\ee^{-\beta_*\omega/2} \big(\frac{c_\rmf}{w_\rmf}\frac{w_\rmb}{c_\rmb} \big)^{1/2}
\calN_{Q^*} \rho\Big) ,\\
& \text{where } \calN_QA:=[Q,AQ^*]+[QA,Q^*]. 
\end{align*}
By our specific choice of $\kappa(\bfc)$ all the square roots cancel,
and we see that the final coupled
system takes the form (where $\wt\kappa= \wh\kappa
\ee^{\beta_*(\eps_1+\eps_2)/2}$) 
\begin{subequations}
  \label{eq:CaptEsc}
\begin{align}   \label{eq:CaptEsc.a}
\dot \rho\ \ &=\ \ii [\rho, H] + \wt\kappa\Big(
\frac{c_\rmb}{w_\rmb }\ee^{-\beta_*\eps_1}  \calN_Q \rho +
\frac{c_\rmf}{  w_\rmf }  \ee^{-\beta_*\eps_2} \calN_{Q^*}
\rho\Big), \\   \label{eq:CaptEsc.b}
\binom{\dot c_\rmf}{\dot c_\rmb}&=2\wt\kappa\,
\Big(\frac{c_\rmf}{  w_\rmf }  \ee^{-\beta_*\eps_2} \rho_{11}  
 -\frac{c_\rmb}{w_\rmb }\ee^{-\beta_*\eps_1} \rho_{22} \Big) \,\binom{-1}1 ,  
\end{align}
\end{subequations}
where we used $\SPP{H}{\calN_Q\rho}=-2\omega \rho_{22}$ and 
$\SPP{H}{\calN_{Q^*}\rho}=2\omega\rho_{11}$. Thus, we obtain a coupled
system with quadratic nonlinearities. 

Since $\calN_Q$ and $\calN_{Q^*}$ are Lindblad operators, we see that
\eqref{eq:CaptEsc.a} is a Lindblad equation that depends on the
macroscopic variables $\bfc=(c_\rmf,c_\rmb)$. The charge carrier
densities $c_\rmf$ and $c_\rmb$ are driven by the quantum state $\rho$
in such a way that $c_\rmf,c_\rmb>0$ is always preserved. The
important message of our construction is that this system is a damped
Hamiltonian system, and thus the relative entropy $\calF$ is a
Liapunov function.

\subsection{Maxwell-Bloch model}
\label{su:MaxwellB}

In this section we discuss a nonlinear PDE model, where the Maxwell
equations on $\R^3$ as the macroscopic Hamiltonian system are coupled
to a spatially localized domain $\Omega\subset \R^3$, where at each
macroscopic point $x\in \Omega$ there is a quantum system that is
coupled to the electromagnetic fields $\bfE$ and $\bfH$, but not to
the neighboring quantum systems. Such models are called Maxwell-Bloch
systems (cf.\ e.g.\ \cite{JoMeRa00TNGO,Duma05GEMB}) and are commonly
used to model the interaction of light and matter. We refer to
\cite{Miel15TCQM} where even the coupling to the drift-diffusion
system for electrons and holes is described in terms of the GENERIC
framework.

The macroscopic system is described by $z=(\bfE,\bfH)\in
Z:=\rmL^2 (\R^3;\R^3)^2$ denoting the electric and the magnetic
fields. The optically active material is described by a bounded Lipschitz
domain $\Omega \subset \R^3$, where the quantum state is described via
$\rho:\Omega \to \RRR_N\subset \Herm{\HH}$. The quantum
state determines the macroscopic polarization $\bfP:\Omega \to \R^3$ via a
polarization operator $\Gamma:L(\HH) \to \R^3$ in the form
$\bfP(x) = \Gamma \rho(x)$. The electric displacement field then is
given by $\bfD=\eps_0 \bfE+\bfP$,  and Maxwell equations take the form
\begin{equation}
  \label{eq:Maxwell}
  \eps_0\dot\bfE +\dot\bfP = \curl \bfH , \quad \mu_0
  \dot\bfH=-  \curl  \bfE,  \quad  \div \big(\epsilon_0  \bfE
  {+} \bfP\big) = 0   , \quad \div\bfH=0.  
\end{equation}

The main difficulty is to model the coupling of the quantum systems
$\rho(t,x) \in \RRR_N$ in a consistent way.  We derive the system for
$q:=(\rho,\bfE,\bfH)$ in the form of a damped Hamiltonian system with
the free energy
\[
\calF(q)= \int_{\R^d} \frac12|\bfE|^2 + \frac12|\bfH|^2
\dd x + \int_\Omega \trace\big(\rho\log \rho+ \beta H_\rmB \rho\big)\dd x,
\]
where $H_\rmB$ is called the Bloch Hamiltonian and where we have set
the constants $\eps_0$ and $\mu_0$ to $1$ for notational simplicity.
Moreover, we define a Poisson structure $\bbJ$ and an Onsager operator
$\bbK$ in the form
\[
\bbJ(q) = \bma{@{}ccc@{}} \ii[\rho,\Box] &-\ii\big[\rho,
\Gamma^*\Box\big]  &0 \\ \!\!- \ii\Gamma\big[\rho,\Box]
&\ii\Gamma\big[\rho,\Gamma^*\Box] &\curl \\ 0 &-\curl &0 \ema \ 
\text{and} \  \bbK(q)=\bma{@{}ccc@{}}
\bbK_\qs(q)&-\bbK_\qs(q)\Gamma^*&0\\
\!\!-\Gamma\bbK_\qs(q)&\Gamma\bbK_\qs(q)\Gamma^*&0\\ 0 & 0&0
\ema. 
\]
The importance of this choice for $\bbJ$ and $\bbK$ is that the first
row of each of the block operators is replicated in the second row, but
premultiplied with $-\Gamma$. This is needed to obtain the correct
first equation in the Maxwell system \eqref{eq:Maxwell}, namely
\[
\dot \bfE = \curl \bfH - \dot\bfP = \curl \bfH - \Gamma\dot\rho.
\]
By the symmetries $\bbJ=-\bbJ^*$ and $\bbK=\bbK^*$ this also implies
that the first columns are replicated in the second columns, but now
post-multiplied by $-\Gamma^*$. 

We emphasize that the optically active material is restricted to the
bounded domain $\Omega$, while the fields $\bfE$ and $\bfH$ are
defined on all of $\R^3$. This is hidden in our operator $\Gamma$,
which should be understood as an operator that maps $\rho$ defined on
$\Omega$ to vector fields $\bfP=\Gamma \rho$ that are defined not only
in $\Omega$, but that are extended by $0$ outside of
$\Omega$. Similarly, the adjoint operator $\Gamma^*$ acts on $\bfE$ by
restricting it first to $\Omega$ and then applying the adjoint map of
$\Gamma$. 

Thus, using $[\rho,\log \rho] =0$ the induced equation for $\rho$
reads 
\[
\dot \rho \ =  \ \ii \big[\,\rho,\,\beta H_\rmB - \Gamma^* \bfE\,\big]\ - \ 
\bbK_\qs(\rho,\bfE,\bfH)\big( \log\rho + \beta H_\rmB - \Gamma^* \bfE\big). 
\]
Having derived this structure it is now our task to choose $\bbK_\qs$
in such a way that the equation for $\rho$ is a Lindblad equation that
is linear in $\rho$ with coefficients depending on $\bfE$ and $\bfH$.
For this, we assume that there exists a family $(Q_c)_{c\in C}$ of
couplings such that
\begin{equation}
  \label{eq:Maxw.Qc}
  (\omega_c,Q_c) \in \mathfrak E(H_\rmB) \quad \text{ and } \quad 
  [Q_c, \Gamma^*\bfE] = \omega_c\, (\bfg_c{\cdot} \bfE)\:Q_c,
\end{equation}
for some vectors $\bfg_c\in \R^3$. Note that this is a non-trivial
condition, since it implies that for all $\bfE\in \R^3$ the matrix
$\Gamma^*\bfE\in \Herm{\HH}$ commutes with all matrices
$Q$ that commute with $H$. 

Based on the assumption \eqref{eq:Maxw.Qc} we are now able to choose
$\bbK_\qs$ in the form  
\[
\bbK_\qs(\rho,\bfE,\bfH)= \sum_{c\in C} \kappa_c(\bfE,\bfH)\,
\KK{\wt\beta_c(\bfE)}{ Q_c}(\rho) , \quad \text{where }
\wt\beta_c(\bfE):= \beta - \bfg_c{\cdot}\bfE. 
\]
Since by construction we have $[Q_c,\beta H_\rmB{-}\Gamma^*\bfE]
=\omega_c\wt\beta_c(\bfE) Q_c$  we can apply Proposition
\ref{co:DissBlockNeu} (cf.\ \eqref{eq:bbK-QNeu})  and obtain the
equation 
\[
\dot \rho = \ii \big[\,\rho,\,\beta H_\rmB - \Gamma^* \bfE\,\big]+
\sum_{c\in C} \kappa_c(\bfE,\bfH)\, \MM{\wt\beta_c(\bfE)}{Q_c} \rho,  
\]
which is linear in $\rho$ for fixed $\bfE$ and $\bfH$.  Note that for
the above constructions it is not necessary to have $\wt\beta_c \geq
0$, since $\KK{\beta}{Q}$ and $\MM{\beta}{Q}$ are well defined for all
$\beta\in \R$.  Altogether we conclude that under the rather
restrictive assumption \eqref{eq:Maxw.Qc} the damped Hamiltonian
system $(\rmL^2(\Omega;\RRR_N){\ti} Z, \calF,\bbJ,\bbK)$ generates the
following coupled Maxwell-Bloch system
\begin{align*}
\dot\rho&= \ii\big[\rho, H_\rmB {-} \Gamma^*\bfE\big] + \sum_{c\in C}
\kappa_c(q) \MM{\beta}{Q_c} \rho, \\
\dot\bfE&= \curl \bfH - \Gamma \Big(\ii\big[\rho,
H_\rmB{-}\Gamma^*\bfE \big] + \sum_{c\in C}
\kappa_c(q) \MM{\beta}{Q_c} \rho \Big), \\
\dot\bfH &= - \curl \bfE.    
\end{align*} 
It is easy to see that condition \eqref{eq:Maxw.Qc} holds, if
$\Gamma$ has the form $\Gamma \bfE=\sum_{n=1}^N
(\bfb_n{\cdot} \bfE)\: h_n\oti \ol h_n$ for arbitrary polarization
vectors $\bfb_n\in \R^3$, where $H=\sum_1^N \eps_n
h_n\oti \ol h_n$. It remains open to model the case of interaction
between different energy levels, for which $\Gamma\bfE$ needs to
contain terms like $h_m \oti \ol h_n$ with $m\neq n$.

\footnotesize
%\bibliographystyle{my_alpha}
%\bibliography{alex_pub,bib_alex}
% files available at my webpage .../people/mielke/tex/  
\newcommand{\etalchar}[1]{$^{#1}$}
\def\cprime{$'$}

\end{document}